\begin{document}

\title{\Large Pseudo-Labeling for Kernel Ridge Regression under Covariate Shift}

\author{Kaizheng Wang\thanks{Department of IEOR and Data Science Institute, Columbia University. Email: \texttt{kaizheng.wang@columbia.edu}.}
}

\date{This version: July 2025}

\maketitle

\begin{abstract}
We develop and analyze a principled approach to kernel ridge regression under covariate shift. The goal is to learn a regression function with small mean squared error over a target distribution, based on unlabeled data from there and labeled data that may have a different feature distribution. We propose to split the labeled data into two subsets, and conduct kernel ridge regression on them separately to obtain a collection of candidate models and an imputation model. We use the latter to fill the missing labels and then select the best candidate accordingly. Our non-asymptotic excess risk bounds demonstrate that our estimator adapts effectively to both the structure of the target distribution and the covariate shift. This adaptation is quantified through a notion of effective sample size that reflects the value of labeled source data for the target regression task. Our estimator achieves the minimax optimal error rate up to a polylogarithmic factor, and we find that using pseudo-labels for model selection does not significantly hinder performance.
\end{abstract}

\noindent{\bf Keywords:} Covariate shift, kernel ridge regression, imputation, pseudo-labeling, transfer learning.

\section{Introduction}\label{sec-intro}

Covariate shift is a phenomenon that occurs when the feature distribution of the test data differs from that of the training data. It can cause performance degradation of the model, as the training samples may have poor coverage of challenging cases to be encountered during deployment. 
Such issue arises from a variety of scientific and engineering applications \citep{Hec79,Zad04}.
For instance, a common task in personalized medicine is to predict the treatment effect of a medicine given a patient's covariates. However, the labeled data are oftentimes collected from particular clinical trials or observational studies, which may not be representative of the population of interest. 
Direct uses of traditional supervised learning techniques, such as empirical risk minimization and cross-validation, could yield sub-optimal results. 
Indeed, their theories are mostly built upon the homogeneity assumption that the training and test data share the same distribution \citep{Vap99}.

Acquisition of labeled data from the target population can be costly or even infeasible. 
In the above example of personalized medicine, it requires conducting new clinical trials.
On the other hand, unlabeled data are cheaply available. This leads to the following fundamental question that motivates our study: 
{ \setlist{rightmargin=\leftmargin} \begin{itemize} \item[]\centering\emph{How to train a predictive model using only unlabeled data from the distribution of interest and labeled data from a relevant distribution?}
\end{itemize} }
\noindent It is a major topic in domain adaptation and transfer learning \citep{PYa10,SKa12}, where the distribution of interest is called the \emph{target} and the other one is called the \emph{source}.
The missing labels in the target data create a significant obstacle for model assessment and selection, calling for principled approaches without ad hoc tuning.

In this paper, we provide a solution to the above problem in the context of kernel ridge regression (KRR). We work under the standard assumption of covariate shift that the source and the target share the same conditional label distribution. In addition, the conditional expectation of the label given covariates is described by a function in a reproducing kernel Hilbert space (RKHS). The goal is to learn the regression function with small mean squared error over the target distribution. While KRR is a natural choice, its performance depends on a penalty parameter $\lambda$ that is usually selected by hold-out validation or cross-validation. Neither of them is directly applicable without target labels. Ignoring the target data, however, may lead to suboptimal results if the source samples do not adequately represent the target distribution. To integrate both datasets, we propose to fill the missing target labels using an imputation model trained on part of the source data. Meanwhile, we train a collection of candidate models on the rest of the source data, using KRR with different penalty parameters. Finally, we select the best candidate based on pseudo-labels.

\paragraph*{An illustrative example}

We use a numerical simulation to demonstrate both the challenge and our proposed solution.\footnote{The code for reproducing the results is available at  \url{https://github.com/kw2934/KRR}.} The collection of absolutely continuous functions on $[0, 1]$ with square integrable derivatives forms an RKHS, which we will discuss further in \Cref{example-Sobolev}. Define $f^{\star} (x) =  \cos ( 2 \pi x ) - 1$. Let $\nu_0 = \cU [0, 1/2]$ and $\nu_1 = \cU [1/2, 1]$ be two uniform distributions. Based on them, we construct the source distribution $\source = \frac{5}{6} \nu_0 + \frac{1}{6} \nu_1$ and the target distribution $\target = \frac{1}{6} \nu_0 + \frac{5}{6} \nu_1$ that concentrate on $[0, 1/2]$ and $[1/2, 1]$, respectively. We generate $250$ i.i.d.~labeled samples $\{ (x_i, y_i) \}_{i=1}^n$ with $x_i \sim \source$ and $y_i | x_i \sim N (  f^{\star} (  x_i )  , 1 )$. Then, we run KRR with the penalty parameter $\lambda \in \{ 0.0008, 0.0032, 0.0128 \}$ to obtain three candidate models. Their excess risks on $\source$ and $\target$ are estimated using 10000 new samples.  We report the means and standard errors in \Cref{table-candidates}. The source and target favor the medium and small penalties, respectively.

\begin{table}[ht]
	\centering
	\begin{tabular}{|c|c|c|c|}
		\hline
		Test distribution & Small Penalty & Medium Penalty & Large Penalty \\
		\hline
		$\source$ & 0.021 (2.4e-4) & \textbf{0.013 (1.5e-4)} & 0.046 (5.9e-4) \\
		\hline
		$\target$ & \textbf{0.017 (2.0e-4)} & 0.024 (2.5e-4) & 0.100 (7.5e-4) \\
		\hline
	\end{tabular}
	\caption{Excess risks of three candidates on $\source$ and $\target$}\label{table-candidates}
\end{table}

The results are visualized in the left panel of \Cref{fig-candidates}. 
On the interval $[0, 1/2]$ with abundant data, the red and cyan curves show signs of undersmoothing and oversmoothing, respectively. In contrast, the blue curve provides a good fit. However, on the interval $[1/2, 1]$ with sparse data, the blue curve becomes oversmoothed while the red curve fits better. Such observation is consistent with the quantitative summary in \Cref{table-candidates}.

\begin{figure}[t]
	\centering
	\includegraphics[width=0.49\linewidth]{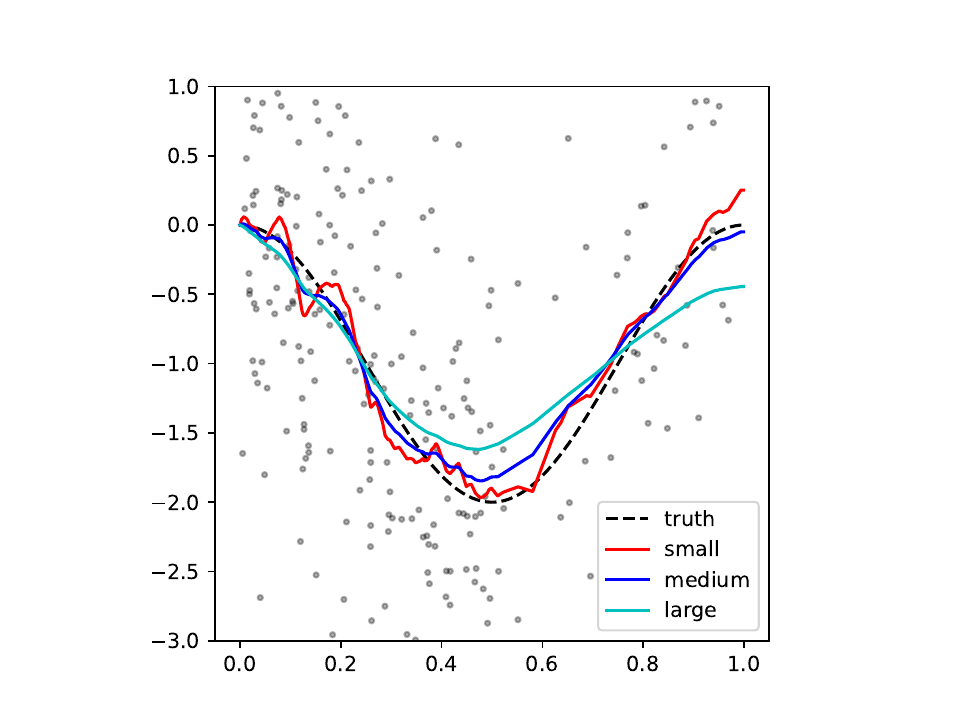}
	\includegraphics[width=0.49\linewidth]{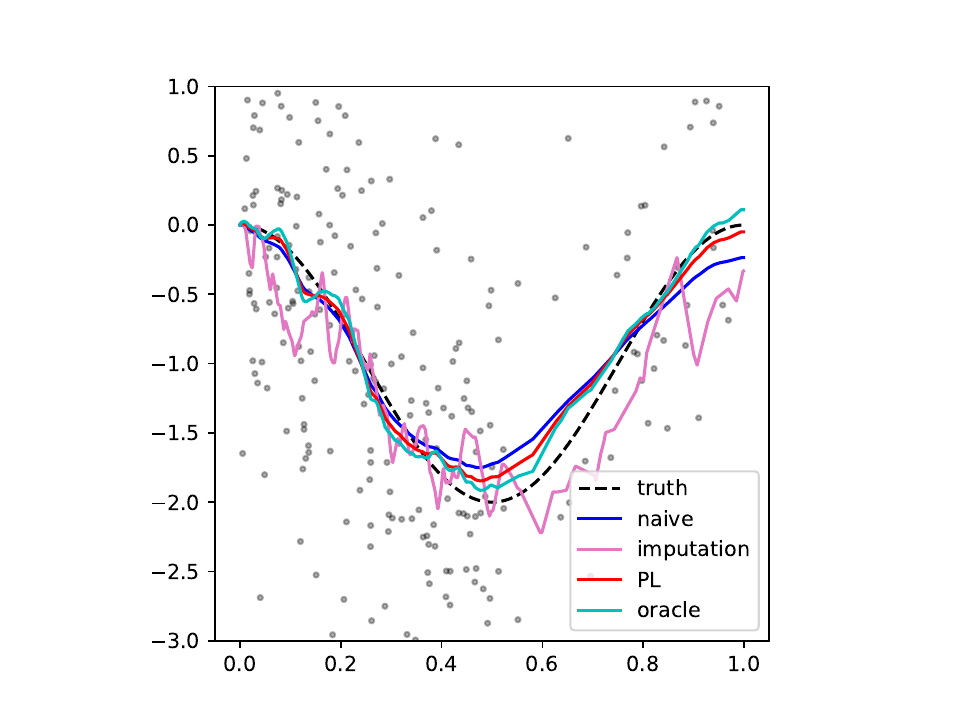}
	\caption{Covariate shift and its adaptation. In both panels, the black dashed curves and the gray dots show the true response function $f^{\star}$ and the source data. Left panel: comparison of three candidate models, where the red, blue, and cyan curves correspond to small, medium and large penalties. Right panel: comparison of three model selection approaches, where the red, blue, and cyan curves correspond to models chosen by the pseudo-labeling, na\"{i}ve, and oracle methods. In addition, the pink curve visualizes the imputation model.}
	\label{fig-candidates}
\end{figure}

To illustrate our proposed method for covariate shift adaptation, denote by $\cD_1$ the aforementioned 250 samples, $\cD_2$ another collection of 250 labeled source samples, and $\cD_0$ a collection of 500 unlabeled target samples. 
We run KRR on $\cD_1$ with $\lambda$ in a geometric sequence ranging from 0.0002 to 1.6384, with a common ratio of 2. 
Meanwhile, we run KRR on $\cD_2$ with $\lambda = 0.0002$ to get an imputation model, use it to generate pseudo-labels for $\cD_0$, and choose the best candidate model based on them. 
We compare our selection procedure against a na\"{i}ve approach that uses $\cD_2$ as the validation set, and an oracle approach that uses noiseless responses of the target data $\cD_0$ for validation. The results are shown in the right panel of \Cref{fig-candidates}. On the interval $[1/2, 1]$ where $\target$ has most of its mass, selection by pseudo-labeling (red) outperforms the na\"{i}ve method (blue), and it is comparable to the oracle method (cyan). It is important to note that the undersmoothed imputation model (pink) is not suitable for direct use but serves as a useful benchmark for model selection.

\paragraph*{Main contributions} We develop a pseudo-labeling approach to KRR with labeled source data and unlabeled target data. 
Our theory confirms its adaptivity to the unknown covariate shift in common scenarios. 

To characterize the performance of KRR on the target distribution, we introduce a notion of \emph{effective sample size} $\effectivesamplesize$ and show that
\begin{theorem}[Informal]\label{thm-intro}
	$n$ labeled source samples $\approx$ $\effectivesamplesize$ labeled target samples.
\end{theorem}
\noindent
This quantity gives a sharp measure of the information transfer, even if the covariate distribution of the target is singular with respect to that of the source.

To analyze our model selection method, we derive a bias-variance decomposition of the imputation model's impact. The result holds for general function classes and is of independent interest. It reveals that the variance has a much weaker influence compared to the bias. To the best of our knowledge, this is the first general theory of model selection with pseudo-labels.
For KRR, the finding suggests using a small penalty parameter for training the imputation model. Although such model often has sub-optimal mean squared error, it produces pseudo-labels that help find a near-optimal model among a collection of candidates.

\paragraph*{Related work}

Here we give a non-exhaustive review of related work in the literature.
As we mentioned above, our final estimator is selected by hold-out validation with pseudo-labels.
Simple as it is, hold-out validation is one of the go-to methods for model selection in supervised learning. When there is no covariate shift, \cite{BMa06} proved that the selected model adapts to the unknown noise condition in many classification and regression problems. \cite{CYa10} showed that KRR tuned by hold-out validation achieves the minimax optimal error rate.
Our study further reveals that faithful model selection is possible even if the validation labels are synthetic and the imputation model is trained on a different distribution. 

Our method is based on regression imputation, a powerful tool for missing data analysis. It has been widely applied to estimating linear functionals such as the average treatment effect \citep{Che94,HMa19,DJW20,HWa21,DWW24,MWB22,WKi23,MDW23}. In particular, \cite{WKi23} considered estimating the mean response (label) over the target data in our setting, and proposed to use an imputation model trained by KRR with a small penalty. The goals in the aforementioned works are different from ours, as we aim to estimate the whole function and use the imputed data for model selection.

In the machine learning literature, filling the missing labels by a trained model is referred to as pseudo-labeling \citep{Lee13}. It has achieved great empirical success in reducing the labeling cost. Theoretical investigations focus on classification problems under mixture assumptions that the class-conditional covariate distributions have nice properties (e.g., log-concavity) and are well-separated from each other, which are not suitable for regression problems with continuous responses.
A line of works in this direction \citep{KML20,CGL21,LWL21} consider classification under covariate shift, with labeled source data and unlabeled target data. Their algorithms alternate between predicting the missing labels and using the completed data to train a classifier, similar to the Expectation-Maximization algorithm for latent variable models.

In the presence of covariate shift, the empirical risk defined by the source data is generally a biased estimate of the target population risk. A natural way of bias correction is to reweight the source data by the likelihood ratio between the target covariate distribution and the source one. 
The reweighted empirical risk then serves as the objective function. The weights may be truncated so as to reduce the variance \citep{Shi00,CMM10,SKa12,MPW22}. 
Such methods need absolute continuity of the target distribution with respect to the source, as well as knowledge about the likelihood ratio. The former is a strong assumption, and the latter is notoriously hard to estimate in high dimensions. Matching methods came as a remedy \citep{HGB06}. Yet, they require solving large optimization problems to obtain the importance weights. Moreover, theoretical understanding of the resulting model is largely lacking.

There is a recent surge of interest in the theory of transfer learning under distribution shift. Most works assume either the target data are labeled, or the target covariate distribution is known \citep{BBC10,HKp19,MFA20,YZW20,KMa21,CWe21,RCS21,TAP21,MSB22,SZa22,PMW22,WZB22}.
When there are only finitely many unlabeled target samples, one needs to extract information about the covariate shift and incorporate that into learning. \cite{LZL20} studied doubly-robust regression under covariate shift. \cite{LHL21} obtained results for ridge regression in a Bayesian setting. A recent work \cite{MPW22} is the most closely related to ours, and we will discuss it in detail in \Cref{sec-excess-risk-general}.

\paragraph*{Outline} The rest of the paper is organized as follows. \Cref{sec-setup} describes the problem of covariate shift and its challenges.
\Cref{sec-methodology} introduces the methodology. \Cref{sec-excess-risk-general} presents excess risk bounds for our estimator and discusses its adaptivity. 
\Cref{sec-oracle} explains the power of pseudo-labeling. 
\Cref{sec-proofs-main} provides the proofs of our main results.
Finally, \Cref{sec-discussion} concludes the paper and discusses possible future directions.

\paragraph*{Notation}
The constants $c_1, c_2, C_1, C_2, \cdots$ may differ from line to line. 
We use the symbol $[n]$ as a shorthand for $\{ 1, 2, \cdots, n \}$ and $| \cdot |$ to denote the absolute value of a real number or cardinality of a set. 
For nonnegative sequences $\{ a_n \}_{n=1}^{\infty}$ and $\{ b_n \}_{n=1}^{\infty}$, we write $a_n \lesssim b_n$ or $a_n = O(b_n)$ 
if there exists a positive constant $C$ such that $a_n \leq C b_n$. We use $\widetilde{O}$ 
to hide polylogarithmic factors. In addition, we write $a_n \asymp b_n$ if $a_n \lesssim b_n$ and $b_n \lesssim a_n$; $a_n = o(b_n)$ if $a_n = O(c_n b_n)$ for some $c_n \to 0$.
Notations with tildes (e.g.,~$\widetilde O$) hide logarithmic factors.
For a matrix $\bA$, we use $\| \bA \|_2 = \sup_{\|\bx\|_2 = 1 } \| \bA \bx \|_2$ to denote its spectral norm.
For a bounded linear operator $\bA$ between two Hilbert spaces $\HH_1$ and $\HH_2$, we define its operator norm $\| \bA \| = \sup_{\| \bu \|_{\HH_1} = 1} \| \bA \bu \|_{\HH_2}$. 
For a bounded, self-adjoint linear operator $\bA$ mapping a Hilbert space $\HH$ to itself, we write $\bA \succeq 0$ if $\langle \bu , \bA \bu \rangle \geq 0$ holds for all $\bu \in \HH$; in that case, we define $\| \bu \|_{\bA} = \sqrt{ \langle \bu , \bA \bu \rangle } $.
For any $\bu $ and $ \bv $ in a Hilbert space $\HH$, their tensor product $\bu \otimes \bv$ is a linear operator from $\HH$ to itself that maps any $\bw \in \HH$ to $\langle \bv , \bw \rangle \bu$. 
Define $\| X \|_{\psi_{\alpha}} = \sup_{p \geq 1} \{ p^{-1/{\alpha}} \EE^{1/p} |X|^p \}$ for a random variable $X$ and $\alpha \in \{ 1, 2 \}$. If $\bX$ is a random element in a separable Hilbert space $\HH$, then we let $\|\bX\|_{\psi_2} = \sup_{\| \bu \|_{\HH} = 1} \| \langle \bu , \bX \rangle \|_{\psi_2} $.

\section{Problem setup}\label{sec-setup}

\subsection{Linear model and covariate shift}

Let $\{ (\bx_i , y_i) \}_{i=1}^n$ and $\{ ( \bx_{0i} , y_{0i} ) \}_{i=1}^{n_0}$ be two datasets named the \emph{source} and the \emph{target}, respectively. Here $\bx_i, \bx_{0i} \in \cX$ are covariate vectors in some feature space $\cX$; $y_{i}, y_{0i} \in \RR$ are responses. 
However, the target responses $\{ y_{0i} \}_{i=1}^{n_0}$ are \emph{unobserved}. The goal is to learn a predictive model from $\{ (\bx_i , y_i) \}_{i=1}^n$ and $\{ \bx_{0i} \}_{i=1}^{n_0}$ that works well on the target data.

The task is impossible when the source and the target datasets are arbitrarily unrelated. To make the problem well-posed, we adopt a standard assumption that the source and the target datasets share a common regression model. 
In our basic setup, the feature space $\cX$ is a finite-dimensional Euclidean space $\RR^d$, the data are random samples from an unknown distribution, and the conditional mean response is a linear function of the covariates. Below is the formal statement.

\begin{assumption}[Common linear model and random designs]\label{assumption-linear-model}
The two datasets $\{ ( \bx_i , y_i ) \}_{i=1}^n$ and $\{ ( \bx_{0i} , y_{0i} )  \}_{i=1}^{n_0}$ are independent, each of which consists of i.i.d.~samples.
\begin{itemize}
\item The distributions of covariates $\bx_{i}$ and $\bx_{0i}$ are $\source$ and $\target$, respectively. Furthermore, $\bSigma = \EE  (\bx_i \bx_i^{\top})$ and $\bSigma_0 = \EE  (\bx_{0i} \bx_{0i}^{\top})$ exist.
\item There exists $\btheta^{\star} \in \RR^d$ such that $\EE ( y_i | \bx_i = \bx ) = \EE ( y_{0i} | \bx_{0i} = \bx ) =  \langle   \bx , \btheta^{\star} \rangle$, $\forall \bx \in \RR^d$. 
\item Let $\varepsilon_i = y_i - \langle \bx_i , \btheta^{\star} \rangle$ and $\varepsilon_{0i} = y_{0i} - \langle   \bx_{0i} , \btheta^{\star} \rangle$. Conditioned on $\{ \bx_i  \}_{i=1}^n$ and $\{ \bx_{0i} \}_{i=1}^{n_0}$, the errors $\{ \varepsilon_i \}_{i=1}^n$ and $\{ \varepsilon_{0i} \}_{i=1}^{n_0}$ have finite second moments. 
\end{itemize}
\end{assumption}

More generally, we will study linear model in a reproducing kernel Hilbert space (RKHS) with possibly infinite dimensions. 
The feature space $\cX$ can be any set. Suppose that we have a symmetric, positive semi-definite kernel $K:~ \cX \times \cX \to \RR$ satisfying the conditions below:
\begin{itemize}
	\item (Symmetry) For any $ \bz \in \cX$ and $\bw \in \cX$, $K(\bz , \bw) = K(\bw,\bz)$;
	\item (Positive semi-definiteness) For any $m \in \ZZ_+$ and $\{ \bz_i \}_{i=1}^m \subseteq \cX$, the $m\times m$ matrix $[ K(\bz_i, \bz_j) ]_{m\times m}$ is positive semi-definite.
\end{itemize}
According to the Moore-Aronszajn Theorem \citep{Aro50}, there exists a Hilbert space $\HH$ with inner product $\langle \cdot , \cdot \rangle$ and a mapping $\phi:~\cX\to \HH$ such that $K(\bz, \bw) = \langle \phi(\bz), \phi(\bw) \rangle$, $\forall \bz, \bw \in \cX$. 
The kernel $K$ and the associated space $\HH$ are called a \emph{reproducing kernel} and a \emph{reproducing kernel Hilbert space}, respectively.
Now, we upgrade Assumption \ref{assumption-linear-model} to the RKHS setting.

\begin{assumption}[Common linear model in RKHS and random designs]\label{assumption-linear-model-rkhs}
The datasets $\{ ( \bx_i , y_i ) \}_{i=1}^n$ and $\{ ( \bx_{0i} , y_{0i} )  \}_{i=1}^{n_0}$ are independent, each of which consists of i.i.d.~samples.
\begin{itemize}
\item The distributions of covariates $\bx_{i}$ and $\bx_{0i}$ are $\source$ and $\target$, respectively. Furthermore, $\bSigma = \EE [ \phi(\bx_i) \otimes \phi(\bx_i)  ]$ and $\bSigma_0 = \EE [ \phi(\bx_{0i}) \otimes \phi(\bx_{0i})  ]$ are trace class.
\item There exists $\btheta^{\star} \in \HH$ such that $\EE ( y_i | \bx_i = \bx ) = \EE (y_{0i} | \bx_{0i} = \bx ) = \langle  \phi(\bx) , \btheta^{\star} \rangle$, $\forall \bx \in \cX$. 
\item Let $\varepsilon_i = y_i - \langle \phi (\bx_i) , \btheta^{\star} \rangle$ and $\varepsilon_{0i} = y_{0i} - \langle   \phi ( \bx_{0i} ) , \btheta^{\star} \rangle$. 
Conditioned on $\{ \bx_i  \}_{i=1}^n$ and $\{ \bx_{0i} \}_{i=1}^{n_0}$, the errors $\{ \varepsilon_i \}_{i=1}^n$ and $\{ \varepsilon_{0i} \}_{i=1}^{n_0}$ have finite second moments.
\end{itemize}
\end{assumption}

Under Assumption \ref{assumption-linear-model-rkhs}, the conditional mean function of the response $f^{\star} (\cdot) =  \langle \phi(\cdot), \btheta^{\star} \rangle$ belongs to a function class 
\begin{align*}
\functionclass =  \{
f_{\btheta}  :~ \cX \to \RR
~ |~ 
f_{\btheta}  (\bx) = \langle \phi(\bx) , \btheta \rangle \text{ for some } \btheta \in \HH
 \}
\end{align*}
generated by the kernel $K$. Define $\| f_{\btheta} \|_{\functionclass} = \| \btheta \|_{\HH}$. It is easily seen that $\| \cdot \|_{\functionclass}$ is a norm and $\functionclass$ becomes a Hilbert space that is isomorphic to $\HH$. Below are a few common examples \citep{Wai19}. 

\begin{example}[Linear and affine kernels]\label{example-linear}
Let $\cX = \RR^d$. The linear kernel $K(\bz, \bw) = \bz^{\top} \bw$ gives the standard inner product. In that case, $\HH = \RR^d$, $\phi (\bx) = \bx$ is the identity map, and $\functionclass$ is the class of all linear functions (without intercepts). An extension is the affine kernel $K(\bz , \bw) = 1 + \bz^{\top} \bw$ that generates all affine functions (with intercepts).
\end{example}

\begin{example}[Polynomial kernels]\label{example-polynomial}
Let $\cX = \RR^d$. The homogeneous polynomial kernel of degree $m \geq 2$ is $K(\bz, \bw) = ( \bz^{\top} \bw )^{m}$. The corresponding $\HH$ is the Euclidean space of dimension ${ m + d - 1 \choose m }$, the coordinates of $\phi$ are degree-$m$ monomials, and $\functionclass$ is the class of all homogeneous polynomials of degree $m$. An extension is the inhomogeneous polynomial kernel $K(\bz, \bw) = (1 +  \bz^{\top} \bw )^{m}$ that generates all polynomials of degree $m$ or less.
\end{example}

\begin{example}[Laplace and Gaussian kernels]\label{example-Gaussian}
Let	$\cX = \RR^d$ and $\alpha > 0$. The Laplace and Gaussian kernels are $K(\bz, \bw) = e^{ - \alpha  \| \bz - \bw \|_2  }$ and $K(\bz, \bw) = e^{ - \alpha  \| \bz - \bw \|_2^{2} }$, respectively. In either case, $\functionclass$ is infinite-dimensional.
\end{example}

\begin{example}[First-order Sobolev kernel]\label{example-Sobolev}
Let	$\cX = [0, 1]$ and $K(z, w) =\min \{ z, w \}$. Then, $\functionclass$ is the first-order Sobolev space
	\[
	\functionclass = \bigg\{ f :~[0, 1] \to \RR ~ \bigg| ~ f(0) = 0,~ \int_{0}^{1} |f'(x)|^2 \rd x < \infty \bigg\} .
	\]
Any $f \in \functionclass$ is absolutely continuous on $[0, 1]$ with square integrable derivative function.	
\end{example}

The function spaces are finite-dimensional in Examples \ref{example-linear} and \ref{example-polynomial}, and infinite-dimensional in the others. For any $f \in \functionclass$ serving as a predictive model, we measure its \emph{risk} by the out-of-sample mean squared prediction error on the target population:
\begin{align*}
\risk  (f) = \EE | f (   \bx_{\mathrm{new}}  )   - y_{\mathrm{new}}  |^2,
\end{align*}
where $(\bx_{\mathrm{new}} , y_{\mathrm{new}})$ is a new sample. Under Assumption \ref{assumption-linear-model-rkhs}, $f^{\star} $ minimizes the functional $ \risk (\cdot)$, and 
\begin{align*}
 \risk  ( f ) -  \risk  ( f^{\star} ) =  \EE_{\bx \sim \target}  | f ( \bx ) - f^{\star} (\bx)  |^2 .
\end{align*}
In words, the \emph{excess risk} is equal to the mean squared estimation error over $\target$.

We emphasize that neither $\source$ nor $\target$ is known to us, and we merely have a finite number of samples. 
The difference between $\source$ and $\target$ is called the \emph{covariate shift}. Below we explain why it brings challenges to learning in high dimensions.

\subsection{Ridge regression and new challenges}\label{sec-ridge-challenges}

Consider a finite-dimensional linear model (Assumption \ref{assumption-linear-model}). When $\mathrm{span} \{ \bx_i \}_{i=1}^n = \RR^d$, the ordinary least squares (OLS) regression
\begin{align*}
\min_{\btheta \in \RR^d} \bigg\{ 
\frac{1}{n} \sum_{i=1}^n (\langle \bx_i , \btheta \rangle - y_{i} )^2
\bigg\}
\end{align*}
has a unique solution, and it is an unbiased estimate of $\btheta^{\star}$. When $\mathrm{span} \{ \bx_i \}_{i=1}^n$ is a proper subspace of $\RR^d$ (e.g., if $d > n$), a popular approach is ridge regression \citep{HKe70}
\begin{align}
\min_{\btheta \in \RR^d} \bigg\{ 
\frac{1}{n} \sum_{i=1}^n (\langle \bx_i , \btheta \rangle - y_{i} )^2 + \lambda \| \btheta \|_2^2 
\bigg\}.
\label{eqn-ridge}
\end{align}
The tuning parameter $\lambda > 0$ ensures the uniqueness of solution and enhances its noise stability. Ridge regression is suitable if we have prior knowledge that $\| \btheta^{\star} \|_2$ is not too large.

An extension to the RKHS setting (Assumption \ref{assumption-linear-model-rkhs}) is kernel ridge regression (KRR)
\begin{align}
\min_{f \in \functionclass}
\bigg\{  \frac{1}{n} \sum_{i=1}^n [  f( \bx_i )  - y_{i} ]^2 + \lambda \| f \|_{\functionclass}^2 
\bigg\} ,
\label{eqn-krr}
\end{align}
which reduces to the ridge regression \eqref{eqn-ridge} when $\cX = \RR^d$ and $K$ is the linear kernel in \Cref{example-linear}.
Although $\functionclass$ may have infinite dimensions in general, it was shown \citep{Wah90} that the optimal solution to \eqref{eqn-krr} has a finite-dimensional representation $\widehat{f} (\cdot) = \sum_{i=1}^{n} \widehat\alpha_i K (\bx_i , \cdot)$, with $\widehat\balpha = (\widehat\alpha_1,\cdots, \widehat\alpha_n)^{\top}$ being the unique solution to the quadratic program
\[
\min_{\balpha \in \RR^n} \bigg\{ 
\frac{1}{n} \| \bK \balpha - \by \|_2^2 + \lambda \balpha^{\top} \bK \balpha
\bigg\},
\]
with $\bK = [ K(\bx_i, \bx_j) ]_{n \times n}$ and $\by = (y_1,\cdots, y_n)^{\top}$.

For both ridge regression \eqref{eqn-ridge} and its kernelized version \eqref{eqn-krr}, choosing a large $\lambda$ shrinks the variance but inflates the bias. Quality outputs hinge on a balance between the above two effects.
In practice, tuning is often based on risk estimation using hold-out validation, cross-validation, and so on.
For instance, suppose that we have a finite set $\Lambda$ of candidate tuning parameters. Each $\lambda \in \Lambda$ is associated with a candidate model $\widehat{f}_{\lambda} \in \functionclass$ that solves the program \eqref{eqn-krr}.
If the responses $\{ y_{0i} \}_{i=1}^{n_0}$ in the target data were observed, we could estimate the out-of-sample risk of every $\widehat{f}_{\lambda} $ by its empirical version
\begin{align*}
\frac{1}{n_0}  \sum_{i=1}^{n_0} [ \widehat{f}_{\lambda}  (  \bx_{0i} ) - y_{0i} ]^2 ,
\end{align*}
and then select the one with the smallest empirical risk. Unfortunately, the missing responses in the target data create a visible obstacle. Although the source samples are labeled, they are not representative for the target distribution in the presence of covariate shift. 
The tuning parameter with the best predictive performance on the source could be sub-optimal for the target, as is shown in the numerical example in \Cref{sec-intro}.

Theoretical studies also shed lights on the impact of covariate shift. Below is one example.

\begin{example}\label{example-Sobolev-challenges}
Consider the Sobolev space in \Cref{example-Sobolev}, which has $\| f \|_{\functionclass} = \sqrt{ \int_{0}^{1} |f'(x)|^2 \rd x }$. Let $\source = \cU [0, 1]$ be the uniform distribution. Assume that the errors $\{ \varepsilon_i \}_{i=1}^n$ are i.i.d.~$N(0,\sigma^2)$.
\begin{itemize}
\item If $\target = \source$, then the minimax optimal rate of the prediction error over the ball $\{ f:~ \| f \|_{\functionclass} \leq R \}$ is achieved by KRR \eqref{eqn-krr} with $\lambda \asymp ( \frac{\sigma^2}{R^2 n} )^{2/3}$ \citep{Wai19}. 

\item 
If $\target = \delta_{x_0}$ is the pointmass at some $x_0 \in ( 0, 1 )$, i.e. we only want to predict the response at a single point, then KRR \eqref{eqn-krr} with $\lambda = \frac{\sigma^2}{R^2 n}$ achieves the smallest worst-case risk among estimators that are linear in $\by$ \citep{Spe79,Li82}.
In terms of the mean squared error, it is minimax optimal up to a factor of $5/4$ over all estimators \citep{Don94}. 
\end{itemize}
\end{example}

This example shows the dependence of optimal tuning on the target. When $\target$ is spread out, we have to estimate the whole function, need a large $\lambda$ to supress the variance, and can afford a fair amount of bias. 
When $\target$ is concentrated, we are faced with a different bias-variance trade-off. 
An ideal method should automatically adapt to the structures of $\source$ and $\target$.
In the next section, we will present a simple approach to tackle the aforementioned challenges.

\section{Methodology}\label{sec-methodology}

We consider the problem of kernel ridge regression under covariate shift (Assumption \ref{assumption-linear-model-rkhs}). As the most prominent issue is the lack of target labels, we develop a regression imputation method to generate synthetic labels.
The idea is to estimate the true regression function using part of the source data and then feed the obtained imputation model with the unlabeled target data. This is called \emph{pseudo-labeling} in the machine learning literature \citep{Lee13}. We describe the proposed approach in Algorithm \ref{alg}. 

\begin{algorithm}[h]
	{\bf Input:} Source data $\{ (\bx_i , y_i) \}_{i=1}^n$, target data $\{ \bx_{0i} \}_{i=1}^{n_0}$, sample size $n_1$ for training, a set of penalty parameters $\Lambda$ for training, a penalty parameter $\widetilde\lambda$ for pseudo-labeling.\\
	{\bf Step 1 (Data splitting).} Randomly partition the source data $\{ (\bx_i , y_i ) \}_{i=1}^n$ into two disjoint subsets $\cD_1$ and $\cD_2$ of sizes $n_1$ and $n_2 = n - n_1$, respectively. \\
{\bf Step 2 (Training).} Choose a finite collection of penalty parameters $\Lambda \subseteq (0 , +\infty)$ and another one $\widetilde{\lambda} > 0$.
Run KRR on $\cD_1$ to get candidate models $\{ \widehat{f}_{\lambda} \}_{\lambda \in \Lambda}$, where
\[
\widehat{f}_{\lambda} = \argmin_{ f \in \functionclass } 
\bigg\{  \frac{1}{ | \cD_1 | } \sum_{ (\bx, y) \in \cD_1 } |  f( \bx )  - y |^2 + \lambda \| f \|_{\functionclass}^2 
\bigg\} , \qquad \forall \lambda \in \Lambda .
\]
Run KRR on $\cD_2$ to get an imputation model
\[
\widetilde{f} = \argmin_{ f \in \functionclass } 
\bigg\{ 
\frac{1}{ | \cD_2 | } \sum_{ (\bx, y) \in \cD_2 } |  f( \bx )  - y |^2 + \widetilde\lambda \| f \|_{\functionclass}^2 
\bigg\} .
\]
{\bf Step 3 (Pseudo-labeling).} Generate pseudo-labels $\widetilde{y}_{0i} = \widetilde{f} ( \bx_{0i} ) $ for $i \in [n_0]$. \\
{\bf Step 4 (Model selection).} Select any
\begin{align}
	\widehat\lambda \in \argmin_{ \lambda \in \Lambda }  \bigg\{
	\frac{1}{n_0} \sum_{i=1}^{n_0} 
	| \widehat{f}_{\lambda} ( \bx_{0i} ) - \widetilde{y}_{0i} |^2 
	\bigg\} .
	\label{eqn-oracle-objective}
\end{align} \\
{\bf Output:} $\widehat{f} = \widehat{f}_{\widehat\lambda}$.
	\caption{Pseudo-labeling for KRR under covariate shift}
	\label{alg}
\end{algorithm}

The first two steps are conducted on the source data only, summarizing the information into candidate models and an imputation model. 
After that, the raw source data will no longer be used. 
Given a set of unlabeled data from the target distribution, we generate pseudo-labels using the imputation model and then select the candidate model that best fits them.
The method is computationally efficient because KRR is easily solvable and the other operations (pseudo-labeling and model selection) run even faster. 

The hyperparameters $n_1$, $\Lambda$ and $\widetilde{\lambda}$ need to be specified by the user. The first two are standard and also arise in penalized regression without covariate shift. The last one is the crux because it affects the qualities of the imputation model $\widetilde{f}$, the pseudo-labels $\{ \widetilde{y}_{0i} \}_{i=1}^{ n_0 }$, and thus the selected model $\widehat{f}$. 
In \Cref{sec-excess-risk-general}, we will provide practical guidance based on a theoretical study. Roughly speaking, 
\begin{itemize}
\item The proportion of training data $n_1 / n$ is bounded away from $0$ and $1$, such as $1/2$. 
\item The set of penalty parameters $\Lambda$ consists of a geometric sequence from $O(n^{-1})$ to $O(1)$, with $O( \log n )$ elements. 
The resulting candidate models span a wide spectrum from undersmoothed to oversmoothed ones. 
\item 
The penalty parameter $\widetilde{\lambda}$ is set to be $O(n^{-1})$.
In fact, the model selection accuracy depends on the bias and a variance proxy of the pseudo-label vector $\widetilde\by = (  \widetilde{y}_{01} , \cdots,  \widetilde{y}_{0 n_0} )^{\top}$ in a delicate way.
Our $\widetilde{\lambda}$ achieve a bias-variance tradeoff that minimizes the \emph{impact on model selection}, rather than the usual one for minimizing the mean squared prediction error.
\end{itemize}

To get ideas about why imperfect pseudo-labels may still lead to faithful model selection, imagine that the responses $\{ y_{0i} \}_{i=1}^{n_0}$ of the target data were observed. 
Then, we could estimate the risk of a candidate $\widehat{f}_{\lambda}$ by $\frac{1}{n_0} \sum_{i=1}^{n_0} | \widehat{f}_{\lambda} (\bx_{0i}) - y_{0i} |^2$ and then perform model selection accordingly. 
While $ y_{0i} $ is a noisy version of $f^{\star} (\bx_{0i})$ with zero bias and non-vanishing variance $\sigma^2$, such hold-out validation method works well in practice. One can anticipate that pseudo-labels with small bias and reasonable variance still do the job.

Ideally, we want to set the hyperparameters $\Lambda$ and $\widetilde{\lambda}$ without seeing the target data. According to our theories in \Cref{sec-excess-risk-general}, this is indeed possible in many common scenarios. 
As a result, we can train the candidate models $\{ \widehat{f}_{\lambda} \}_{\lambda \in \Lambda}$ plus the imputation model $\widetilde{f}$ solely on the labeled source data, and store them for adaptation to target tasks in the future. The latter stage is remarkably simple and does not require any special design tailored to the target task.

\begin{remark}[Comparison with Lepski-type methods]
A line of works on KRR without covariate shift \citep{BMM19,PGr21} apply the Lepski principle \citep{Lep91} to choose a regularization parameter that balances the bias and the variance. A recent work \citep{LRe24} developed a Lepski-type method for KRR in the transfer learning setting under model shift, using labeled data from both the source and the target. The above settings are different from ours. 
Those methods require an estimate of the variance, which is usually constructed using theoretical bounds. 
For the problem we consider, variance estimation would involve Gram matrices defined by the source and the target data. When adapting to another target distribution, the procedure needs to be repeated again. Compared to that, our method is conceptually simpler and computationally more efficient. Moreover, it is applicable to general scenarios where a theoretical bound on the variance is not available.
\end{remark}

\section{Theoretical guarantees on the adaptivity}\label{sec-excess-risk-general}

This section is devoted to theoretical guarantees for Algorithm \ref{alg}. We will show that the estimator adapts to both the structure of the target distribution and the covariate shift. In the analysis, we will introduce a notion of \emph{effective sample size} to gauge the value of labeled source data for the target regression task, providing a rigorous statement of \Cref{thm-intro}.

\subsection{Preparations}

We will study kernel ridge regression \eqref{eqn-krr} in the RKHS setting (Assumption \ref{assumption-linear-model-rkhs}), which covers ridge regression \eqref{eqn-ridge} in the Euclidean setting (Assumption \ref{assumption-linear-model}) as a special case. 
Our analysis is based upon mild regularity assumptions. 
The signal strength is assumed to be bounded; the noise $\varepsilon_i$ can be dependent on the covariates $\bx_i$, but it needs to have sub-Gaussian tails when conditioned on the latter. 

\begin{assumption}[Bounded signal and noise]\label{assumption-noise}
We have $\| \btheta^{\star} \|_{\HH} \leq R < \infty$. Conditioned on $\bx_i $, the noise variable $ \varepsilon_i $ is sub-Gaussian:
\[
\sup_{p \geq 1 } \{ p^{-1/2} \EE^{1/p}  ( |\varepsilon_i|^p  | \bx_i  ) \} \leq \sigma < \infty .
\]
\end{assumption}

On the other hand, the covariates either have bounded norms or satisfy a strong version of the sub-Gaussianity assumption.

\begin{assumption}[Bounded covariates]\label{assumption-covariates-bounded}
$\| \phi (  \bx_i ) \|_{\HH} \leq M$ and $\| \phi (  \bx_{0i} ) \|_{\HH} \leq M$ hold almost surely for some $M < \infty$.
\end{assumption}

\begin{assumption}[Strongly sub-Gaussian covariates]\label{assumption-covariates-subg}
$\EE \| \phi ( \bx_{i} ) \|_{\HH}^2 \leq M^2$ and $\EE \| \phi ( \bx_{0i} ) \|_{\HH}^2 \leq M^2$ hold for some $M < \infty$. In addition, there exists $\kappa \in [1, +\infty) $ such that $\| \langle  \phi ( \bx_{i} ) , \bv \rangle \|_{\psi_2}^2 \leq \kappa \EE | \langle  \phi ( \bx_{i} ) , \bv \rangle |^2 $ and $ \| \langle \phi ( \bx_{0i} ) , \bv \rangle \|_{\psi_2}^2 \leq \kappa \EE | \langle  \phi ( \bx_{0i} ) , \bv \rangle |^2 $ hold for all $\bv \in \HH$.
\end{assumption}

Assumption \ref{assumption-covariates-bounded} holds if the reproducing kernel $K$ is uniformly bounded over the domain. This includes Examples \ref{example-Gaussian} and \ref{example-Sobolev}, as well as Examples \ref{example-linear} and \ref{example-polynomial} with $\cX$ being a compact subset of $\RR^d$.
Assumption \ref{assumption-covariates-subg}, also known as the $\psi_2-L_2$ norm equivalence, is commonly used in high-dimensional statistics \citep{Ver10,KLo17}. The constant $\kappa$ is invariant under bounded linear transforms of $\phi ( \bx_i ) $ and $\phi ( \bx_{0i} )$.
Combined with Assumption \ref{assumption-linear-model-rkhs}, it implies that 
\[
\| \langle  \phi ( \bx_{i} ) , \bv \rangle \|_{\psi_2}^2 \leq \kappa \langle \bv , \bSigma \bv \rangle \leq \kappa \| \bSigma \| \cdot \| \bv \|_{\HH}^2
\]
and similarly, $\| \langle  \phi ( \bx_{0i} ) , \bv \rangle \|_{\psi_2}^2   \leq \kappa \| \bSigma_0 \| \cdot \| \bv \|_{\HH}^2.
$
Therefore, both $\phi(\bx_i)$ and $\phi ( \bx_{0i} )$ are sub-Gaussian. In the Euclidean case (\Cref{example-linear}) with $\cX = \HH = \RR^d$ and $\phi (\bx) = \bx$, Assumption \ref{assumption-covariates-subg} is equivalent to $\| (\bSigma^{\dagger})^{1/2} \bx_i \|_{\psi_2} \leq \kappa$ and $\| (\bSigma_0^{\dagger})^{1/2} \bx_{0i} \|_{\psi_2} \leq \kappa$, where the superscript $\dagger$ denotes the Moore-Penrose pseudoinverse. It holds when $\bx_{i} \sim N( \bm{0} , \bSigma )$ and $\bx_{0i} \sim N( \bm{0} , \bSigma_0 )$.

\subsection{Adaptivity to the unknown distribution shift}

We are ready to state our general guarantees for Algorithm \ref{alg}. \Cref{thm-excess-risk-0} below shows that the proposed method handles arbitrary covariate shift without prior information about it. Meanwhile, we need knowledge of $\sigma$, $R$, and $M$ (i.e.~upper bounds on the noise level, the signal strength, and the covariate vector length) for setting the hyperparameters $\Lambda$ and $\widetilde{\lambda}$, which is standard for theoretical analysis. We defer the proof to \Cref{sec-thm-excess-risk-0-proof}.

\begin{theorem}[Excess risk]\label{thm-excess-risk-0}
Let Assumptions \ref{assumption-linear-model-rkhs} and \ref{assumption-noise} hold. In addition, suppose that either Assumption \ref{assumption-covariates-bounded} or \ref{assumption-covariates-subg} holds, and $n$ is even. Denote by $\{ \mu_j \}_{j=1}^{\infty}$ the eigenvalues of $\bSigma_0$ sorted in descending order. Let $\mu^2 =  \max  \{ \sigma^2 / R^2, M^2 \}$ and 
\begin{align}
\effectivesamplesize 
= \sup  \{ t \leq n :~ t \bSigma_0 \preceq n \bSigma + 
\mu^2
\bI
\}.
\label{eqn-n-eff}
\end{align}
Consider Algorithm \ref{alg} with $n_1 = n/2$, $\widetilde\lambda = \mu^2  / n$, and $\Lambda = \{ 2^{j-1} \mu^2 / n  :~ 1 \leq j \leq 
\lceil \log_{2} n \rceil + 1
\}$. Choose any $\delta \in (0, 1/e]$. 
\begin{enumerate}
		\item Under Assumption \ref{assumption-covariates-bounded}, there exists a polylogarithmic function $\polylog$ of $(n, n_0, 1/\delta)$ that is independent of other parameters, such that with probability at least $1 - \delta$,
		\begin{align}
\risk ( \widehat{f}  ) - \risk ( f^{\star} ) 
\leq \polylog  \cdot 
\inf_{ \rho > 0 }
\bigg\{
R^2 \rho + \frac{ \sigma^2 }{ \effectivesamplesize } \sum_{j=1}^{\infty} \frac{\mu_j}{\mu_j + \rho}
\bigg\}
+ \zeta \cdot  R^2 \mu^2 \bigg(
\frac{1}{ \effectivesamplesize } + \frac{1}{n_0}
\bigg)
		.
		\label{eqn-excess-risk-0}
		\end{align}
		
		\item Under Assumption \ref{assumption-covariates-subg}, there exist a universal constant $C$ and a polylogarithmic function $\polylog$ of $(n, n_0, 1/\delta)$ determined by $\kappa$, such that when $n_0 \geq C \kappa^2 \log ( \delta^{-1} \log n )$, we have \eqref{eqn-excess-risk-0}. 
	\end{enumerate}
\end{theorem}

We considered an even split of data and a geometrically increasing sequence of penalty parameter with common ratio $2$ only for the ease of presentation. The same results continue to hold so long as the split ratio $n_1 / n$ is bounded away from $0$ and $1$ (e.g.,~0.8), and the common ratio is a constant bounded away from 1 (e.g.,~1.1).

The quantity $\effectivesamplesize$ defined in \eqref{eqn-n-eff} enters the final error bound as the \emph{effective sample size}. 
To see that, imagine running KRR with penalty $\rho $ on $k$ i.i.d.~labeled samples from the target distribution. Under mild regularity assumptions, the expected excess risk is bounded by
\begin{align}
R^2 \rho  +  \frac{\sigma^2}{k} \sum_{j = 1}^{\infty} \frac{\mu_j}{\mu_j + \rho}
	\label{eqn-intuition-2}
\end{align}
up to a constant factor \citep{Zha05}. The two summands correspond to the squared bias and the variance, respectively. Hence, the infimum term in \eqref{eqn-excess-risk-0} corresponds to the optimal bias-variance tradeoff for KRR on $\effectivesamplesize$ labeled target samples. Therefore, \Cref{thm-excess-risk-0} is a rigorous statement of \Cref{thm-intro}, confirming the interpretation of $\effectivesamplesize$. 

The effective sample size measures the information transfer from the source data to the target task through the relation between second-moment operators of $\source$ and $\target$. Below we briefly explain the intuition behind the definition \eqref{eqn-n-eff}. Without further specification, all probabilistic bounds hold up to polylogarithmic factors with high probability.
Define $\bS_{\lambda} = ( \bSigma + \lambda \bI )^{-1/2} \bSigma_0 ( \bSigma + \lambda \bI )^{-1/2}$ and $\bT_{\lambda} = ( \bSigma_0 + \lambda \bI )^{-1/2} \bSigma_0 ( \bSigma_0 + \lambda \bI )^{-1/2} $ for $\lambda > 0$. 
When $n_1 = n / 2$ and $\lambda \geq \mu^2 / n$, we can show that $\widehat{f}_{\lambda}$ in Algorithm \ref{alg} satisfies
\begin{align}
\risk (\widehat{f}_{\lambda}) - \risk (f^{\star})
\lesssim R^2  \lambda \|  \bS_{\lambda}  \|  + \frac{\sigma^2 }{n}  \Tr (  \bS_{\lambda} ).
\label{eqn-intuition-1}
\end{align}
The analysis requires proper concentration of the estimate $\widehat{f}_{\lambda}$ around its expectation, which results in the $O(n^{-1})$ lower bound on $\lambda$. Our choice of $\widetilde{\lambda}$ and the smallest value in $\Lambda$ are based upon this observation.
On the other hand, the infinite sum in \eqref{eqn-intuition-2} is equal to $\Tr (\bT_{\rho})$. To relate \eqref{eqn-intuition-1} to \eqref{eqn-intuition-2}, we will bound $\bS_{\lambda}$ using $\bT_{\rho}$ for some suitable $\rho$. For any $\lambda \in \Lambda$, we use the fact $\lambda \geq \mu^2 / n$ to derive that
\begin{align*}
& \bSigma + \lambda \bI 
\succeq \frac{1}{2} [ \bSigma + (\mu^2 / n) \bI ] + \frac{\lambda}{2}  \bI .
\end{align*}
Define $U = \| [\bSigma + (\mu^2 / n) \bI ]^{-1/2} \bSigma_0 [\bSigma + (\mu^2 / n) \bI ]^{-1/2}  \|$. Then, $\bSigma + (\mu^2 / n) \bI \succeq U^{-1} \bSigma_0$ and
\begin{align*}
\bSigma + \lambda \bI \succeq \frac{1}{2} ( U^{-1} \bSigma_0 + \lambda \bI )
= \frac{1}{2U} (  \bSigma_0 + U \lambda \bI ).
\end{align*}
As a result, we have $\| \bS_{\lambda} \| \leq 2  U \| \bT_{ U \lambda } \| \leq 2  U $ and $\Tr (  \bS_{\lambda} ) \leq 2  U \Tr ( \bT_{ U \lambda } )$. By \eqref{eqn-intuition-1},
\begin{align}
\risk (\widehat{f}_{\lambda}) - \risk (f^{\star})
\lesssim R^2   U \lambda     + \frac{\sigma^2 }{ n / U }  \Tr (  \bT_{ U \lambda} ), \qquad \forall \lambda \in \Lambda.
\end{align}
Comparing this with the bound \eqref{eqn-intuition-2} for KRR on labeled target data, we see that $ n / U $ and $U \lambda$ play the roles of $k$ and $\rho$, respectively. Hence, a candidate for the effective sample size is
\[
 n / U = 
\| (n \bSigma + \mu^2 \bI )^{-1/2} \bSigma_0 (n \bSigma + \mu^2 \bI )^{-1/2}  \|^{-1}
= \sup  \{ t :~ t \bSigma_0 \preceq n \bSigma + 
\mu^2
\bI
\}.
\]
Our final definition \eqref{eqn-n-eff} of $\effectivesamplesize$ truncates the above quantity at $n$ due to technical reasons in bridging the finite grid $\Lambda$ of penalty parameters and the infinite interval $(0, +\infty)$. 

Since $\| \bSigma_0 \| \leq M^2$ always hold, we have $\bSigma_0 \preceq M^2 \bI \preceq \mu^2 \bI$ and thus $1 \leq \effectivesamplesize \leq n$. If $\bSigma_0 \preceq B \bSigma$ for some $B \geq 1$, which is clearly true under the ``bounded likelihood ratio'' assumption $\frac{\rd \target}{\rd \source} \leq B$, then $\effectivesamplesize \geq n / B$. In \Cref{sec-examples}, we will provide concrete examples to further illustrate $\effectivesamplesize$, especially the importance of the additive term $\mu^2 \bI$ in the definition.

\begin{remark}[Effective sample size under covariate shift]
The concept of effective sample size has been used to gauge the transfer from the source to the target. It can be defined based on the coverage of the source distribution over the target one \citep{KMa21,PMW22,SZa22}, or the behaviors of the source and the target distributions near the optimal classification boundary \citep{HKp19,CWe21}. The former is suitable for learning H\"{o}lder smooth functions in classification and regression problems. 
The latter relies on the discrete nature of label in classification problems. 
\end{remark}

In addition to characterizing the transfer, \Cref{thm-excess-risk-0} also shows Algorithm \ref{alg}'s strong ability of model selection. Because the complexity of the function space plays a central role in the bias-variance tradeoff, the infimum term in \eqref{eqn-excess-risk-0} usually has the form $\effectivesamplesize^{-\beta}$ with $0 < \beta < 1$ (in nonparametric regression) or $D \effectivesamplesize^{-1}$ (in linear regression with dimension $D$). In contrast, the additional term in \eqref{eqn-excess-risk-0} is of order $O ( \effectivesamplesize^{-1} + n_0^{-1} )$ regardless of complexities of the function space. Hence, it is usually dominated by its preceding term, as we will see shortly. The size $n_0$ of the unlabeled target data only has a minor influence on the final error, which is negligible if $n_0 \geq \effectivesamplesize$. To verify this in common scenarios, we present a corollary of \Cref{thm-excess-risk-0}. See Appendix B.1 
for the proof.

\begin{corollary}\label{cor-r}
	Consider the setup in \Cref{thm-excess-risk-0}. In addition, assume that $\sigma$, $R$, $M$ and $\kappa$ are bounded from above and below by universal constants. We have the following results when $n$ and $n_0 / \log n$ are sufficiently large, where the inequalities only hide factors that are polylogarithmic in $n$ and $n_0$. Each of them holds with probability at least $1 - n^{-1}$.

	\begin{enumerate}
		\item\label{part-finite-rank-cor-r} If $\rank (\bSigma_0) \leq D < \infty$, then
		\[
		\risk ( \widehat{f} ) - \risk ( f^{\star} ) \lesssim
		D \effectivesamplesize^{-1} + n_0^{-1}.
		\]
		\item\label{part-exponential-cor-r} If $\mu_j \leq c_1 e^{-c_2 j  }$, $\forall j$ holds with some constants $c_1, c_2 > 0$, then
		\[
		\risk ( \widehat{f} ) - \risk ( f^{\star} ) \lesssim \effectivesamplesize^{-1} + n_0^{-1}.
		\]
		\item\label{part-polynomial-cor-r} If $\mu_j \leq c j^{- 2 \alpha}$, $\forall j$ holds with some constants $c > 0$ and $\alpha > 1/2$, then
		\[
		\risk ( \widehat{f} ) - \risk ( f^{\star} ) \lesssim
		\effectivesamplesize^{ - \frac{ 2 \alpha }{ 2 \alpha + 1  } } + n_0^{-1} .
		\]
	\end{enumerate}
\end{corollary}

\Cref{cor-r} presents excess risk bounds when the kernel has finite rank, or decaying eigenvalues with polynomial or exponential rates. A linear kernel in a Euclidean space satisfies the first condition.
Here are examples for the latter two. Let $\cX = [ 0 , 1 ]$ and $\target = \cU [0, 1]$. The first-order Sobolev kernel in \Cref{example-Sobolev} satisfies the polynomial decay property with $\alpha = 1$. The Gaussian kernel in \Cref{example-Gaussian} satisfies the exponential decay property. 

The excess risk bounds continue to hold if the exceptional probability is changed from $n^{-1}$ to $n^{-c}$ with any constant $c> 0$. The $O(n_0^{-1})$ dependence on the target sample size $n_0$ is surprisingly mild. When $n_0 \geq c n$ for some constant $c > 0$, the overhead caused by the finite target data is dominated by the preceding terms in the error bounds. 
Below we comment on the optimality of our results and compare them with the existing literature.

\begin{remark}[Optimality and adaptivity]
The bounds in \Cref{cor-r} are sharp. To see that, consider a simple case where $\frac{\rd \target}{\rd \source} \leq B$ holds for some $B \geq 1$. Then, we have $\effectivesamplesize \geq n / B$. \Cref{cor-r} yields error bounds $DB / n$, $B / n$ and $	(B/n)^{ - \frac{ 2 \alpha }{ 2 \alpha + 1  } }$ for the three scenarios. They are optimal according to the minimax lower bound in Theorem 2 in \cite{MPW22}, which holds for all estimators based on $n$ i.i.d.~labeled source samples plus full knowledge about $\target$ and $B$. Our Algorithm \ref{alg} achieves comparable performance with only $O(n)$ unlabeled samples from $\target$ and no information about $B$ or $\alpha$. This highlights the proposed algorithm's adaptivity to the unknown distribution shift.
In the next subsection, we will see through an example that \Cref{cor-r} can give the optimal error bound even when $\target$ is singular with respect to $\source$ .
\end{remark}

\begin{remark}[Comparisons with existing theories of kernel methods under covariate shift]
	\cite{MPW22} conducted an in-depth analysis of regression in an RKHS when the target and source covariate distributions have bounded likelihood ratio or $\chi^2$ divergence. 
	\begin{itemize}
		\item The former case assumes $\frac{\rd \target}{\rd \source} \leq B$. The authors showed that the minimax optimal error rate can be achieved (up to a logarithmic factor) by KRR, with a penalty parameter determined by the sample size $n$, the spectrum of $\bSigma_0$, and the upper bound $B$ on the likelihood ratio. However, it is generally hard to estimate $B$ given finite samples from $\source$ and $\target$. So, their method is not adaptive to the unknown covariate shift.
		Our self-tuning approach, on the other hand, does not require such knowledge while enjoying optimality guarantees.
		
		\item The latter case assumes $\EE_{x \sim \source} ( \frac{\rd \target}{\rd \source}  )^2 \leq B^2$, which is clearly weaker than the previous condition, plus a uniform bound on the eigenfunctions of the kernel. A reweighted version of KRR using truncated likelihood ratios was shown to be optimal. The $\chi^2$ bound on the distribution shift does not directly translate to our effective sample size $\effectivesamplesize$. Hence, it is not clear whether our method, or its variants, continues to be adaptive in this scenario. We leave this question for future research.
	\end{itemize}
	Both of those cases require $\target$ to be absolutely continuous with respect to $\source$. Such assumption is not needed for our theory, which is built upon the second-moment operators $\bSigma_0$ and $\bSigma$. As a result, we are able to obtain optimality guarantees in some singular cases.
	
	After our preprint had become available on arXiv, a new paper \cite{FHW24} extended the analysis in \cite{MPW22} to general kernel-based methods, including kernel logistic regression. We leave it for future research to study pseudo-labeling approaches beyond least squares regression.
\end{remark}

\subsection{Examples and discussions}\label{sec-examples}

We present two examples to illustrate the impact of covariate shift through explicit calculations of effective sample sizes. The first one is high-dimensional linear regression with $\source$ and $\target$ having different rates of spectral decay.

\begin{example}\label{example-linear-conversion}
Consider \Cref{example-linear} with $\cX= \RR^d$ and $K$ being the linear kernel. Let $R = \sigma = M = 1$, $\source = N( \bm{0}, \bSigma )$, $\target = N(\bm{0}, \bSigma_0)$, and $n_0 \geq n$. Define $\zeta(s) = \sum_{j=1}^{\infty} j^{-s}$ for $s > 1$. Consider the following scenarios with diagonal $\bSigma$ and $\bSigma_0$.
\begin{enumerate}
\item Let $\bSigma = \zeta(2)^{-1} \diag ( 1^{-2}, 2^{-2} ,\cdots, d^{-2} )$ and $\bSigma_0 = \zeta(4)^{-1} \diag ( 1^{-4}, 2^{-4} ,\cdots, d^{-4} )$. Then, $\bSigma_0 \preceq [ \zeta(2) / \zeta(4) ] \bSigma$ and $\effectivesamplesize \asymp n$. Part \ref{part-polynomial-cor-r} in \Cref{cor-r} shows an $\widetilde{O}( n^{-4/5} )$ excess risk bound. 

\item Let $\bSigma = \zeta(4)^{-1} \diag ( 1^{-4}, 2^{-4} ,\cdots, d^{-4} )$ and $\bSigma_0 = \zeta(2)^{-1} \diag ( 1^{-2}, 2^{-2} ,\cdots, d^{-2} )$. We have
\begin{align*}
\frac{n}{\effectivesamplesize}
& = \max \Big\{  \| (\bSigma + n^{-1} \bI)^{-1/2} \bSigma_0 (\bSigma + n^{-1} \bI)^{-1/2} \|
,~ 1 \Big\}
 \\
& \asymp \max_{j \in [d]} \frac{j^{-2}}{j^{-4} + n^{-1}} \leq \frac{1}{ \min_{j \geq 1} \{
	j^{-2} + j^2 / n
	\} }
\asymp \sqrt{n}
\end{align*}
and $\effectivesamplesize \gtrsim \sqrt{n}$. The rates are asymptotically sharp as $d \to \infty$. Part \ref{part-polynomial-cor-r} in \Cref{cor-r} yields an $\widetilde{O} ( n^{-1/3} )$ excess risk bound. 
\end{enumerate}
Thanks to the spectral decays, the above error bounds do not depend on the ambient dimension $d$ and continue to hold as $d \to \infty$. The difference between the convergence rates $n^{-4/5}$ and $n^{-1/3}$ highlights the asymmetric impact of covariate shift. In the first scenario, $\source$ is more spread out than $\target$. Such coverage helps gather rich information about the regression function for the target task. In the second scenario, $\source$ is less diverse than $\target$, which limits the information acquisition and increases the target error.
\end{example}

The second example concerns nonparametric regression, revisiting the challenge discussed in the second part of \Cref{example-Sobolev-challenges}.

\begin{example}\label{example-Sobolev-conversion}
Consider the Sobolev space in \Cref{example-Sobolev}. Let $R = \sigma = M = 1$, $\source = \cU[0, 1]$, and $\target = \delta_{x_0}$ with $x_0 = 1/2$.  Since $\target$ is a point mass, the unlabeled target dataset consists of $n_0$ repetitions of $x_0$. Regardless of $n_0$, Algorithm \ref{alg} always returns $\widehat{f}_{\widehat{\lambda}}$ with
\begin{align}
\widehat\lambda \in \argmin_{ \lambda \in \Lambda } 
| \widehat{f}_{\lambda} ( x_{0} ) - \widetilde{f} (x_0) |.
\label{eqn-alg-Sobolev}
\end{align} 
Hence, we can let $n_0$ be arbitrarily large without affecting the final outcome.  

Since $\rank ( \bSigma_0 ) = 1$, Part \ref{part-finite-rank-cor-r} of \Cref{cor-r} with $n_0 \to \infty$ implies that with probability at least $1 - n^{-1}$, we have
\begin{align}
| \widehat{f} (x_0) - f^{\star} (x_0) | =
\sqrt{ \risk ( \widehat{f} ) - \risk ( f^{\star} ) } \lesssim \effectivesamplesize^{-1/2}.
\label{eqn-alg-Sobolev-2}
\end{align}
This is consistent with the interpretation of $\effectivesamplesize$ as the effective sample size in \Cref{thm-intro}: under $\target$, the regression problem amounts to estimating a single mean parameter $f^{\star}(x_0)$ from i.i.d.~samples, whose optimal error rate is clearly the parametric rate in \eqref{eqn-alg-Sobolev-2}.

On the other hand, we can show that $\effectivesamplesize \asymp \sqrt{n}$. Consequently, $| \widehat{f} (x_0) - f^{\star} (x_0) | \lesssim n^{-1/4}$. This rate is minimax optimal. Hence, our Algorithm \ref{alg} adapts to the extreme covariate shift with $\source$ spread out and $\target$ being a point mass. See Appendix B.2 
for a proof of those claims.

In this example,  the imputation model $\widetilde{f}$ is trained on $\cD_2$ with penalty parameter $\widetilde{\lambda} = n^{-1}$. The pseudo-label $\widetilde{f}(x_0)$ itself is minimax rate-optimal for the linear functional $f^{\star} (x_0)$ of the unknown function $f^{\star}$, see the discussion in \Cref{example-Sobolev-challenges}. Since $\widetilde{\lambda} \in \Lambda$, the candidate $\widehat{f}_{\widetilde\lambda}(x_0)$ obtained from $\cD_1$ achieves the same error rate as $\widetilde{f}(x_0)$. The selection rule \eqref{eqn-alg-Sobolev} implies
\[
| \widehat{f}_{\widehat{\lambda}} (x_0) - \widetilde{f}(x_0) |
\leq 
| \widehat{f}_{\widetilde\lambda} (x_0) - \widetilde{f}(x_0) |
\]
and hence the optimality of the final estimate. The above analysis sheds light on the choice of $\widetilde{\lambda}$ for training the imputation model.
\end{example}

We conclude this section with remarks on the effective sample size and the adaptation to the target task.

\begin{remark}[Necessity of adaptation]\label{remark-adaptivity}\label{lem-LB}
In Algorithm \ref{alg}, we obtain candidate models $\{ \widehat{f}_{\lambda} \}_{\lambda \in \Lambda}$ by running KRR on $\cD_1$, and then select the final one by incorporating information about $\target$. A natural question is whether such adaptation is necessary. In other words, can we choose one model based solely on the source data and apply it to all possible target populations? The numerical example in \Cref{sec-intro} already suggests a negative answer. We use the Sobolev space in \Cref{example-Sobolev} to give a theoretical confirmation.

Let $\source = \cU[0, 1]$, $R = 1$ and assume that $\varepsilon_i \sim N(0, 1)$. \Cref{cor-r} and \Cref{example-Sobolev-conversion} show that when $\target = \cP$ or $\target = \delta_{1/2}$, Algorithm \ref{alg} satisfies the corresponding error bound $\widetilde{O} (n^{-2/3})$ or $\widetilde{O} (n^{-1/2})$, respectively.

Denote by $\risk_1$ or $\risk_2$ the population risk under $\target = \cP$ or $\target = \delta_{1/2}$. There exist universal constants $C_1$ and $C_2$ such that when $n \geq C_1$, we can find $f^{\star} \in \cF$ with $\| f^{\star} \|_{\cF} = 1$ such that
\[
\inf_{\lambda > 0}
\max \bigg\{
\frac{
	\EE [ \risk_1 (\widehat{f}_{\lambda}) - \risk_1 (f^{\star}) ]
}{ n^{-2/3} }
,~
\frac{
	\EE [ \risk_2 (\widehat{f}_{\lambda}) - \risk_2 (f^{\star}) ] 
}{ n^{-1/2} }
\bigg\} \geq C_2 n^{1/15} .
\]
See the proof in Appendix B.3. 
Therefore, any $\widehat{f}_{\lambda}$ is sub-optimal for at least one of the two possible target distributions. The lower bound shows the necessity of adaptation.
\end{remark}

\begin{remark}[Definition of $\effectivesamplesize$]\label{remark-n-eff}
In our definition \eqref{eqn-n-eff} of the effective sample size $\effectivesamplesize$, the additive term $\mu^2
	\bI$ is crucial. In fact, we have $\sup  \{ t :~ t \bSigma_0 \preceq \bSigma\} = 0$ for both the second scenario of \Cref{example-linear-conversion} (with $d \to \infty$) and \Cref{example-Sobolev-conversion}. The proof is deferred to Appendix B.4. 
Consequently, if we remove $\mu^2 \bI$ from the definition, then $\effectivesamplesize = 0$ and it fails to reflect the information transfer. 
\end{remark}

\begin{remark}[Limitations of direct analysis via existing KRR theory]
Define the source risk $\sourcerisk (f) = \EE
| f(\bx) - y |^2
$ with $(\bx, y)$ drawn from the source distribution, and let $\{ \nu_j \}_{j=1}^{\infty}$ be the eigenvalues of $\bSigma$. Since $\widehat{f}_{\lambda}$ is trained on $n$ i.i.d.~labeled samples from the source distribution, one can apply standard results for KRR \citep{Zha05} to show that
\begin{align*}
\min_{ \lambda \in \Lambda} \sourcerisk (\widehat{f}_{\lambda}) - \sourcerisk (f^{\star}) \lesssim 
\inf_{ \lambda > 0 } \bigg\{
R^2 \lambda + \frac{\sigma^2}{n} \sum_{j=1}^{\infty} \frac{\nu_j}{\nu_j + \lambda}
\bigg\}.
\end{align*}
If $\bSigma_0 \preceq B \bSigma$ for some $B$ (e.g.,~when $\frac{\rd \target}{\rd \source} \leq B$), we have
\[
\risk  ( f ) -  \risk  ( f^{\star} ) \leq B [ \sourcerisk (f) - \sourcerisk (f^{\star} ) ] .
\]
Combining the inequalities leads to a bound on the optimal excess target risk 
\begin{align}
	\min_{ \lambda \in \Lambda} \risk (\widehat{f}_{\lambda}) - \risk (f^{\star}) \lesssim 
	B \inf_{ \lambda > 0 } \bigg\{
	R^2 \lambda + \frac{\sigma^2}{n} \sum_{j=1}^{\infty} \frac{\nu_j}{\nu_j + \lambda}
	\bigg\}.
\label{eqn-risk-source}
\end{align}
It is expressed through the spectrum of $\bSigma$ rather than $\bSigma_0$ as in our results. This derivation based on existing KRR theory is quick and clean. Yet, it requires a strong assumption $\bSigma_0 \preceq B \bSigma$ that fails in our second scenario in \Cref{example-linear-conversion} and also \Cref{example-Sobolev-conversion}. Even when the assumption holds, the error bound \eqref{eqn-risk-source} can be loose. Consider the first scenario in \Cref{example-linear-conversion} where $\bSigma_0 \preceq [ \zeta(2) / \zeta(4) ] \bSigma$ and hence $B \asymp 1$. According to \eqref{eqn-risk-source}, the excess risk of the best candidate is bounded by
\[
 \inf_{ \lambda > 0 } \bigg\{
\lambda + \frac{1}{n} \sum_{j=1}^{\infty} \frac{ j^{-2} }{ j^{-2}  + \lambda}
\bigg\}.
\]
Taking $\lambda \asymp n^{-2/3}$ optimizes the right-hand side and gives an $O ( n^{-2/3} )$ error bound. As a comparison, in \Cref{example-linear-conversion} we have shown a tighter bound of order $\widetilde{O}( n^{-4/5} )$.

\end{remark}

\section{Demystifying the power of pseudo-labeling}\label{sec-oracle}

In this section, we will explain why the overhead incurred by imperfect pseudo-labels can be negligible compared to excess risk bounds for the candidate models.
We will present a bias-variance decomposition of this overhead for general settings, which may be of independent interest. The results will be illustrated through numerical simulations.

\subsection{A bias-variance decomposition}

The following oracle inequality is the cornerstone of our excess risk bounds in \Cref{thm-excess-risk-0}.
The proof is deferred to \Cref{sec-thm-oracle-proof}.

\begin{theorem}[Oracle inequality for pseudo-labeling]\label{thm-oracle}
Let Assumptions \ref{assumption-linear-model-rkhs} and \ref{assumption-noise} hold. In addition, suppose that either Assumption \ref{assumption-covariates-bounded} or \ref{assumption-covariates-subg} holds, and $n$ is even. Define $\mu^2$ and $\effectivesamplesize$ as in \Cref{thm-excess-risk-0}. Consider Algorithm \ref{alg} with $n_1 = n/2$, $\widetilde\lambda = \mu^2  / n$, and $\min_{ \lambda \in \Lambda} \lambda \geq \sigma^2 / (R^2 n )$. Choose any $\delta \in (0, 1/e]$ and $\gamma \in ( 0  , 1]$. 
\begin{enumerate}
\item Under Assumption \ref{assumption-covariates-bounded}, there exists a polylogarithmic function $\polylog$ of $(n, n_0, |\Lambda|, \delta^{-1} )$ which is independent of other parameters, such that with probability at least $1 - \delta$,
\begin{align*}
	& \sqrt{	\risk  (\widehat{f}) - \risk  (f^{\star}) }
	\leq  \min_{ \lambda \in \Lambda  } \sqrt{ \risk  ( \widehat{f}_{\lambda} ) - \risk  ( f^{\star} ) } 
	+ \zeta
	\sqrt{
	 R^2 \mu^2 ( \effectivesamplesize^{-1} + n_0^{-1} )
	}.
\end{align*}

\item Under Assumption \ref{assumption-covariates-subg}, there exist a universal constant $C$ and a polylogarithmic function $\polylog $ of $(n, n_0, |\Lambda|, \delta^{-1} )$ determined by $\kappa$, such that when $n_0 \geq ( C \kappa / \gamma )^2 \log ( |\Lambda| / \delta )$, the following inequality holds with probability at least $1 - \delta$:
\begin{align*}
& \sqrt{	\risk  (\widehat{f}) - \risk  (f^{\star}) }
\leq ( 1 + \gamma ) \min_{ \lambda \in \Lambda  } \sqrt{ \risk  ( \widehat{f}_{\lambda} ) - \risk  ( f^{\star} ) } 
	+ \zeta
	\sqrt{
		R^2 \mu^2 ( \effectivesamplesize^{-1} + n_0^{-1} )
	}.
\end{align*}
\end{enumerate}
\end{theorem}

\Cref{thm-oracle} implies that up to a polylogarithmic factor,
\begin{align*}
\risk  (\widehat{f}) - \risk  (f^{\star}) 
\lesssim \min_{ \lambda \in \Lambda  } \{ \risk  ( \widehat{f}_{\lambda} ) - \risk  ( f^{\star} ) \}
+		R^2 \mu^2 ( \effectivesamplesize^{-1} + n_0^{-1} ).
\end{align*}
This result leads to the error bound \eqref{eqn-excess-risk-0} in \Cref{thm-excess-risk-0}.
The additive term in the oracle inequality reflects the error of the imputation model $\widetilde{f}$. Its order $O(\effectivesamplesize^{-1})$ is at most comparable to the preceding term $\min_{ \lambda \in \Lambda  } \{ \risk  ( \widehat{f}_{\lambda} ) - \risk  ( f^{\star} ) \}$ in common scenarios, as we discussed before presenting \Cref{cor-r}.

We now demystify such phenomenon to explain the power of pseudo-labeling. 
Ideally, we want to find the candidate model in $\{ \widehat{f}_{\lambda} \}_{\lambda \in \Lambda}$ that minimizes the risk $\cR (\cdot)$. However, the objective is not directly computable as we only have finitely many unlabeled data from the target distribution. To overcome the challenge, we train an imputation model $\widetilde{f}$ to produce pseudo-labels $\{ \widetilde{y}_{0i} \}_{i=1}^{ n_0 }$ and select the final model $\widehat{f}$ based on \eqref{eqn-oracle-objective}.
The suboptimality of $ \widehat\lambda $ relative to $\argmin_{ \lambda \in \Lambda } \cR ( \widehat{f}_{\lambda} )$ is caused by $\mathrm{(i)}$ the errors of pseudo-labels, and $\mathrm{(ii)}$ the finite-sample approximation of the risk.

To get \Cref{thm-oracle}, we will first tease out the impact of pseudo-labels by studying fixed designs.
Define the in-sample risk on the target data
\begin{align}
	\empiricalrisk   (f) = 	
	\EE \bigg(
	\frac{1}{n_0} \sum_{i=1}^{n_0} 
	|
	f ( \bx_{0i} ) - y_{0i}
	|^2 
	\bigg| \{ \bx_{0i} \}_{i=1}^{n_0}
	\bigg) , \qquad \forall f \in \functionclass .
	\label{eqn-defn-risk-in}
\end{align}
An important property is
\begin{align}
\empiricalrisk   (f)  -	\empiricalrisk   (f^{\star}) = 	\frac{1}{n_0} \sum_{i=1}^{n_0} 
|
f ( \bx_{0i} ) - f^{\star} ( \bx_{0i} )
|^2 .
\label{eqn-in-decomposition}
\end{align}
Therefore, the objective function for model selection \eqref{eqn-oracle-objective} mimics the excess in-sample risk $	\empiricalrisk   ( \widehat{f}_{\lambda} )  -	\empiricalrisk   (f^{\star}) $ of the candidate $\widehat{f}_{\lambda} $. They become equal if the imputation model is perfect, i.e.~$\widetilde{f} = f^{\star}$. We will establish an oracle inequality
\begin{align}
\sqrt{ \empiricalrisk   (\widehat{f})  -	\empiricalrisk   (f^{\star})  }
\leq  \min_{ \lambda \in \Lambda} \sqrt{ \empiricalrisk   (\widehat{f}_{\lambda})  -	\empiricalrisk   (f^{\star})  }
+ \eta ,
\label{eqn-oracle-in-sample}
\end{align}
where $\eta$ quantifies how the discrepancies $\{ \widetilde{f} (\bx_{0i}) - f^{\star} (\bx_{0i}) \}_{i=1}^{n_0}$ influence the suboptimality. See \Cref{lem-in-sample} for a formal statement. Once that is done, we will bridge the in-sample and the out-of-sample excess risks to finish the proof of \Cref{thm-oracle}. See \Cref{sec-thm-oracle-proof} for details.

The in-sample oracle inequality \eqref{eqn-oracle-in-sample} is derived from the following \emph{bias-variance decomposition} of pseudo-labels' impact on model selection, whose proof is deferred to Appendix C. 
Suppose that from $m$ candidate models $\{ g_{j} \}_{j=1}^m$, we want to select one that best fits some unknown regression function $g^{\star}$. We have finitely many unlabeled sample $\{ z_i \}_{i=1}^n$ and use an imputation model $\widetilde{g}$ to fill the labels for model selection. Since $\widetilde{g}$ is often learned from data, we let it be random. All the other quantities are considered deterministic.

\begin{theorem}[Bias-variance decomposition]\label{thm-bias-variance}
Let $\{ z_i \}_{i=1}^n$ be deterministic elements in a set $\cZ$; $g^{\star}$ and $\{ g_{j} \}_{j=1}^m$ be deterministic functions on $\cZ$; $\widetilde{g}$ be a random function on $\cZ$. Define 
\[
\cL_n ( g ) = 
\bigg(
\frac{1}{n} \sum_{i=1}^n | g (z_i) - g^{\star} (z_i) |^2 
\bigg)^{1/2}
\]
for any function $g$ on $\cZ$. Assume that the random vector $\widetilde\by = ( \widetilde{g} (z_1) ,  \widetilde{g} (z_2), \cdots, \widetilde{g} (z_n) )^{\top}$ satisfies \mbox{$\|   \widetilde\by - \EE \widetilde\by   \|_{\psi_2} \leq V < \infty$}.
Choose any
	\begin{align*}
	& \widehat{j} \in \argmin_{ j \in [m] } \bigg\{
	\frac{1}{n} \sum_{i=1}^n | g_{j} (z_i) - \widetilde{g} (z_i) |^2 
	\bigg\} .
	\end{align*}
There exists a universal constant $c > 0$ such that
\begin{align*}
&
\PP \bigg(
 \cL_n ( g_{\widehat{j}}  ) 
	\leq 
\min_{ j \in [m]} \cL_n ( g_j ) 
+ 2 \cL_n ( \EE \widetilde{g} )
+
 c  V \sqrt{ \frac{ \log ( m / \delta ) }{n} } 
\bigg)
\geq 1 - \delta,\qquad  \forall \delta \in (0, 1]  , \\
&	\EE \cL_n^2 ( g_{\widehat{j}}  )
\leq 
\inf_{\gamma > 0}
\bigg\{
 (1+\gamma)  \min_{ j \in [m]} \cL_n^2 ( g_j ) + 2 (1 + \gamma^{-1}) 
	\bigg(
	4  \cL_n^2 ( \EE \widetilde{g} ) +
	\frac{c^2 V^2 ( 1 + \log m )}{n}
	\bigg)
\bigg\} 
	.
\end{align*}
\end{theorem}

\Cref{thm-bias-variance} is an oracle inequality for the in-sample risk of the selected model. The expected overhead consists of two terms, $\cL_n^2 ( \EE \widetilde{g} ) $ and $n^{-1} V^2 ( 1 + \log m ) $. The first term is equal to $n^{-1} \sum_{i=1}^n | \EE \widetilde{g} (z_i) - g^{\star} (z_i) |^2$, the mean squared bias of the pseudo-labels. The second term reflects their randomness, as $V^2$ is the variance proxy of the sub-Gaussian random vector $\widetilde\by$.
Therefore, \Cref{thm-bias-variance} can be viewed as a bias-variance decomposition of the overhead.

In contrast, the usual bias-variance decomposition of mean squared error gives
\begin{align*}
\EE \cL_n^2 ( \widetilde{g} ) = \cL_n^2 ( \EE \widetilde{g} ) + n^{-1} \EE \|  \widetilde\by - \EE  \widetilde\by \|_2^2 .
\end{align*}
The quantity $ \EE \|  \widetilde\by - \EE  \widetilde\by \|_2^2 $ can be as large as $n V^2$, whose implied upper bound on $\EE \cL_n^2 ( \widetilde{g} ) $ is $\cL_n^2 ( \EE \widetilde{g} ) + V^2$. This is much larger than the bound $\cL_n^2 ( \EE \widetilde{g} ) + n^{-1} V^2 (1 + \log  m ) $ on the overhead so long as $ \log m \ll n$.
Consequently, pseudo-labels with large mean squared error may still help select a decent model.

According to our error bound, the (squared) bias $\cL_n^2 ( \EE \widetilde{g} )$ of the pseudo-labels has a more significant impact on model selection compared to the variance proxy $V^2$. The theoretical investigation provides valuable practical guidance. It explains why we use a small regularization parameter $\widetilde{\lambda} \asymp n^{-1}$ and aim for an \emph{undersmoothed} imputation model.

\subsection{A numerical example}

We demonstrate the theoretical findings through a numerical simulation. The corresponding code is available on Github (\url{https://github.com/kw2934/KRR}). 

Consider the RKHS in defined in \Cref{example-Sobolev}. Let $f^{\star} (x) =  \cos ( 2 \pi x ) - 1$, $\nu_0 = \cU [0, 1/2]$ and $\nu_1 = \cU [1/2, 1]$. For any even integer $n \geq 2$, we follow the procedure below:
\begin{itemize}
	\item Let $B = n^{1/3}$, $\source = \frac{B}{B+1} \nu_0 + \frac{1}{B+1} \nu_1$ and $\target = \frac{1}{B+1} \nu_0 + \frac{B}{B+1} \nu_1$. Generate $n$ i.i.d.~source samples $\{ ( x_i, y_i) \}_{i=1}^n$ with $ x_i \sim \source$ and $y_i | x_i \sim N (  f^{\star} (  x_i )  , 1 )$. Generate $n_0 = n$ i.i.d.~unlabeled target samples $\cD_0 = \{ x_{0i} \}_{i=1}^{n_0}$ from $\target$.
	\item Run Algorithm \ref{alg} with $n_1 = n/2$, $\Lambda = \{  2^{k} / (10 n):~ k = 0, 1,  \cdots ,   \lceil \log_{2} ( 10  n ) \rceil   \}$, and $\widetilde{\lambda} = 1 / (10 n)$ to get an estimate $\widehat{f}$.
\end{itemize}

Our method selects a candidate model that minimizes the risk estimate $$f \mapsto \frac{1}{|\cD_0|} \sum_{ x \in \cD_0 } | f(x ) - \widetilde{f} (x) |^2 $$ based on pseudo-labels.
We compare it with an oracle selection approach whose risk estimate $\frac{1}{|\cD_0|} \sum_{ x \in \cD_0 } | f(x ) - f^{\star} (x) |^2 $ is constructed from noiseless labels, and a na\"{i}ve approach that uses the empirical risk $\frac{1}{|\cD_2|} \sum_{ (x, y) \in \cD_2 } | f(x) - y |^2 $ on $\cD_2$. 
The latter is the standard hold-out validation method that ignores the covariate shift.

\begin{figure}[t]
	\centering
	\includegraphics[width=0.5\linewidth]{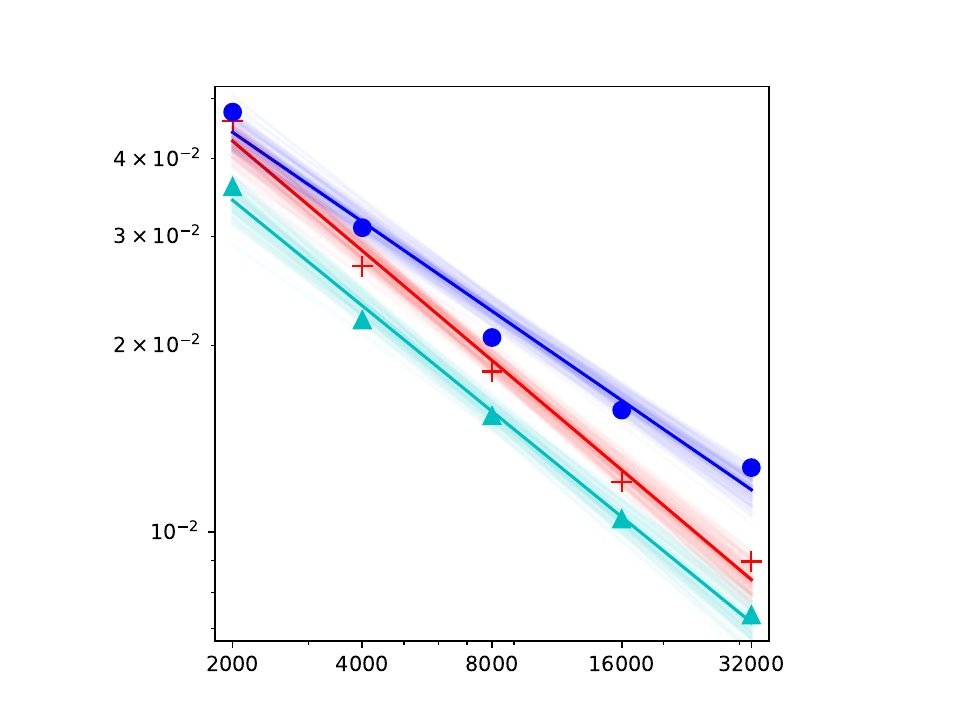}   
	\caption{Comparisons of three approaches on a log-log plot. $x$-axis: $n$. $y$-axis: mean squared error. Red crosses: pseudo-labeling. Cyan triangles: the oracle method. Blue circles: the na\"{i}ve method.}\label{fig-krr}
\end{figure}

For every $n \in \{ 2000, 4000, 8000, 16000, 32000 \}$, we conduct 100 independent runs of the experiment. \Cref{fig-krr} shows the performances of three model selection approaches: pseudo-labeling (red crosses), the oracle method (cyan triangles) and the na\"{i}ve method (blue circles). The $x$- and $y$-coordinates of each point are the sample size $n$ and the average of estimated excess risks over 100 independent runs, respectively. 
Since the points of each method are lined up on the log-log scale, the excess risk decays like $O( n^{-\alpha} )$ for some $\alpha > 0$. 
Linear regression yields the solid lines in \Cref{fig-krr} and estimates the $\alpha$'s for the pseudo-labeling, oracle, and na\"{i}ve methods as $0.587~(0.029)$, $
0.565~(0.034)$ and $0.478~(0.030)$, respectively. The numbers in the parentheses are standard errors estimated by cluster bootstrap \citep{CGM08}, which resamples the $100$ data points corresponding to each $n$ for $10000$ times.
Our proposed method does not differ significantly from the oracle one in terms of the error exponent. It is significantly better than the na\"{i}ve method, as the source data does not have a good coverage of the test cases. 
In addition, we also plot the linear fits of 100 bootstrap replicates to illustrate the stability, resulting in the thin shaded areas around the solid lines in \Cref{fig-krr}.

\section{Proofs of \Cref{thm-excess-risk-0,thm-oracle}}\label{sec-proofs-main}

We now prove our main results: the excess risk bounds (\Cref{thm-excess-risk-0}) and the oracle inequality for model selection (\Cref{thm-oracle}). We will first conduct a fixed-design analysis in \Cref{sec-proof-fixed} by treating the covariates and the data split as deterministic.
Based on that, we will analyze the selected model in \Cref{sec-selected}. Finally, we will prove \Cref{thm-oracle} in \Cref{sec-thm-oracle-proof}, followed by \Cref{thm-excess-risk-0} in \Cref{sec-thm-excess-risk-0-proof}.

\subsection{Fixed-design analysis}\label{sec-proof-fixed}

To begin with, we tease out the impact of random errors $\{ \varepsilon_i \}_{i=1}^n$ on our estimator $\widehat{f}$ by conditioning on the covariates $\{ \bx_i \}_{i=1}^{n}$, $\{ \bx_{0i} \}_{i=1}^{n_0}$ and the split $\cD_1 \cup \cD_2$. 
As a result, we will study the in-sample risk $\empiricalrisk$ defined in \eqref{eqn-defn-risk-in} and provide excess risk bounds as well as oracle inequalities.

To facilitate the analysis, we introduce some notations.
When $\cX = \RR^d$ for some $d < \infty$ and $K(\bz, \bw) = \bz^{\top} \bw$, $\phi$ is the identity mapping. We can construct design matrices $\bX = (\bx_1,\cdots, \bx_n)^{\top} \in \RR^{n \times d}$, $\bX_0 = (\bx_{01},\cdots, \bx_{0 n_0})^{\top} \in \RR^{n_0 \times d}$ and the response vector $\by = (y_1 , \cdots , y_n)^{\top} \in \RR^n$. Define the index set $\cT = \{ i \in [n] :~ (\bx_i, y_i) \in \cD_1  \}$ of the data for training candidate models.
In addition, let $n_1 = |\cT|$ and $n_2 = n - n_1$. Denote by $\bX_1 \in \RR^{ |\cT| \times d} $ and $\bX_2 \in \RR^{ (n - |\cT|) \times d }$ the sub-matrices of $\bX$ by selecting rows whose indices belong to $\cT$ and $[n] \backslash \cT$, respectively. Similarly, denote by $\by_1 \in \RR^{ |\cT| }$ and $\by_2 \in \RR^{ n - |\cT| }$ the sub-vectors of $\by$ induced by those index sets.

The design matrices can be generalized to operators when $( \cX,  K )$ defines a general RKHS. In that case, $\bX$ and $\bX_0$ are bounded linear operators defined through
\begin{align*}
& \bX:~ \HH \to \RR^n,~~ \btheta \mapsto ( \langle \phi( \bx_1 ) , \btheta \rangle , ~ \cdots, ~
\langle  \phi( \bx_n ) , \btheta \rangle
)^{\top} ,\\
& \bX_0:~ \HH \to \RR^{n_0} ,~~ \btheta \mapsto ( \langle \phi( \bx_{01} ) , \btheta \rangle , ~ \cdots, ~
\langle  \phi( \bx_{0 n_0} ) , \btheta \rangle
)^{\top} .
\end{align*}
We can define $\bX_1$ and $\bX_2$ similarly. 
With slight abuse of notation, we use $\bX^{\top}$ to refer to the adjoint of $\bX$, i.e. $\bX^{\top}:~ \RR^n \to \HH$, $\bu \mapsto \sum_{i=1}^{n} u_i \phi (\bx_i)$. Similarly, we can define $\bX_0^{\top}$, $\bX_1^{\top}$ and $\bX_2^{\top}$.
Let $\widehat\bSigma_0 = \frac{1}{n_0} \sum_{i=1}^{n_0} \phi (\bx_{0i}) \otimes \phi ( \bx_{0i} )$, $\widehat\bSigma_1 = \frac{1}{n_1} \sum_{i \in \cT} \phi (\bx_{i}) \otimes \phi ( \bx_{i} )$ and $\widehat\bSigma_2 = \frac{1}{n_2} \sum_{i \in [n] \backslash \cT} \phi (\bx_{i}) \otimes \phi ( \bx_{i} )$. We have $\widehat\bSigma_j = n_j^{-1} \bX_j^{\top} \bX_j$ for all $j \in \{  0 , 1 , 2  \}$. Moreover, $\widehat{f}_{\lambda} ( \cdot ) = \langle \phi (\cdot ) , \widehat{\btheta}_{\lambda} \rangle $ holds for $ \widehat\btheta_{\lambda} = (
\widehat\bSigma_1 + \lambda \bI)^{-1} ( n_1^{-1} \bX_1^{\top} \by_1 )$, and 
$\widetilde{f} ( \cdot ) = \langle \phi (\cdot ) , \widetilde{\btheta}  \rangle $ holds for $ \widetilde\btheta  = ( \widehat\bSigma_2 + \widetilde\lambda \bI)^{-1} (n_2^{-1} \bX_2^{\top} \by_2 ) $.

Below we present a lemma on kernel ridge regression that plays a major role throughout the analysis. See Appendix A.1 
for its proof.

\begin{lemma}[Fixed-design KRR]\label{lem-error-bounds}
	Let Assumptions \ref{assumption-linear-model-rkhs} and \ref{assumption-noise} hold, except that all covariates are deterministic and the data split is fixed.
	Choose any $\lambda > 0$ and define $\bar\btheta_{\lambda} = \EE  \widehat\btheta_{\lambda} $.
		We have $\| \bar{\btheta}_{\lambda} - \btheta^{\star} \|_{\HH} \leq R$, $\| \widehat{\btheta}_{\lambda} - \bar{\btheta}_{\lambda} \|_{\psi_2} \leq   \sigma / \sqrt{n_1 \lambda} $,
\begin{align*}
&  \| \bX_0 ( \bar\btheta_{\lambda} - \btheta^{\star} ) \|_2^2 
\leq \lambda^2 R^2 \| \bX_0 (\widehat{\bSigma}_1 + \lambda \bI)^{-2} \bX_0^{\top} \| , \\
&  \|
	\bX_0 (  \widehat\btheta_{\lambda} - \bar\btheta_{\lambda}  ) 
\|_{\psi_2}^2
\leq \frac{ \sigma^2}{ n_1} \|  \bX_0 (\widehat{\bSigma}_1 + \lambda \bI)^{-1} \widehat{\bSigma}_1 (\widehat{\bSigma}_1 + \lambda \bI)^{-1}  \bX_0^{\top}  \| .
\end{align*}
On the other hand, choose any positive semi-definite trace-class linear operator $\bQ:~\HH \to \HH$ and define $\widehat\bS_{\lambda} =  (\widehat{\bSigma}_1 + \lambda \bI)^{-1/2} \bQ (\widehat{\bSigma}_1 + \lambda \bI)^{-1/2} $. There is a universal constant $C$ such that
\[
\PP \bigg(
\| \widehat\btheta_{\lambda} - \btheta^{\star} \|_{\bQ}^2
\leq 
2 \lambda R^2 \| \widehat\bS_{\lambda} \|  + \frac{ C \sigma^2 \Tr( \widehat\bS_{\lambda} ) \log(1/\delta) }{n_1} 
\bigg)
\geq 1 - \delta
,\qquad \forall  \delta \in (0, 1/e].
\]
\end{lemma}

\Cref{lem-error-bounds} yields error bounds that involve empirical second-moment operators. The following lemma relates them to the population versions, whose proof is in Appendix A.2. 

\begin{lemma}[Concentration of second-moment operators]\label{lem-cov-intermediate}
	Choose any $\delta \in (0, 1/e]$ and let Assumption \ref{assumption-linear-model-rkhs} hold. Define $\bS_{\lambda} =   ( \bSigma + \lambda \bI)^{-1/2} \bSigma_0 ( \bSigma +  \lambda \bI)^{-1/2}$ for $\lambda > 0$.
	\begin{enumerate}
		\item Let Assumption \ref{assumption-covariates-bounded} hold and define $\mu_j = \frac{ 16 M^2 \log ( 14 n_j / \delta)  }{ n_j }$ for $j \in \{ 0, 1, 2 \}$. We have
		\begin{align*}
			& \PP \bigg(
			\widehat\bSigma_0 \preceq \frac{3}{2} 
			\bSigma_0 +\frac{\mu_0}{2}  \bI
			\bigg) \geq 1 - \delta 
			, \\
			& \PP \bigg(
			\widehat\bSigma_j + \lambda \bI \succeq	\frac{\min \{ \lambda / \mu_j, 1 \} }{2} ( \bSigma + \lambda \bI ) , ~ \forall \lambda \geq 0
			\bigg) \geq 1 - \delta,\qquad j \in \{ 1, 2 \} .
		\end{align*}
		
		\item Let Assumption \ref{assumption-covariates-subg} hold. Define $\mu_j = \frac{ \Tr (\bSigma) \log (1/ \delta)   }{ n_j } $ for $j \in \{ 1, 2 \}$ and $\mu_0 = \frac{ \Tr (\bSigma_0) \log (1 / \delta)   }{ n_0 } $. There exists a constant $C \geq 1$ determined by $\kappa$ such that
		\begin{align*}
			& \PP \bigg(
			\widehat\bSigma_0 \preceq \frac{3}{2} 
			\bSigma_0 + C \mu_0 \bI
			\bigg) \geq 1 - \delta 
			, \\
			& \PP \bigg(
			\widehat\bSigma_j + \lambda \bI \succeq	C^{-1} \min \{ \lambda / \mu_j, 1 \}  ( \bSigma + \lambda \bI ) , ~ \forall \lambda \geq 0
			\bigg) \geq 1 - \delta,\qquad j \in \{ 1, 2 \} .
		\end{align*}
	\end{enumerate}
\end{lemma}

\subsection{Analysis of the selected candidate}\label{sec-selected}

From \Cref{lem-error-bounds,lem-cov-intermediate}, we obtain an excess risk bound for the best candidate in $\{ \widehat{f}_{\lambda} \}_{\lambda \in \Lambda}$.

\begin{lemma}[Excess risk of the best candidate]\label{lem-excess-risk}
	Let Assumptions \ref{assumption-linear-model-rkhs} and \ref{assumption-noise} hold. In addition, suppose that either Assumption \ref{assumption-covariates-bounded} or \ref{assumption-covariates-subg} hold. Choose any $\delta \in (0, 1/e]$. For any $\lambda > 0$, define $\bS_{\lambda} =   ( \bSigma + \lambda \bI)^{-1/2} \bSigma_0 ( \bSigma +  \lambda \bI)^{-1/2}$ and
	\begin{align*}
		& 	\cE (\lambda , \delta ) = 
		\max \bigg\{ 
		\frac{M^2 \log(n_1 /\delta) }{n_1 \lambda}  , 1  \bigg\} 
		\bigg(
		\lambda R^2 \| \bS_{\lambda} \|  + \frac{ \sigma^2 \Tr( \bS_{\lambda} ) \log(|\Lambda|/\delta) }{n_1} 
		\bigg) .
	\end{align*}
	There exists a universal constant $C$ such that with probability at least $1 - \delta$,
	\begin{align*}
		\min_{ \lambda \in \Lambda}  	\risk ( \widehat{f}_{\lambda} ) - \risk ( f^{\star} )
		\leq C \min_{ \lambda \in \Lambda }	\cE (\lambda , \delta ) .
	\end{align*}
\end{lemma}

\begin{proof}[\bf Proof of \Cref{lem-excess-risk}]	
	Choose any $\delta \in (0, 1/e] $ and $\gamma \in (0, 1 ]$. To analyze $	\risk ( \widehat{f}_{\lambda} ) - \risk ( f^{\star} ) $, we apply \Cref{lem-error-bounds} (with $\bQ = \bSigma_0$) and union bounds to get
	\[
	\PP \bigg(
	\risk ( \widehat{f}_{\lambda} ) - \risk ( f^{\star} ) 
	\leq 
	2 \lambda R^2 \| \widehat\bT_{\lambda} \|  + \frac{ C_1 \sigma^2 \Tr( \widehat\bT_{\lambda} ) \log(2 |\Lambda|/\delta) }{n_1} ,~\forall \lambda \in \Lambda
	\bigg)
	\geq 1 - \delta/2 ,
	\]
	where $\widehat\bT_{\lambda} =  (\widehat{\bSigma}_1 + \lambda \bI)^{-1/2} \bSigma_0 (\widehat{\bSigma}_1 + \lambda \bI)^{-1/2} $, and $C_1$ is a universal constant.
	
	Define $\mu_1 = \frac{ 16 M^2 \log ( 28 n_1 / \delta)  }{ n_1 }$. \Cref{lem-cov-intermediate} implies that with probability at least $1 - \delta/2$,
	\[
	(\widehat\bSigma_1 + \lambda \bI)^{-1} \preceq	2 \max \{ \mu_1 / \lambda  , 1  \}  ( \bSigma + \lambda \bI )^{-1} , ~ \forall \lambda > 0.
	\]
	On that event,
	\begin{align*}
		\| \widehat\bT_{\lambda} \| & = \|  \bSigma_0^{1/2} (\widehat{\bSigma}_1 + \lambda \bI)^{-1} \bSigma_0^{1/2} \| 
		\leq 2 \max \{ \mu_1 / \lambda  , 1  \}
		\|  \bSigma_0^{1/2} (\bSigma + \lambda \bI)^{-1} \bSigma_0^{1/2} \| \\
		& 
		\lesssim \max \bigg\{ 
		\frac{M^2 \log(n_1 /\delta) }{n_1 \lambda}  , 1  \bigg\} 
		\|  \bS_{\lambda} \| 
	\end{align*}
	and similarly,
	\[
	\Tr (\widehat\bT_{\lambda} ) \lesssim \max \bigg\{ 
	\frac{M^2 \log(n_1 /\delta) }{n_1 \lambda}  , 1  \bigg\}
	\Tr (   \bS_{\lambda} ),
	\]
	where $\lesssim$'s only hide universal constants. Hence, there exists a universal constant $C$ such that
	\begin{align*}
		\PP \bigg(
		\risk ( \widehat{f}_{\lambda} ) - \risk ( f^{\star} ) 
		\leq C 	\cE (\lambda , \delta ) 
		,~\forall \lambda \in \Lambda
		\bigg)
		\geq 1 - \delta.
	\end{align*}
	On that event, we have the desired excess risk bound.
\end{proof}

We now come to our selected model $\widehat{f}$. To relate it to the best candidate, we first obtain from \Cref{lem-error-bounds} and \Cref{thm-bias-variance} an oracle inequality of the form \eqref{eqn-oracle-in-sample} on the in-sample risk. 
The proof is deferred to Appendix A.3. 

\begin{lemma}[Oracle inequality for in-sample risk]\label{lem-in-sample}
	Let Assumptions \ref{assumption-linear-model-rkhs} and \ref{assumption-noise} hold, except that all covariates are deterministic and the data split is fixed. Suppose that $\widetilde\lambda > 0$.
	For any $\lambda > 0$ and $\delta \in (0, 1/e]$, define 
\[
\widehat\xi  ( \widetilde\lambda , \delta ) = 
\log(|\Lambda|/\delta) 
\max \{ \sigma^2 / n_2 ,    R^2  \widetilde\lambda \}
\|  (\widehat{\bSigma}_2 + \widetilde\lambda \bI)^{-1/2} \widehat\bSigma_0 (\widehat{\bSigma}_2 + \widetilde\lambda \bI)^{-1/2}  \|.
\]
	There exists a universal constant $C$ such that with probability at least $1 - \delta$, we have
	\begin{align*}
&
\sqrt{	\empiricalrisk  (\widehat{f}) - \empiricalrisk  (f^{\star}) }
	\leq   \min_{ \lambda \in \Lambda  } \sqrt{ \empiricalrisk  ( \widehat{f}_{\lambda} ) - \empiricalrisk  ( f^{\star} ) } +   C \sqrt{ \widehat\xi  ( \widetilde{\lambda} , \delta ) } .
	\end{align*}
\end{lemma}

Next, we bridge the in-sample risk $\empiricalrisk$ and the out-of-sample risk $\risk$ by incorporating the randomness of the unlabeled target data $\{ \bx_{0i} \}_{i=1}^{n_0}$. See Appendix A.4 
for the proof.

\begin{lemma}[Bridging in-sample and out-of-sample risks]\label{lem-sandwich}
Let Assumptions \ref{assumption-linear-model-rkhs} and \ref{assumption-noise} hold, except that the source covariates $\{ \bx_{i} \}_{i=1}^{n}$ are deterministic and the data split is fixed.
	Choose any $\delta \in ( 0  , 1/e ]$ and $\gamma \in (0 , 1 ]$. 
	\begin{enumerate}
		\item Suppose that Assumption \ref{assumption-covariates-bounded} holds. Define $\bar{R} = \max \{ R, \sigma /  \sqrt{n_1 \min_{ \lambda \in \Lambda  } \lambda} \} $. There exists a universal constant $C$ such that with probability at least $1 - \delta$, 
		\begin{align*}
		\max_{\lambda \in \Lambda}
		\Big|
		\sqrt{ \risk ( \widehat{f}_{\lambda} ) - \risk ( f^{\star} ) } - 
		\sqrt{	\empiricalrisk ( \widehat{f}_{\lambda} ) - \empiricalrisk ( f^{\star} ) } 
		\Big|
		&		\leq C M \bar{R} 
		\sqrt{
			\frac{
				\log ( n_0 |\Lambda| / \delta ) \log (|\Lambda| / \delta)
			}{n_0}
		}.
		\end{align*}

		\item Suppose that Assumption \ref{assumption-covariates-subg} holds. There exists a universal constant $C$ such that when $n_0 \geq ( C \kappa / \gamma )^2 \log ( 2 |\Lambda| / \delta )$, the following inequality holds with probability at least $1 - \delta$:
		\begin{align*}
		( 1 - \gamma) [ \risk ( \widehat{f}_{\lambda} ) - \risk ( f^{\star} ) ]
		\leq  	\empiricalrisk ( \widehat{f}_{\lambda} ) - \empiricalrisk ( f^{\star} )
		\leq ( 1 +  \gamma) [ \risk ( \widehat{f}_{\lambda} ) - \risk ( f^{\star} ) ] , \qquad \forall \lambda \in \Lambda.
		\end{align*}
	\end{enumerate}
\end{lemma}

From \Cref{lem-in-sample} and \Cref{lem-sandwich}, we immediately get the following oracle inequalities for out-of-sample risks. The proof is omitted.

\begin{corollary}[Oracle inequalities for out-of-sample risks]\label{cor-oos}
Let Assumptions \ref{assumption-linear-model-rkhs} and \ref{assumption-noise} hold, except that the covariates are deterministic and the data split is fixed. Let $\widehat\xi$ be defined as in \Cref{lem-in-sample}.
	Choose any $\delta \in ( 0  , 1/e ]$ and $\gamma \in (0 , 1 ]$. 
	\begin{enumerate}
		\item Suppose that Assumption \ref{assumption-covariates-bounded} holds. Define $\bar{R} = \max \{ R, \sigma /  \sqrt{n_1 \min_{ \lambda \in \Lambda  } \lambda} \} $. There exists a universal constant $C$ such that with probability at least $1 - 2 \delta$, 
		\begin{align*}
		\sqrt{	\risk  (\widehat{f}) - \risk  (f^{\star}) }
		& \leq   \min_{ \lambda \in \Lambda  } \sqrt{ \risk  ( \widehat{f}_{\lambda} ) - \risk  ( f^{\star} ) } \\
		&~~~~ +   C
		\bigg(
		 \sqrt{ \widehat\xi  ( \widetilde{\lambda} , \delta ) } 
		+ M \bar{R} 
		\sqrt{
			\frac{
				\log ( n_0 |\Lambda| / \delta ) \log (|\Lambda| / \delta)
			}{n_0}
		}
	\bigg)
		.
		\end{align*}
		
		\item Suppose that Assumption \ref{assumption-covariates-subg} holds. There exist universal constants $C_1$ and $C_2$ such that when $n_0 \geq ( C_1 \kappa / \gamma )^2 \log ( 2 |\Lambda| / \delta )$, the following inequality holds with probability at least $1 - 2 \delta$:
	\begin{align*}
&
 \sqrt{	\risk  (\widehat{f}) - \risk  (f^{\star}) }
\leq (  1 + \gamma ) \min_{ \lambda \in \Lambda  } \sqrt{ \risk  ( \widehat{f}_{\lambda} ) - \risk  ( f^{\star} ) } +  C_2
\sqrt{ \widehat\xi  ( \widetilde{\lambda} , \delta ) } 
.
\end{align*}
	\end{enumerate}
\end{corollary}

The above result can be viewed as an empirical version of \Cref{thm-oracle}, since the additive terms in the oracle inequalities are functions of the random covariates. Equipped with \Cref{cor-oos} and \Cref{lem-cov-intermediate}, we are ready to tackle \Cref{thm-oracle}. After finishing that, we will derive \Cref{thm-excess-risk-0} from \Cref{thm-oracle} (oracle inequality) and \Cref{lem-excess-risk} (excess risk bound for the best candidate).

\subsection{Proof of \Cref{thm-oracle}}\label{sec-thm-oracle-proof}

In this proof, we use $\lesssim$ to hide polylogarithmic factors. Suppose that Assumption \ref{assumption-covariates-bounded} holds. \Cref{cor-oos} asserts that with probability at least $1 - \delta/2$, 
\begin{align}
\sqrt{	\risk  (\widehat{f}) - \risk  (f^{\star}) }
-  \min_{ \lambda \in \Lambda  } \sqrt{ \risk  ( \widehat{f}_{\lambda} ) - \risk  ( f^{\star} ) } 
& \lesssim
\sqrt{ \widehat\xi  ( \widetilde{\lambda} , \delta / 4 ) } 
+ 
	\frac{
MR
	}{\sqrt{n_0}
}
,
\label{eqn-thm-oracle-01}
\end{align}
We now work on $\widehat\xi  ( \widetilde{\lambda} , \delta )$. Define $\mu_j = \frac{ 16 M^2 \log ( 56 n_j / \delta)  }{ n_j }$ for $j \in \{ 0, 2 \}$. Part 1 of \Cref{lem-cov-intermediate} implies that with probability at least $1 - \delta/2$, we have
\begin{align}
\widehat\bSigma_0 \preceq \frac{3}{2} 
\bSigma_0 +\frac{\mu_0}{2}  \bI
\qquad\text{and}\qquad
\widehat\bSigma_2 + \widetilde\lambda \bI \succeq	\frac{\min \{ \widetilde\lambda / \mu_2, 1 \} }{2} ( \bSigma + \widetilde\lambda \bI ) .
\label{eqn-thm-oracle-0}
\end{align}
Since $\widetilde\lambda / \mu_2  \gtrsim \frac{1}{\log (n / \delta) }$, on the event \eqref{eqn-thm-oracle-0}, we have
\begin{align*}
& \|   (\widehat{\bSigma}_2 + \widetilde\lambda \bI)^{-1/2} \widehat\bSigma_0 (\widehat{\bSigma}_2 +  \widetilde\lambda \bI)^{-1/2} \| 
 = \| \widehat\bSigma_0^{1/2}  (\widehat{\bSigma}_2 + \widetilde\lambda \bI)^{-1} \widehat\bSigma_0^{1/2} \| \\
& \lesssim  \|  \widehat\bSigma_0^{1/2}  (\bSigma + \widetilde\lambda \bI)^{-1} \widehat\bSigma_0^{1/2}  \| 
= \|   (\bSigma + \widetilde\lambda \bI)^{-1/2} \widehat\bSigma_0 
 (\bSigma + \widetilde\lambda \bI)^{-1/2} 
 \| \\
& \lesssim  \|   (\bSigma + \widetilde\lambda \bI)^{-1/2} 
 (
\bSigma_0 + \mu_0 \bI
 )
(\bSigma + \widetilde\lambda \bI)^{-1/2} 
 \| \\
& =   \|  (\bSigma + \widetilde\lambda \bI)^{-1/2} 
\bSigma_0 
(\bSigma + \widetilde\lambda \bI)^{-1/2}  \| + \mu_0 / \widetilde\lambda   \\
& = \|  \bSigma_0^{1/2} (\bSigma + \widetilde\lambda \bI)^{-1} \bSigma_0^{1/2}  \| + \mu_0 / \widetilde\lambda 
 = n / \effectivesamplesize + \mu_0 / \widetilde\lambda
\lesssim n \bigg( \frac{1}{\effectivesamplesize} + \frac{1}{n_0} \bigg)
\end{align*}
and then, $\widehat\xi  ( \widetilde\lambda , \delta/ 4 )  \lesssim R^2 \mu^2 ( \effectivesamplesize^{-1} + n_0^{-1} )$. Plugging this into \eqref{eqn-thm-oracle-01} finishes Part 1.

The proof of Part 2 is similar to the above. We will do it briefly. Let Assumption \ref{assumption-covariates-subg} hold. Choose any $\delta,\gamma \in ( 0  , 1 ]$. According to Part 2 of \Cref{cor-oos}, there exist universal constants $C$ and $C_1$ such that when $n_0 \geq ( C \kappa / \gamma )^2 \log ( |\Lambda| / \delta )$, the following inequality holds with probability at least $1 -  \delta/2$:
\begin{align}
\sqrt{	\risk  (\widehat{f}) - \risk  (f^{\star}) }
\leq (  1 + \gamma ) \min_{ \lambda \in \Lambda  } \sqrt{ \risk  ( \widehat{f}_{\lambda} ) - \risk  ( f^{\star} ) } +  C_1
\sqrt{ \widehat\xi  ( \widetilde{\lambda} , \delta / 4 ) } 
.
\label{eqn-thm-oracle-02}
\end{align}
Using Part 2 of \Cref{lem-cov-intermediate}, it is easy to show that $\widehat\xi  ( \widetilde\lambda , \delta/ 4 ) \lesssim R^2 \mu^2 ( \effectivesamplesize^{-1} + n_0^{-1} )$ holds with probability at least $1-\delta/2$, where $\lesssim$ hides a polylogarithmic factor determined by $\kappa$. 
The proof is completed by combining that and \eqref{eqn-thm-oracle-02}.

\subsection{Proof of \Cref{thm-excess-risk-0}}\label{sec-thm-excess-risk-0-proof}

Throughout the proof, we use $\lesssim$ to hide polylogarithmic factors. 
Define $\conversionrate = \effectivesamplesize / n$. 
Sort the elements in $\Lambda$ in ascending order and get $\{ \lambda_j \}_{j=1}^m$.
Let $\cE (\lambda , \delta ) $ be defined as in \Cref{lem-excess-risk}.
Since $\effectivesamplesize \bSigma_0 \preceq  n \bSigma + \mu^2 \bI $, the followings hold for any $\lambda \geq \mu^2 / n$: $\bSigma + \lambda \bI \succeq \conversionrate \bSigma_0$, $\| \bS_{\lambda} \| \leq \conversionrate^{-1}$, and
\begin{align*}
& 2 (\bSigma + \lambda \bI) \succeq [ \bSigma + ( \mu^2 / n) \bI ] + \lambda \bI
\succeq \conversionrate \bSigma_0 + \lambda \bI
= \conversionrate ( \bSigma_0 + \conversionrate^{-1} \lambda \bI ) ,\\
& \Tr ( \bS_{\lambda} ) \preceq 2 \conversionrate^{-1} \Tr [ ( \bSigma_0 + \conversionrate^{-1} \lambda \bI )^{-1} \bSigma_0 ].
\end{align*}
Using the facts that $n_1 = n_2 = n/2$, $ \widetilde\lambda = \lambda_1 = \mu^2 / n$ and $|\Lambda| = \lceil \log_2 n \rceil + 1$, we get
\begin{align*}
& \cE (\lambda , \delta ) 
\lesssim   \conversionrate^{-1} \lambda R^2 + \frac{ \sigma^2 \Tr [ ( \bSigma_0 + \conversionrate^{-1} \lambda \bI )^{-1} \bSigma_0 ]  }{ \effectivesamplesize }  
\end{align*}
for all $\lambda \geq \mu^2 / n$. Then, \Cref{lem-excess-risk} implies that with probability at least $1 - \delta$,
\begin{align*}
\min_{ \lambda \in \Lambda}  	\risk ( \widehat{f}_{\lambda} ) - \risk ( f^{\star} )
\lesssim \effectivesamplesize^{-1}  \min_{ \lambda \in \Lambda }
\Big\{
n R^2  \lambda +  \sigma^2 \Tr [ ( \bSigma_0 + \conversionrate^{-1} \lambda \bI )^{-1} \bSigma_0 ]  
\Big\}
 .
\end{align*}

Meanwhile, \Cref{thm-oracle} with $\gamma = 1$ implies that with probability at least $1 - \delta$,
\begin{align*}
	\risk ( \widehat{f} ) - \risk ( f^{\star} )
	& \lesssim \min_{ \lambda \in \Lambda  } \{  \risk ( \widehat{f}_{\lambda} ) - \risk ( f^{\star} ) \} 
	+  R^2 \mu^2   (
\effectivesamplesize^{-1} +  n_0^{-1}
	 )
	.
\end{align*}
By applying union bounds and adjusting $\delta$, we obtain that with probability at least $1 - \delta$,
\begin{align*}
	\risk ( \widehat{f}  ) - \risk ( f^{\star} )
&	\lesssim  
\effectivesamplesize^{-1}  \min_{ \lambda \in \Lambda }
\Big\{
\underbrace{
n R^2  \lambda
}_{A(\lambda)}
+ 
\underbrace{
 \sigma^2 \Tr [ ( \bSigma_0 + \conversionrate^{-1} \lambda \bI )^{-1} \bSigma_0 ]  
}_{B(\lambda)}
\Big\}
	+ R^2 \mu^2  ( \effectivesamplesize^{-1} + n_0^{-1}
)
	.
\end{align*}
We will finish the proof by establishing the following connection between the finite grid and the infinite interval:
\begin{align}
	\min_{ \lambda \in \Lambda } \{ A (\lambda) + B (\lambda) \} \lesssim  \inf_{ \lambda > 0 } \{ A (\lambda) + B (\lambda) \} +  R^2 \mu^2  .
\label{eqn-grid-interval}
\end{align}

The functions $A$ and $B$ are increasing and decreasing, respectively. From $\lambda_1 = \mu^2 / n$ we get $A(\lambda_1) = R^2 \mu^2 $ and
\begin{align}
	\inf_{ 0 < \lambda \leq \lambda_1 } \{  A (\lambda) + B (\lambda) \} +  R^2 \mu^2  
	\geq B(\lambda_1) + A(\lambda_1).
	\label{eqn-lem-excess-risk-1}
\end{align}
From $\mu^2 \leq \lambda_m \leq 2 \mu^2$ we get
\begin{align*}
	\inf_{\lambda \geq \lambda_m} \{ A(\lambda) + B(\lambda) \}
	& \geq A(\lambda_m) \asymp n R^2 \mu^2 \geq n \sigma^2.
\end{align*}
Meanwhile,
\begin{align*}
	B(\lambda_m)
	& =  \sigma^2 
	\Tr [ ( \bSigma_0 + \conversionrate^{-1} \lambda_m \bI )^{-1} \bSigma_0 ] \leq 
	\sigma^2 \cdot
	\frac{ \Tr (\bSigma_0) }{  \conversionrate^{-1} \lambda_m } 	
	\lesssim  \sigma^2 M^2 / \mu^2 \leq \sigma^2. 
\end{align*}
Therefore, $B(\lambda_m) \lesssim A(\lambda_m) $ and
\begin{align}
	& \inf_{\lambda \geq \lambda_m} \{ A(\lambda) + B(\lambda) \}
	\gtrsim A(\lambda_m) + B(\lambda_m).
	\label{eqn-lem-excess-risk-2}
\end{align}
Next, we bridge the grid $\Lambda$ and the interval $[\lambda_1, \lambda_m]$. For any $\lambda \in [ \lambda_j, \lambda_{j+1} ]$, we have $A( \lambda) \geq A(\lambda_j)$ and $ \bSigma_0 +  \conversionrate^{-1}  \lambda \bI \preceq (\lambda / \lambda_j) ( \bSigma_0 +  \conversionrate^{-1} \lambda_j  \bI)$. Hence,
\[
A(\lambda) + B(\lambda) \geq A(\lambda_j) + \frac{\lambda_j}{\lambda} B(\lambda_j) \geq \frac{ A(\lambda_j) + B(\lambda_j) }{2}.
\]
We get 
\begin{align}
	\inf_{\lambda_1 \leq \lambda \leq \lambda_m} \{ A(\lambda) + B(\lambda) \}
	\geq \frac{1}{2} \min_{\lambda \in \Lambda} \{  A(\lambda) + B(\lambda) \}.
	\label{eqn-lem-excess-risk-3}
\end{align}
The claim \eqref{eqn-grid-interval} follows from \eqref{eqn-lem-excess-risk-1}, \eqref{eqn-lem-excess-risk-2} and \eqref{eqn-lem-excess-risk-3}.

\section{Discussion}\label{sec-discussion}

We developed a simple approach for kernel ridge regression under covariate shift. 
A key component is an imputation model that generates pseudo-labels for model selection. The final estimator is shown to adapt to the covariate shift in many common scenarios.
We hope that our work can spur further research toward a systematic solution to covariate shift problems.
Lots of open questions are worth studying. 
For instance, it would be interesting to connect our notion of effective sample size to various discrepancy measures between probability distributions, such as divergences or Wasserstein metrics. 
Our two-stage approach with separate uses of source and target data may need modification in certain scenarios, as the candidate models may better be trained using some weighted loss based on the target data.
Another direction is statistical learning over general function classes. The current paper only considered kernel ridge regression with well-specified model. Analytical expressions of the fitted models greatly facilitated our theoretical analysis. In future work, we would like to go beyond that and develop more versatile methods.

\section*{Acknowledgement}
Kaizheng Wang's research is supported by an NSF grant DMS-2210907, a Data Science Institute seed grant SF-181 and a startup grant at Columbia University. We acknowledge computing resources from Columbia University's Shared Research Computing Facility project, which is supported by NIH Research Facility Improvement Grant 1G20RR030893-01, and associated funds from the New York State Empire State Development, Division of Science Technology and Innovation (NYSTAR) Contract C090171, both awarded April 15, 2010. 

\newpage 
\appendix

\section{Proofs of Section 6}

\subsection{Proof of \Cref{lem-error-bounds}}\label{sec-lem-error-bounds-proof}

Define $\bvarepsilon_1 = \by_1 - \bX_1 \btheta^{\star}$. Under Assumption \ref{assumption-linear-model-rkhs},
\begin{align}
	& \bar\btheta_{\lambda} = \EE [ (\widehat{\bSigma}_1 + \lambda \bI)^{-1} (n_1^{-1} \bX_1^{\top} \by_1 ) ] = (\widehat{\bSigma}_1 + \lambda \bI)^{-1}  \widehat{\bSigma}_1 \btheta^{\star} , \notag  \\
	& \bar\btheta_{\lambda} - \btheta^{\star} = - \lambda (\widehat{\bSigma}_1 + \lambda \bI)^{-1}   \btheta^{\star} , \label{eqn-lem-error-bounds-1} \\
	& \widehat{\btheta}_{\lambda} - \bar\btheta_{\lambda} = (\widehat{\bSigma}_1 + \lambda \bI)^{-1} (n_1^{-1} \bX_1^{\top} \bvarepsilon_1 ). \label{eqn-lem-error-bounds-2}
\end{align}
On the one hand,
\begin{align}
	\| \bar\btheta_{\lambda} - \btheta^{\star} \|_{\HH} = \lambda \| (\widehat{\bSigma}_1 + \lambda \bI)^{-1}   \btheta^{\star}  \|_{\HH} \leq \| \btheta^{\star} \|_{\HH} \leq R.
	\label{eqn-lem-error-bounds-2.1}
\end{align}
On the other hand, note that $( \widehat{\bSigma}_1 + \lambda \bI)^{-1} (n_1^{-2} \bX_1^{\top} ) :~ \RR^n \to \HH$ is a bounded linear operator whose adjoint is $(n_1^{-2} \bX_1 ) ( \widehat{\bSigma}_1 + \lambda \bI)^{-1} $. Hence,
\begin{align*}
	& \| ( \widehat{\bSigma}_1 + \lambda \bI)^{-1} (n_1^{-1} \bX_1^{\top} ) \|^2 = \| ( \widehat{\bSigma}_1 + \lambda \bI)^{-1} (n_1^{-1} \bX_1^{\top} ) ( n_1^{-1} \bX_1 ) ( \widehat{\bSigma}_1 + \lambda \bI)^{-1} \| \notag \\
	&= n_1^{-1} \| ( \widehat{\bSigma}_1 + \lambda \bI)^{-1}   \widehat{\bSigma}_1 ( \widehat{\bSigma}_1 + \lambda \bI)^{-1} \| \leq (n_1 \lambda)^{-1} . 
\end{align*}
By Assumption \ref{assumption-noise},
\begin{align*}
	\| \widehat\btheta_{\lambda} - \bar\btheta_{\lambda} \|_{\psi_2} 
	= \| (\widehat{\bSigma}_1 + \lambda \bI)^{-1} (n_1^{-1} \bX_1^{\top} ) \bvarepsilon_1  \|_{\psi_2}
	\leq \frac{ \| \bvarepsilon_1  \|_{\psi_2} }{ \sqrt{n_1 \lambda} } 
	\leq  \frac{ \sigma}{ \sqrt{n_1 \lambda} } .
\end{align*}

By \eqref{eqn-lem-error-bounds-1},
\begin{align*}
	\| \bX_0 ( \bar\btheta_{\lambda} - \btheta^{\star} ) \|_2^2 
	& = \lambda^2 \| \bX_0 (\widehat{\bSigma}_1 + \lambda \bI)^{-1}   \btheta^{\star} \|_2^2 
	=\lambda^2 R^2 \| \bX_0 (\widehat{\bSigma}_1 + \lambda \bI)^{-2} \bX_0^{\top} \|
.
\end{align*}	
To study $\|
	n_0^{-1/2}		\bX_0 (  \widehat\btheta_{\lambda} -  \bar\btheta_{\lambda}  ) 
	\|_{\psi_2} $, define $\bA = \bX_0 (\widehat{\bSigma}_1 + \lambda \bI)^{-1} n_1^{-1} \bX_1^{\top} \in \RR^{n\times n}$. By \eqref{eqn-lem-error-bounds-2}, we have $\bX_0 ( \widehat{\btheta}_{\lambda} - \bar\btheta_{\lambda} ) = \bA \bvarepsilon_1 $. By Assumption \ref{assumption-noise},
\begin{align*}
	\| 
	\bX_0 ( \widehat{\btheta}_{\lambda} - \bar\btheta_{\lambda} ) 
	\|_{\psi_2}
	\leq \| \bA \|_2 \| \bvarepsilon_1 \|_{\psi_2}
	\leq   \sigma  \| \bA \|_2 .
\end{align*}
Note that $\bA \bA^{\top} = n_1^{-1} \bX_0 (\widehat{\bSigma}_1 + \lambda \bI)^{-1} \widehat{\bSigma}_1 (\widehat{\bSigma}_1 + \lambda \bI)^{-1}  \bX_0^{\top} $. Consequently,
\begin{align*}
\| 
\bX_0 ( \widehat{\btheta}_{\lambda} - \bar\btheta_{\lambda} ) 
\|_{\psi_2}^2 
& \leq  \frac{ \sigma^2}{n_1} \|  \bX_0 (\widehat{\bSigma}_1 + \lambda \bI)^{-1} \widehat{\bSigma}_1 (\widehat{\bSigma}_1 + \lambda \bI)^{-1}  \bX_0^{\top}  \| 
.
\end{align*}

Next, we come to $	\| \widehat\btheta_{\lambda} - \btheta^{\star} \|_{\bQ}^2$. Note that
\begin{align*}
	\| \widehat{\btheta}_{\lambda}  - \btheta^{\star} \|_{\bQ}^2 \leq  [ \|  \widehat\btheta_{\lambda} - \bar\btheta_{\lambda}   \|_{\bQ} + \|  \bar\btheta_{\lambda} - \btheta^{\star} \|_{\bQ} ]^2
	\leq 2 \|  \widehat\btheta_{\lambda} - \bar\btheta_{\lambda}   \|_{\bQ}^2 +2  \|  \bar\btheta_{\lambda} - \btheta^{\star}  \|_{\bQ}^2 .
\end{align*}
By \eqref{eqn-lem-error-bounds-1},
\begin{align*}
	& \| \bar\btheta_{\lambda} - \btheta^{\star} \|_{\bQ}^2  = \lambda^2  \| (\widehat{\bSigma}_1 + \lambda \bI)^{-1}   \btheta^{\star} \|_{\bQ}^2 
	\leq  \lambda^2 R^2 \| (\widehat{\bSigma}_1 + \lambda \bI)^{-1}   \bQ (\widehat{\bSigma}_1 + \lambda \bI)^{-1} \| 
	 .
\end{align*}
By \eqref{eqn-lem-error-bounds-2},
\begin{align*}
	\| \widehat\btheta_{\lambda} - \bar\btheta_{\lambda} \|_{\bQ}^2 & =
	\| (\widehat{\bSigma}_1 + \lambda \bI)^{-1} (n_1^{-1} \bX_1^{\top} \bvarepsilon_1 ) \|_{\bQ}^2 \\
	& = \langle \bvarepsilon_1 , 
	(n_1^{-1} \bX_1 ) (\widehat{\bSigma}_1 + \lambda \bI)^{-1} \bQ (\widehat{\bSigma}_1 + \lambda \bI)^{-1} (n_1^{-1} \bX_1^{\top} )
	\bvarepsilon_1 \rangle .
\end{align*}
Note that 
\begin{align*}
	& \Tr [(n_1^{-1} \bX_1 ) (\widehat{\bSigma}_1 + \lambda \bI)^{-1} \bQ (\widehat{\bSigma}_1 + \lambda \bI)^{-1} (n_1^{-1} \bX_1^{\top} )] \\
	& = \Tr [ (\widehat{\bSigma}_1 + \lambda \bI)^{-1} \bQ (\widehat{\bSigma}_1 + \lambda \bI)^{-1} (n_1^{-1} \bX_1^{\top} ) (n_1^{-1} \bX_1 )] \\
	& = n_1^{-1} \Tr [ (\widehat{\bSigma}_1 + \lambda \bI)^{-1} \bQ (\widehat{\bSigma}_1 + \lambda \bI)^{-1} \widehat{\bSigma}_1] 
	= n_1^{-1} \Tr [  \widehat{\bSigma}_1 (\widehat{\bSigma}_1 + \lambda \bI)^{-2} \bQ ] .
\end{align*}
By Assumption \ref{assumption-noise} and \Cref{lem-subg-norm},
\begin{align*}
	\PP \bigg(
	\|  \widehat\btheta_{\lambda} - \bar\btheta_{\lambda}   \|_{\bQ}^{2} \leq  
	\frac{C \sigma^2 t}{ n_1 } \Tr [  \widehat{\bSigma}_1 (\widehat{\bSigma}_1 + \lambda \bI)^{-2} \bQ ] 
	\bigg) \geq 1 - e^{- t }, \qquad \forall t \geq 1.
\end{align*}
Here $C$ is the universal constant in \Cref{lem-subg-norm}. Therefore, given any $t \geq 1$, with probability at least $1 - e^{-t}$ we have
\begin{align*}
\frac{1}{2}\| \widehat\btheta_{\lambda} - \btheta^{\star} \|_{\bQ}^2
& \leq \|  \widehat\btheta_{\lambda} - \bar\btheta_{\lambda}   \|_{\bQ}^2 +  \|  \bar\btheta_{\lambda} - \btheta^{\star}  \|_{\bQ}^2 \\
& \leq 
\lambda^2 R^2 \| (\widehat{\bSigma}_1 + \lambda \bI)^{-1}   \bQ (\widehat{\bSigma}_1 + \lambda \bI)^{-1} \| 
+ \frac{C \sigma^2 t}{ n_1 } \Tr [  \widehat{\bSigma}_1 (\widehat{\bSigma}_1 + \lambda \bI)^{-2} \bQ ] 
\\ &
\leq 
\lambda R^2 \| (\widehat{\bSigma}_1 + \lambda \bI)^{-1/2}   \bQ (\widehat{\bSigma}_1 + \lambda \bI)^{-1/2} \| 
+ \frac{C \sigma^2 t}{ n_1 } \Tr [  (\widehat{\bSigma}_1 + \lambda \bI)^{-1} \bQ ] \\
& = \lambda R^2 \| \widehat\bS_{\lambda} \|  + \frac{C \sigma^2 \Tr( \widehat\bS_{\lambda} )  }{n_1} t.
\end{align*}
The proof is completed by choosing any $\delta \in (0, 1 / e]$, letting $t = \log(1/\delta)$, and redefining $C$.

	\subsection{Proof of \Cref{lem-cov-intermediate}}\label{sec-lem-cov-intermediate-proof}

	Choose any $\delta \in (0, 1/e]$. To prove Part 1, we let Assumption \ref{assumption-covariates-bounded} hold. 
	On the one hand, we apply the first part of \Cref{cor-cov-rkhs} to $\{ \phi (  \bx_{0i} ) \}_{i=1}^{n_0}$, with $\gamma$ set to be $1/2$. This leads to
	\begin{align*}
		\PP \bigg(
		\frac{1}{2} ( \bSigma_0 + \mu_0 \bI)  \preceq
		\widehat\bSigma_0 + \mu_0 \bI \preceq \frac{3}{2} ( \bSigma_0 + \mu_0 \bI)  
		\bigg)
		\geq 1 - \delta .
	\end{align*}
	Therefore, 
	\begin{align*}
		\PP\bigg(
		\widehat\bSigma_0   \preceq \frac{3}{2} 
		\bSigma_0 + \frac{\mu_0}{2}  \bI
		\bigg) \geq 1 - \delta  .
	\end{align*}
	
	On the other hand, we apply the first part of \Cref{cor-cov-rkhs} to $\{ \phi (  \bx_i ) \}_{i \in \cT}$, with $\gamma$ set to be $1/2$. This yields
	\begin{align*}
		\PP \bigg(
		\frac{1}{2} ( \bSigma + \mu_1 \bI ) \preceq  \widehat\bSigma_1 + \mu_1 \bI \preceq 
		\frac{3}{2} ( \bSigma + \mu_1 \bI ) 
		\bigg)
		\geq 1 - \delta .
	\end{align*}
	On the above event, for any $\lambda \geq \mu_1$ we have
	\begin{align*}
		\widehat\bSigma_1 + \lambda \bI =
		(\widehat\bSigma_1 + \mu_1 \bI) + (\lambda - \mu_1) \bI 
		\succeq \frac{1}{2} ( \bSigma + \mu_1 \bI ) + (\lambda - \mu_1) \bI 
		\succeq \frac{1}{2} ( \bSigma + \lambda \bI );
	\end{align*}
	for any $0 \leq \lambda \leq \mu_1$ we have
	\begin{align*}
		\widehat\bSigma_1 + \lambda \bI \succeq
		(\lambda / \mu_1) (\widehat\bSigma_1 + \mu_1 \bI) 
		\succeq (\lambda / \mu_1) \cdot \frac{1}{2} ( \bSigma + \mu_1 \bI ) 
		\succeq \frac{\lambda / \mu_1}{2} ( \bSigma + \lambda \bI ).
	\end{align*}
	Therefore, 
	\begin{align}
		\PP \bigg(
		\widehat\bSigma_1 + \lambda \bI \succeq 
		\frac{\min \{ \lambda / \mu_1, 1 \} }{2} ( \bSigma + \lambda \bI ) , ~~ \forall \lambda \geq 0
		\bigg) \geq 1 - \delta 
		.
		\label{eqn-lem-cov-intermediate-1-2}
	\end{align}
	Similarly, we apply the first part of \Cref{cor-cov-rkhs} to $\{  \phi ( \bx_i ) \}_{i \in [n] \backslash \cT}$ and get
	\begin{align}
		\PP \bigg(
		\widehat\bSigma_2 + \lambda \bI \succeq 
		\frac{\min \{ \lambda / \mu_2, 1 \} }{2} ( \bSigma + \lambda \bI ) , ~~ \forall \lambda \geq 0
		\bigg) \geq 1 - \delta .
		\label{eqn-lem-cov-intermediate-1-3}
	\end{align}
	This proves Part 1. The proof of Part 2 is very similar to the above (using the second part of \Cref{cor-cov-rkhs} instead) and is thus omitted.

\subsection{Proof of \Cref{lem-in-sample}}\label{sec-lem-in-sample-proof}

Define
\begin{align*}
& \by^{\star} =  \bX_0 \btheta^{\star}  =  ( f^{\star} ( \bx_{01} ) ,~\cdots, f^{\star} ( \bx_{0n_0} )  )^{\top}, \\
& \widetilde\by =  \bX_0 \widetilde{\btheta}  = ( \widetilde{f} ( \bx_{01} ) ,~\cdots, \widetilde{f} ( \bx_{0n_0} )  )^{\top},
\end{align*}
and $\bar\btheta = \EE ( \widetilde\btheta | \bX_2 )$. It is easily seen from the decomposition \eqref{eqn-in-decomposition} that
\[
\empiricalrisk ( \widehat{f}_{\lambda} ) - \empiricalrisk ( f^{\star} ) = \| \widehat\btheta_{\lambda} - \btheta^{\star} \|_{\widehat\bSigma_0}^2 .
\] 
Choose $\delta \in (0, 1]$. \Cref{thm-bias-variance} implies that with probability at least $1 - \delta$,
\begin{align*}
& \sqrt{ \empiricalrisk  (\widehat{f}) - \empiricalrisk  (f^{\star}) } -
\min_{ \lambda \in \Lambda  } \sqrt{ \empiricalrisk  ( \widehat{f}_{\lambda} ) - \empiricalrisk  ( f^{\star} ) }
\\& 
\lesssim
n_0^{-1/2} \| \EE \widetilde\by - \by^{\star} \|_2
+
\| \widetilde\by - \EE \widetilde\by \|_{\psi_2} \sqrt{ \frac{ \log ( |\Lambda|  / \delta ) }{n_0} } ,
\end{align*}
where $\lesssim$ only hides a universal constant. Applying the first half of \Cref{lem-error-bounds} to $\widetilde\btheta$ shows that 
\begin{align*}
& 	
\| \EE \widetilde\by - \by^{\star} \|_2^2
\leq \widetilde\lambda^2 R^2 \| \bX_0 (\widehat{\bSigma}_2 + \widetilde\lambda \bI)^{-2} \bX_0^{\top} \|, \\
&	\| \widetilde\by - \EE \widetilde\by \|_{\psi_2}^2 
\leq
\frac{ \sigma^2}{ n_2} \|  \bX_0 (\widehat{\bSigma}_2 + \widetilde\lambda \bI)^{-1} \widehat{\bSigma}_2 (\widehat{\bSigma}_2 + \widetilde\lambda \bI)^{-1}  \bX_0^{\top}  \| .
\end{align*}
Therefore,
\begin{align}
& 
n_0 \bigg|
\sqrt{ \empiricalrisk  (\widehat{f}) - \empiricalrisk  (f^{\star}) } -
\min_{ \lambda \in \Lambda  } \sqrt{ \empiricalrisk  ( \widehat{f}_{\lambda} ) - \empiricalrisk  ( f^{\star} ) }
\bigg|^2
\notag \\& 
\lesssim
\widetilde\lambda^2 R^2 \| \bX_0 (\widehat{\bSigma}_2 + \widetilde\lambda \bI)^{-2} \bX_0^{\top} \|
+
\frac{ \sigma^2}{ n_2} \|  \bX_0 (\widehat{\bSigma}_2 + \widetilde\lambda \bI)^{-1} \widehat{\bSigma}_2 (\widehat{\bSigma}_2 + \widetilde\lambda \bI)^{-1}  \bX_0^{\top}  \| \cdot \log(2|\Lambda|/\delta) \notag \\
& \leq 2  \log(2|\Lambda|/\delta)
\bigg\| \bX_0 
\bigg(
\widetilde\lambda^2 R^2 (\widehat{\bSigma}_2 + \widetilde\lambda \bI)^{-2} 
+ \frac{ \sigma^2}{ n_2}
(\widehat{\bSigma}_2 + \widetilde\lambda \bI)^{-1} \widehat{\bSigma}_2 (\widehat{\bSigma}_2 + \widetilde\lambda \bI)^{-1}  
\bigg)
\bX_0^{\top} \bigg\| \notag \\
& = \frac{2 \sigma^2 \log(2|\Lambda|/\delta) }{ n_2 }
\| \bX_0 
[
(\widehat{\bSigma}_2 + \widetilde\lambda \bI)^{-1}
(
\widehat{\bSigma}_2 + n_2  R^2 \sigma^{-2} \widetilde\lambda^2 \bI
)
(\widehat{\bSigma}_2 + \widetilde\lambda \bI)^{-1}
]
\bX_0^{\top} \|
.
\label{eqn-lem-in-sample-1}
\end{align}
It is easily seen that
\[
\widehat{\bSigma}_2 + n_2  R^2 \sigma^{-2} \widetilde\lambda^2 \bI
\preceq \max \{ 1 ,  n_2  R^2 \sigma^{-2} \widetilde\lambda \} (\widehat{\bSigma}_2 + \widetilde\lambda \bI)
\]
and thus,
\begin{align}
& \| \bX_0 
[
(\widehat{\bSigma}_2 + \widetilde\lambda \bI)^{-1}
(
\widehat{\bSigma}_2 + n_2  R^2 \sigma^{-2} \widetilde\lambda^2 \bI
)
(\widehat{\bSigma}_2 + \widetilde\lambda \bI)^{-1}
]
\bX_0^{\top} \|
\notag \\&
\leq \max \{ 1 ,  n_2  R^2 \sigma^{-2} \widetilde\lambda \}
\| \bX_0 
(\widehat{\bSigma}_2 + \widetilde\lambda \bI)^{-1}
\bX_0^{\top} \| 
\notag \\&
= n_0 \max \{ 1 ,  n_2  R^2 \sigma^{-2} \widetilde\lambda \}
\| (\widehat{\bSigma}_2 + \widetilde\lambda \bI)^{-1/2}
\widehat\bSigma_0
(\widehat{\bSigma}_2 + \widetilde\lambda \bI)^{-1/2}
 \|.
\label{eqn-lem-in-sample-2}
\end{align}
Combining \eqref{eqn-lem-in-sample-1} and \eqref{eqn-lem-in-sample-2} completes the proof.

\subsection{Proof of \Cref{lem-sandwich}}\label{sec-lem-sandwich-proof}

\subsubsection{Part 1}

We will invoke Part 1 of \Cref{lem-L2} to relate the in-sample risk to the out-of-sample one. Some preparations are needed.

\begin{claim}\label{claim-subg-norm}
There exists a universal constant $C$ such that for any $\lambda \in \Lambda$,
	\begin{align*}
& \PP \bigg(
 | \langle \phi ( \bx_{01} ) , \widehat{\btheta}_{\lambda} - \btheta^{\star} \rangle |
>
C M \bar{R}
 \sqrt{  \log ( 1 / \varepsilon ) }
\bigg) \leq  \varepsilon , \qquad \forall \varepsilon \in ( 0 , 1 / 2 ]
		  , \\
& \EE^{1/4}  | \langle \phi ( \bx_{01} ) , \widehat{\btheta}_{\lambda} - \btheta^{\star} \rangle |^4
\leq C M \bar{R}.
	\end{align*}
\end{claim}

\begin{proof}[\bf Proof of Claim \ref{claim-subg-norm}]
On the one hand, \Cref{lem-error-bounds} asserts that
\begin{align*}
& \| \bar{\btheta}_{\lambda} - \btheta^{\star} \|_{\HH} \leq \| \btheta^{\star} \|_{\HH} \leq R  
\qquad\text{and}\qquad
\| \widehat{\btheta}_{\lambda} - \bar{\btheta}_{\lambda} \|_{\psi_2} \lesssim  \sigma /  \sqrt{n_1 \lambda}  .
\end{align*}	
Then, $\|  \widehat{\btheta}_{\lambda} -  \btheta^{\star}  \|_{\psi_2} \lesssim  \bar{R}$. Assumption \ref{assumption-covariates-bounded} leads to $\| \langle \phi ( \bx_{01} ) , \widehat{\btheta}_{\lambda} -  \btheta^{\star} \rangle \|_{\psi_2} \lesssim  M \bar{R}$. Hence, there exists a universal constant $c$ such that
	\begin{align*}
		\PP \bigg(
		| \langle \phi ( \bx_{01} ) , \widehat{\btheta}_{\lambda} - \btheta^{\star} \rangle |
		> c M \bar{R} \sqrt{  \log ( 1 / \varepsilon ) }
		\bigg) \leq  \varepsilon , \qquad \forall \varepsilon \in (0, 1/2].
	\end{align*}
Meanwhile,
\begin{align*}
 \EE^{1/4}  | \langle \phi ( \bx_{01} ) , \widehat{\btheta}_{\lambda} - \btheta^{\star} \rangle |^4
& \leq 
\sqrt{4} \cdot \| \langle \phi ( \bx_{01} ) , \widehat{\btheta}_{\lambda} -  \btheta^{\star} \rangle \|_{\psi_2} 
 \lesssim M \bar{R}.
\end{align*}
The proof is finished by choosing a sufficiently large constant $C$.
\end{proof}

Choose any $\lambda \in \Lambda $ and $\varepsilon \in (0, 1/2]$. Let $\cZ = \cX$, $f(\cdot) = \langle \phi ( \cdot ) , \widehat{\btheta}_{\lambda} - \btheta^{\star} \rangle  $, $U = C M \bar{R}$ and $r = C M \bar{R} \sqrt{ \log ( 1 / \varepsilon ) } $. Claim \ref{claim-subg-norm} asserts that for $z \sim \target$,
	\begin{align*}
	& \PP ( | f(z) | > r ) \leq \varepsilon
	\qquad\text{and}\qquad
	\EE | f(z) |^4 \leq U^4 
\end{align*}
Based on the above, Part 1 of \Cref{lem-L2} implies that for any $\delta_1 \in (0, 1)$,
	\begin{align*}
&	\PP \bigg[
	\Big|
\| \widehat{\btheta}_{\lambda} - \btheta^{\star} \|_{\widehat\bSigma_0}
 - \| \widehat{\btheta}_{\lambda} - \btheta^{\star} \|_{\bSigma_0}
	\Big|
	\leq 
C M \bar{R}
\bigg(
 \sqrt{
		\frac{
			7 \log ( 1 / \varepsilon ) \log (2 / \delta_1)
		}{n_0}
	}
	+ \varepsilon^{1/4}
	\bigg)
\bigg] \\
&	\geq 1 - n_0 \varepsilon - \delta_1.
\end{align*}
By union bounds, the event
\[
\max_{\lambda \in \Lambda}
	\Big|
\| \widehat{\btheta}_{\lambda} - \btheta^{\star} \|_{\widehat\bSigma_0}
- \| \widehat{\btheta}_{\lambda} - \btheta^{\star} \|_{\bSigma_0}
\Big|
\leq 
C M \bar{R}
\bigg(
\sqrt{
	\frac{
		7 \log ( 1 / \varepsilon ) \log (2 / \delta_1)
	}{n_0}
}
+ \varepsilon^{1/4}
\bigg)
\]
holds with probability at least $1 - ( n_0 \varepsilon + \delta_1 ) | \Lambda |$.

Choose any $\delta \in (0, 1/e]$. Let $\varepsilon = [ \delta / ( 2 n_0 |\Lambda| ) ]^2$ and $\delta_1 = \delta / (2 |\Lambda|)$. We have
\[
( n_0 \varepsilon + \delta_1 ) | \Lambda |
= \bigg[
n_0 \cdot 
\bigg(
\frac{\zeta}{2 n_0 |\Lambda|} 
\bigg)^2
+ \frac{\zeta}{2 |\Lambda|} 
\bigg] |\Lambda| \leq \delta.
\]
Hence, with probability at least $1 - \delta$,
\begin{align*}
& \max_{\lambda \in \Lambda}
\Big|
\sqrt{ \risk ( \widehat{f}_{\lambda} ) - \risk ( f^{\star} ) } - 
\sqrt{	\empiricalrisk ( \widehat{f}_{\lambda} ) - \empiricalrisk ( f^{\star} ) } 
\Big|
 =
\max_{\lambda \in \Lambda}
\Big|
\| \widehat{\btheta}_{\lambda} - \btheta^{\star} \|_{\widehat\bSigma_0}
- \| \widehat{\btheta}_{\lambda} - \btheta^{\star} \|_{\bSigma_0}
\Big| \\
& \lesssim
M \bar{R}
\bigg(
\sqrt{
	\frac{
		\log ( n_0 |\Lambda| / \delta ) \log (|\Lambda| / \delta)
	}{n_0}
}
+ \sqrt{
	\frac{\zeta}{2 n_0 |\Lambda|}
}
\bigg) 
 \lesssim M \bar{R}
\sqrt{
	\frac{
		\log ( n_0 |\Lambda| / \delta ) \log (|\Lambda| / \delta)
	}{n_0}
}.
\end{align*}
The proof is finished by redefining $C$.

\subsubsection{Part 2}

We will invoke Part 2 of \Cref{lem-L2}. Choose any $\lambda \in \Lambda $ and $\delta \in (0, 1/e]$. Let $\cZ = \cX$, $f(\cdot) = \langle \phi ( \cdot ) , \widehat{\btheta}_{\lambda} - \btheta^{\star} \rangle  $, and $z_i = \bx_{0i}$. We have
\[
\| f \|_{L^2}^2 = \EE \Big[ | \langle  \phi ( \bx_{0i} ) , \widehat{\btheta}_{\lambda} - \btheta^{\star} \rangle |^2 \Big| \widehat{\btheta}_{\lambda} \Big].
\]
Thanks to Assumption \ref{assumption-covariates-subg}, given $f$, the $\psi_2$ norm of $f(z_1) / \| f \|_{L^2}$ is bounded by $\sqrt{\kappa}$. Then, Part 2 of \Cref{lem-L2} implies the existence of a constant $C \geq 1$ such that when $\delta \in (0, 1)$ and $n \geq C^2 \kappa^2 \log(2/\delta) $, we have
\begin{align*}
\PP \bigg(
\Big| \| f \|_n^2 -  \| f \|_{L^2}^2 \Big| \leq 
\| f \|_{L^2}^2 \cdot C \kappa \sqrt{ \frac{ \log (2/\delta) }{ n_0 } }
\bigg)  \geq 1 -  \delta .
\end{align*}
By union bounds, if $\delta \in (0, 1]$ and $n_0 \geq  (C \kappa)^2 \log ( 2 |\Lambda| / \delta )$, then
\begin{align*}
\PP \bigg(
\Big| \| \widehat{\btheta}_{\lambda} - \btheta^{\star} \|_{\widehat{\bSigma}_0}^2 -  \| \widehat{\btheta}_{\lambda} - \btheta^{\star} \|_{\bSigma_0}^2 \Big| \leq 
\| \widehat{\btheta}_{\lambda} - \btheta^{\star} \|_{\bSigma_0}^2 
\cdot C \kappa \sqrt{ \frac{ \log (2 |\Lambda| /\delta) }{ n_0 } }
,~\forall \lambda \in \Lambda
\bigg)  \geq 1 -  \delta .
\end{align*}
Choose any $\gamma \in (0, 1]$. If $n_0 \geq  (C \kappa / \gamma)^2 \log ( 2 |\Lambda| / \delta )$, then the above concentration inequality forces
\begin{align*}
\PP \bigg(
\Big| \| \widehat{\btheta}_{\lambda} - \btheta^{\star} \|_{\widehat{\bSigma}_0}^2 -  \| \widehat{\btheta}_{\lambda} - \btheta^{\star} \|_{\bSigma_0}^2 \Big| \leq 
\gamma \| \widehat{\btheta}_{\lambda} - \btheta^{\star} \|_{\bSigma_0}^2 
,~\forall \lambda \in \Lambda
\bigg)  \geq 1 -  \delta .
\end{align*}

\section{Proofs of Section 4}

\subsection{Proof of \Cref{cor-r}}\label{sec-cor-r-proof}

Throughout the proof, we use $\lesssim$ to hide logarithmic factors. \Cref{thm-excess-risk-0} implies that when $n$ and $n_0 / \log n$ are sufficiently large, the following inequality holds with probability at least $1 - n^{-1}$:
\begin{align*}
	\risk ( \widehat{f}  ) - \risk ( f^{\star} ) 
	&	\lesssim
	\inf_{  \rho > 0 }
	\bigg\{
	\rho + \frac{ 1 }{ \effectivesamplesize } \sum_{j=1}^{\infty} \min \{ 1 , \mu_j / \rho \}
	\bigg\}
	+ 
	\frac{1}{\effectivesamplesize} +
	\frac{
		1
	}{n_0}
	.
\end{align*}
Denote by $A(\rho)$ the function inside the infimum. 
When $\rank (\bSigma_0) \leq D$, we have $$\sum_{j=1}^{\infty} \min \{ 1 , \mu_j / \rho \} \leq D.$$ Then, the fact
\[
\inf_{  \rho > 0 } A(\rho)
\leq 
\inf_{  \rho > 0 } 
\bigg\{
\rho + \frac{ D }{ \effectivesamplesize }  
\bigg\} = \frac{D}{ \effectivesamplesize}
\]
yields Part \ref{part-finite-rank-cor-r}.

Suppose that $\mu_j \leq c_1 e^{-c_2 j  }$, $\forall j$ holds with some constants $c_1, c_2 > 0$. We have
\begin{align*}
	A(\rho) 
	& \lesssim \rho + \frac{ 1 }{ \effectivesamplesize }  \bigg(
	k + \sum_{j=k + 1}^{\infty} e^{-c_2 j} / \rho
	\bigg)  \lesssim \rho + \frac{ k + e^{-c_2 k}  / \rho }{ \effectivesamplesize } ,\qquad\forall k \in \ZZ_+.
\end{align*}
Taking $\rho = \effectivesamplesize^{-1}$ and $k = \lceil c_2^{-1} \log(1/  \rho ) \rceil$ yields $A(\rho ) \lesssim \effectivesamplesize^{-1}$ and proves Part \ref{part-exponential-cor-r}.

Finally, suppose that $\mu_j \leq c j^{- 2 \alpha}$, $\forall j$ holds with some constants $c > 0$ and $\alpha > 1/2$. We have
\begin{align*}
	A(\rho) 
	& \lesssim \rho + \frac{ 1 }{ \effectivesamplesize }  \bigg(
	k + \sum_{j=k + 1}^{\infty} j^{-2 \alpha} / \rho
	\bigg)  \lesssim \rho + \frac{ k + k^{1-2\alpha} / \rho }{ \effectivesamplesize } ,\qquad\forall k \in \ZZ_+.
\end{align*}
Taking $\rho = \effectivesamplesize^{- 2 \alpha / (2 \alpha + 1) }$ and $k = \lceil \rho^{-1/(2\alpha)} \rceil$ yields $A( \rho ) \lesssim \effectivesamplesize^{- 2 \alpha / (2 \alpha + 1)}$ and proves Part \ref{part-polynomial-cor-r}.

\subsection{Proof of \Cref{example-Sobolev-conversion}}\label{sec-example-Sobolev-conversion-proof}

For $j \in \ZZ_+$ and $x \in [0, 1]$, define
\[
v_j (x) = \sin \bigg( \frac{ (2j-1) \pi x }{2} \bigg)
\qquad\text{and}\qquad
\sigma_j = \bigg( \frac{2}{ (2j-1) \pi } \bigg)^2.
\]
Consider a Hilbert space $\HH = \ell^2 = \{ ( z_1, z_2, \cdots ) :~ z_j \in \RR,~ \sum_{j=1}^{\infty} z_j^2 < \infty \}$ with $\langle \bz, \bw \rangle = \sum_{j=1}^{\infty} z_j w_j$, and define
\[
\phi:~ \cX \to \HH ,~x \mapsto ( \sqrt{\sigma_1} v_1 (x), \sqrt{\sigma_2} v_2 (x) , \cdots  ).
\]
We have $K(x, z) = \langle \phi(x), \phi(z) \rangle$, $\forall x, z \in \cX$; see Example 12.23 in \cite{Wai19}. In addition, we have $ \EE_{z \sim \source} [ K(\cdot, z) v_j(z) ] = \int_{0}^{1} K(\cdot, z) v_j (z) \rd z = \sigma_j v_j(\cdot)$ for all $j \in \ZZ_+$. Hence,
\begin{align*}
	& \bSigma = \EE_{x \sim \source} [ \phi(x) \otimes \phi(x) ] = \diag ( \sigma_1, \sigma_2, \cdots  ) , \\
	& \bSigma_0 
	= \phi(x_0) \otimes \phi(x_0)
	= ( \sqrt{\sigma_1} v_1 (x_0), \sqrt{\sigma_2} v_2 (x_0) , \cdots  )^{\otimes 2},
\end{align*}

First, we will calculate $\effectivesamplesize$. 
Note that $(\bSigma + n^{-1} \bI)^{-1/2}$ is a bounded linear transform on $\HH$. Define $\xi = (\bSigma + n^{-1} \bI)^{-1/2} \phi (x_0) $. Since $x_0 = 1/2$, we have $|v_j (x_0)|^2 = 1/2$ and
\begin{align*}
	\| \xi \|_{\HH}^2 = \sum_{j=1}^{\infty } \frac{\sigma_j}{ \sigma_j + n^{-1} }  |v_j(x_0)|^2
	\asymp \sum_{j=1}^{\infty } \frac{j^{-2}}{ j^{-2} + n^{-1} }
	\asymp \sqrt{n}.
\end{align*}
Therefore, $\|  (\bSigma + n^{-1} \bI)^{-1/2} \bSigma_0 (\bSigma + n^{-1} \bI)^{-1/2} \| \asymp \sqrt{n}$, which implies $\effectivesamplesize \asymp \sqrt{n}$.

To prove the minimax rate-optimality of our estimate $\widehat{f}(x_0)$, we resort to Le Cam's method \citep{Le73}. The key is to construct two similar hypotheses. Recall that $\{ x_i \}_{i=1}^n$ are i.i.d.~from $\source = \cU [0, 1]$. Suppose that conditioned on them, $\{  y_i \}_{i=1}^n$ are independent and $y_i \sim N( f^{\star} (x_i) , 1 )$. Define $g_0, g_1 \in \cF$ with $g_0 \equiv 0$ and
\[
g_1 (x) = \begin{cases}
	0 &, \mbox{ if } x \in  [ 0, x_0 - \frac{1}{2 \sqrt{n}}  ] \cup  ( x_0 + \frac{1}{2 \sqrt{n}} , 1  ] \\
	n^{-1/4} /2 + n^{1/4} ( x - x_0 ) &, \mbox{ if } x \in  ( x_0 - \frac{1}{2 \sqrt{n}} , x_0  ]  \\
	n^{-1/4} /2 -  n^{1/4} ( x - x_0 ) &, \mbox{ if } x \in  ( x_0 ,  x_0 + \frac{1}{2 \sqrt{n}}  ] 
\end{cases},
\]
We have $\| g_0 \|_{\cF}^2 = 0$ and $\| g_1 \|_{\cF}^2 = \int_{0}^{1} | g_1'(x) |^2 \rd x = 1$. Applying Le Cam's method to the hypothesis $H_0$: $f^{\star} = g_0$ and $H_1$: $f^{\star} = g_1$ yields a minimax lower bound of order $n^{-1/4}$. Here we provide some intuitions. The two functions $g_0$ and $g_1$ are different within an interval of length $n^{-1/2}$, which only includes around $\sqrt{n}$ samples. No method can consistently distinguish the two hypotheses that are merely $(\sqrt{n})^{-1/2}  = n^{-1/4}$ apart.

\subsection{Proof of \Cref{lem-LB}}\label{sec-lem-LB-proof}

Define $\bX_1$, $\widehat{\btheta}_{\lambda}$ and $\widehat{\bSigma}_{1}$ as in \Cref{lem-error-bounds}; $v_j$, $\sigma_j$, and $\phi$ as in \Cref{sec-example-Sobolev-conversion-proof}. For simplicity, below we suppress the subscript in $\widehat\bSigma_1$. By \eqref{eqn-lem-error-bounds-1} and \eqref{eqn-lem-error-bounds-2},
\begin{align*}
	& \EE [ \risk (\widehat{f}_{\lambda}) - \risk (f^{\star}) | ~ \bX_1 ]
	= \EE \Big( \| \widehat{\btheta}_{\lambda} - \btheta^{\star} \|_{\bSigma_0}^2
	\Big|~ \bX_1
	\Big) \notag \\
	& = \lambda^2 \| (\widehat{\bSigma} + \lambda \bI)^{-1}   \btheta^{\star} \|_{\bSigma_0}^2
	+ \frac{\sigma^2}{n_1} \Tr [ \bSigma_0 (\widehat{\bSigma} + \lambda \bI)^{-2} \widehat{\bSigma} ].
\end{align*}
Since $\sigma = 1$ and $n_1 = n/2$, we have
\begin{align*}
	& \EE [ \risk_1 (\widehat{f}_{\lambda}) - \risk_1 (f^{\star}) ]
	\geq  \frac{2}{n} \EE 
	\{
	\Tr [ \bSigma (\widehat{\bSigma} + \lambda \bI)^{-2} \widehat{\bSigma} ]
	\}
	, \\	
	& \EE [ \risk_2 (\widehat{f}_{\lambda}) - \risk_2 (f^{\star}) ]
	\geq  
	\lambda^2 \EE | \langle \phi (x_0) , (\widehat{\bSigma} + \lambda \bI)^{-1}   \btheta^{\star} \rangle |^2 .
\end{align*}

Let $\nu = n^{-4/5}$. We will show that when $n$ is sufficiently large,
\begin{align}
	& \inf_{ 0 < \lambda \leq \nu } \EE 
	\{
	\Tr [ \bSigma (\widehat{\bSigma} + \lambda \bI)^{-2} \widehat{\bSigma} ]
	\}
	\gtrsim n^{2/5}, 
	\label{eqn-LB-01}
	\\
	& \inf_{ \lambda > \nu } 	
	\{
	\lambda^2 \EE | \langle \phi (x_0) , (\widehat{\bSigma} + \lambda \bI)^{-1}   \btheta^{\star} \rangle |^2 
	\}
	\gtrsim n^{-2/5}.
	\label{eqn-LB-02}
\end{align}
From there, we can immediately get the desired lower bound.

\subsubsection{Proof of \eqref{eqn-LB-01}}

We present a useful result on the concentration of $\widehat{\bSigma}$.

\begin{claim}\label{claim-concentration}
	Let $\bDelta = ( \bSigma + \nu \bI)^{-1/2} ( \widehat\bSigma - \bSigma ) ( \bSigma + \nu \bI)^{-1/2}$. There exists a universal constant $C$ such that
	\begin{align*}
		\PP \bigg(
		\| \bDelta \|
		\leq \frac{ C \log n }{ 2 n^{1/2} \nu^{1/4}  }
		\bigg) \geq 1 / 2.
	\end{align*}
\end{claim}

\begin{proof}
	We invoke \Cref{lem-cov} to bound $\| \bDelta \|$. Let $\widetilde{\bx}_i = ( \bSigma + \nu \bI)^{-1/2} \bx_i$ for $i \in [n]$, and $\widetilde{\bSigma} = \EE ( \widetilde{\bx}_i \otimes \widetilde{\bx}_i ) = ( \bSigma + \nu \bI)^{-1/2}  \bSigma ( \bSigma + \nu \bI)^{-1/2}$. Then, we have
	\begin{align*}
		& \| \widetilde{\bx}_i \|_{\HH}^2 
		= \sum_{j=1}^{\infty} \frac{ \sigma_j | v_j (x_i) |^2 }{ \sigma_j + \nu }
		\lesssim \sum_{j=1}^{\infty} \frac{ j^{-2} }{ j^{-2} + \nu }
		\asymp 
		\sum_{j=1}^{\lceil
			\nu^{-1/2}
			\rceil} 1 + \sum_{j=\lceil
			\nu^{-1/2}
			\rceil + 1 }^{\infty} j^{-2} / \nu 
		\lesssim \nu^{-1/2}
		, \\
		& \Tr (\widetilde{\bSigma}) =  \sum_{j=1}^{\infty} \frac{ \sigma_j }{ \sigma_j + \nu }  
		\lesssim \nu^{-1/2},
	\end{align*}
	and $\| \widetilde{\bSigma} \| \leq 1$. Hence, there exists a universal constant $c$ such that $\| \widetilde{\bx}_i \|_{\HH}^2 \leq c\nu^{-1/2} $ and $\Tr (\widetilde{\bSigma}) \leq c\nu^{-1/2}$. Applying \Cref{lem-cov} to $\{ \widetilde{\bx}_i \}_{i \in \cD_1}$, with parameters $\delta = 1/2$, $v = 1$ and $M^2 = r = c \nu^{-1/2}$, we see that
	\[
	\PP \bigg(
	\| \bDelta \|
	\leq C_1
	\max \bigg\{
	\sqrt{ \frac{ \log n }{n \nu^{1/2} } }
	,~ \frac{ \log n }{n \nu^{1/2} }
	\bigg\}
	\bigg)
	\geq 1/2
	.
	\]
	Here $C_1$ is a universal constant. The claim follows from the fact that $\nu \geq n^{-1}$.
\end{proof}

We now come to \eqref{eqn-LB-01}. For any $\lambda \in (0, \nu]$, we have $(\widehat{\bSigma} + \lambda \bI)^{-2} \widehat{\bSigma} \succeq (\widehat{\bSigma} + \nu \bI)^{-2} \widehat{\bSigma} $. Define $\bSigma_{\nu} = \bSigma + \nu \bI$ and $\widehat\bSigma_{\nu} = \widehat\bSigma + \nu \bI$. 
It suffices to show that when $n$ is sufficiently large,
\begin{align}
	\PP \bigg(
	\Tr (
	\bSigma \widehat{\bSigma}_{\nu}^{-2} \widehat{\bSigma}
	)
	\geq c_1 n^{2/5}	
	\bigg) \geq 1/2,
	\label{eqn-LB-01-1}
\end{align}
where $c_1$ is a universal constant. Note that
\begin{align*}
	& \Tr (  \bSigma \widehat{\bSigma}_{\nu}^{-2} \widehat{\bSigma} )
	= \langle  \bSigma , \widehat{\bSigma}_{\nu}^{-1}  \bSigma_{\nu}^{1/2}
	( \bSigma_{\nu}^{-1/2} \widehat{\bSigma}  \bSigma_{\nu}^{-1/2} ) \bSigma_{\nu}^{1/2} \widehat{\bSigma}_{\nu}^{-1} \rangle  
	.
\end{align*}
Let $C$ be the constant defined in Claim \ref{claim-concentration} and
\[
\zeta =  \frac{ C \log n }{ 2 n^{1/2} \nu^{1/4}  }
=   \frac{ C \log n }{ 2 n^{3/10}  }.
\]
With probability at least $1/2$,
\begin{align}
	\| \bSigma_{\nu}^{-1/2} \widehat{\bSigma}  \bSigma_{\nu}^{-1/2} - 
	\bSigma_{\nu}^{-1/2} \bSigma  \bSigma_{\nu}^{-1/2} \| \leq \zeta .
	\label{eqn-concentration-event}
\end{align}
and hence, $(1 - \zeta)  \bSigma_{\nu}
\preceq  \widehat{\bSigma}_{\nu}  \preceq (1 + \zeta)  \bSigma_{\nu} $. When the event \eqref{eqn-concentration-event} happens,
\begin{align}
	\Tr (  \bSigma \widehat{\bSigma}_{\nu}^{-2} \widehat{\bSigma} )
	& \geq  \langle  \bSigma , \widehat{\bSigma}_{\nu}^{-1}  \bSigma_{\nu}^{1/2}
	( \bSigma_{\nu}^{-1/2} \bSigma  \bSigma_{\nu}^{-1/2} ) \bSigma_{\nu}^{1/2} \widehat{\bSigma}_{\nu}^{-1} \rangle 
	- \zeta  \langle  \bSigma , \widehat{\bSigma}_{\nu}^{-1}  \bSigma_{\nu}^{1/2} \bSigma_{\nu}^{1/2} \widehat{\bSigma}_{\nu}^{-1} \rangle \notag  \\
	& = \langle  \bSigma , \widehat{\bSigma}_{\nu}^{-1}  \bSigma  \widehat{\bSigma}_{\nu}^{-1} \rangle 
	- \zeta  \langle  \bSigma , \widehat{\bSigma}_{\nu}^{-1}  \bSigma_{\nu} \widehat{\bSigma}_{\nu}^{-1} \rangle 
	\label{eqn-concentration-1}
\end{align}
When $n$ is sufficiently large, we have $\zeta \leq 1/2$ and
\begin{align}
	\langle  \bSigma , \widehat{\bSigma}_{\nu}^{-1}  \bSigma  \widehat{\bSigma}_{\nu}^{-1} \rangle 
	=  \langle  \widehat{\bSigma}_{\nu}^{-1},  \bSigma  \widehat{\bSigma}_{\nu}^{-1} \bSigma \rangle 
	\geq   (1+\zeta)^{-2}
	\langle \bSigma_{\nu}^{-1} ,  \bSigma  \bSigma_{\nu}^{-1} \bSigma \rangle 
	= (1+\zeta)^{-2}
	\langle \bSigma_{\nu}^{-2} ,  \bSigma^2  \rangle .
	\label{eqn-concentration-2}
\end{align}
Meanwhile,
\begin{align}
	& \langle  \bSigma , \widehat{\bSigma}_{\nu}^{-1}  \bSigma_{\nu} \widehat{\bSigma}_{\nu}^{-1} \rangle 
	\leq 
	\langle  \widehat{\bSigma}_{\nu}^{-1/2}  \bSigma  \widehat{\bSigma}_{\nu}^{-1/2} 
	, \widehat{\bSigma}_{\nu}^{-1/2}  \bSigma_{\nu} \widehat{\bSigma}_{\nu}^{-1/2} \rangle 
	\notag\\
	& \leq \Tr ( \widehat{\bSigma}_{\nu}^{-1/2}  \bSigma  \widehat{\bSigma}_{\nu}^{-1/2} ) \| \widehat{\bSigma}_{\nu}^{-1/2}  \bSigma_{\nu} \widehat{\bSigma}_{\nu}^{-1/2} \| 
	= \langle \bSigma , \widehat{\bSigma}_{\nu}^{-1} \rangle \| \widehat{\bSigma}_{\nu}^{-1/2}  \bSigma_{\nu} \widehat{\bSigma}_{\nu}^{-1/2} \| 
	\notag\\
	&\leq 
	\frac{  \langle \bSigma , \bSigma_{\nu}^{-1} \rangle 
	}{1-\zeta}
	\cdot \frac{1}{1-\zeta} = (1-\zeta)^{-2}  \langle \bSigma , \bSigma_{\nu}^{-1} \rangle 
	.
	\label{eqn-concentration-3}
\end{align}

By \eqref{eqn-concentration-1}, \eqref{eqn-concentration-2}, \eqref{eqn-concentration-3} and the fact that $\zeta \in [0, 1/2]$, on the event \eqref{eqn-concentration-event} we have
\begin{align*}
	\Tr (  \bSigma \widehat{\bSigma}_{\nu}^{-2} \widehat{\bSigma} )
	\geq  
	\frac{ \langle \bSigma_{\nu}^{-2} ,  \bSigma^2  \rangle }{(1+\zeta)^{2}}
	- \frac{\zeta}{(1-\zeta)^2}  \langle \bSigma , \bSigma_{\nu}^{-1} \rangle 
	\geq \frac{4}{9}  \langle \bSigma_{\nu}^{-2} ,  \bSigma^2  \rangle 
	- 2 \zeta  \langle \bSigma , \bSigma_{\nu}^{-1} \rangle  .
\end{align*}
Note that $ \langle \bSigma_{\nu}^{-2} ,  \bSigma^2  \rangle  \asymp \nu^{-1/2}$ and $ \langle \bSigma_{\nu}^{-1} ,  \bSigma  \rangle  \asymp \nu^{-1/2}$. Hence, for sufficiently large $n$, the event \eqref{eqn-concentration-event} implies $\Tr (  \bSigma \widehat{\bSigma}_{\nu}^{-2} \widehat{\bSigma} ) \gtrsim n^{2/5}$. This proves \eqref{eqn-LB-01-1}.

\subsubsection{Proof of \eqref{eqn-LB-02}}

Let $\bphi = \phi (x_0) $ and $\btheta^{\star} = ( \bSigma + \nu \bI )^{-1} \bphi / \| ( \bSigma + \nu \bI )^{-1} \bphi \|_{\HH}$. It suffices to prove that for sufficiently large $n$ and any $\lambda \geq \nu$, we have
\begin{align}
	\PP \bigg(
	\lambda^2 | \langle \bphi  , (\widehat{\bSigma} + \lambda \bI)^{-1}   \btheta^{\star} \rangle |^2 
	\geq c_2 n^{-2/5}
	\bigg) \geq 1/2,
	\label{eqn-LB-bias}
\end{align}
where $c_2$ is a universal constant. Choose any $\lambda \geq \nu = n^{-4/5}$. 

\begin{claim}
	$| \langle \bphi , (\bSigma + \lambda \bI)^{-1}   \btheta^{\star} \rangle | \gtrsim \lambda^{-1} \nu^{1/4} = \lambda^{-1} n^{-1/5}$.
\end{claim}
\begin{proof}
	Note that
	\begin{align*}
		\langle \bphi , (\bSigma + \lambda \bI)^{-1}   \btheta^{\star} \rangle
		& = \langle \bphi , (\bSigma + \lambda \bI)^{-1}  ( \bSigma + \nu \bI )^{-1} \bphi \rangle / \| ( \bSigma + \nu \bI )^{-1} \bphi \|_{\HH} .
	\end{align*}
	On the one hand,
	\begin{align*}
		&\| ( \bSigma + \nu \bI )^{-1} \bphi \|_{\HH}^2 
		\asymp \sum_{j=1}^{\infty} \frac{j^{-2}}{  ( j^{-2} + \nu )^2 }
		\asymp \sum_{j=1}^{\lceil \nu^{-1/2} \rceil } \frac{j^{-2}}{  ( j^{-2} )^2 } +  \sum_{j = \lceil \nu^{-1/2} \rceil + 1 }^{\infty} \frac{j^{-2}}{  \nu^2 } \notag \\
		& =  \sum_{j=1}^{\lceil \nu^{-1/2} \rceil } j^2 +   \sum_{j = \lceil \nu^{-1/2} \rceil + 1 }^{\infty} \frac{ j^{-2} }{\nu^2}
		\asymp \lceil \nu^{-1/2} \rceil^3 + \nu^{-2}  \lceil \nu^{-1/2} \rceil^{-1}
		\asymp  \nu^{-3/2} .
	\end{align*}
	On the other hand, for $\lambda \geq \nu$,
	\begin{align*}
		& \langle \bphi , (\bSigma + \lambda \bI)^{-1}  ( \bSigma + \nu \bI )^{-1} \bphi \rangle 
		\asymp \sum_{j=1}^{\infty} \frac{j^{-2}}{ (j^{-2} + \lambda ) (j^{-2} + \nu ) } \\
		& \gtrsim \sum_{j= \lceil \nu^{-1/2} \rceil }^{\infty} \frac{j^{-2}}{ \lambda \nu }
		\asymp (\lambda \nu)^{-1}  \lceil \nu^{-1/2} \rceil^{-1} \asymp \lambda^{-1} \nu^{-1/2}.
	\end{align*}
	The claimed result becomes obvious.
\end{proof}

\begin{claim}
	There exists a universal constant $c$ such that for sufficiently large $n$,
	\[
	\PP \bigg(
	| \langle \bphi , ( \widehat\bSigma_1 + \lambda \bI)^{-1}   \btheta^{\star} \rangle | - |\langle \bphi , (\bSigma + \lambda \bI)^{-1}   \btheta^{\star} \rangle
	|
	\leq 
	\frac{ c \log n }{ \lambda n^{3/10} } 
	\bigg)
	\geq \frac{1}{2}
	.
	\]
\end{claim}
\begin{proof}
	By direct calculation,
	\begin{align*}
		& | \langle \bphi , ( \widehat\bSigma + \lambda \bI)^{-1}   \btheta^{\star} \rangle | - |\langle \bphi , (\bSigma + \lambda \bI)^{-1}   \btheta^{\star} \rangle
		|
		\leq \| \bphi \|_{\HH} \| \btheta^{\star} \|_{\HH}  \|  ( \widehat\bSigma + \lambda \bI)^{-1}    -  ( \bSigma + \lambda \bI)^{-1}  \| \notag \\
		& \leq  \| ( \bSigma + \lambda \bI)^{-1/2} 
		\{
		[ ( \bSigma + \lambda \bI)^{-1/2} ( \widehat\bSigma + \lambda \bI) ( \bSigma + \lambda \bI)^{-1/2} ]^{-1}  - \bI \}  ( \bSigma + \lambda \bI)^{-1/2}  \| \notag\\
		& \leq \lambda^{-1}  \| [ ( \bSigma + \lambda \bI)^{-1/2} ( \widehat\bSigma + \lambda \bI) ( \bSigma + \lambda \bI)^{-1/2} ]^{-1}  - \bI \|
		= \lambda^{-1}
		\| (	\bI + \bDelta
		)^{-1}  - \bI \| .
	\end{align*}
	Claim \ref{claim-concentration} implies that when $n$ is large, with probability $1/2$, we have
	\[
	\| (	\bI + \bDelta
	)^{-1}  - \bI \| \lesssim \| \bDelta \|
	\lesssim n^{-3/10} \log n.
	\]
	This proves the claim.
\end{proof}
Finally,  \eqref{eqn-LB-bias} follows from the above two claims.

\subsection{Proof of \Cref{remark-n-eff}}\label{sec-remark-n-eff-proof}

Let $\conversionrate = \sup  \{ t :~ t \bSigma_0 \preceq \bSigma\}$. Since $\bSigma \succeq 0$, we have $\conversionrate \geq 0$. For the second scenario of \Cref{example-linear-conversion} with $d \to \infty$, we have $\langle \be_j , \bSigma_0 \be_j \rangle \asymp j^{-2}$ while $\langle \be_j , \bSigma \be_j \rangle \asymp j^{-4}$. This immediately shows $\conversionrate = 0$. To prove $\conversionrate = 0$ for \Cref{example-Sobolev-conversion}, we define $\HH$, $\sigma_j$ and $v_j$ as in \Cref{sec-example-Sobolev-conversion-proof}. Let
\[
\psi_j = ( \underbrace{ 0 , \cdots, 0 }_{(j-1) \text{ entries}} , \sqrt{\sigma_j} v_j (x_0), \sqrt{\sigma_{j+1}} v_{j+1} (x_0) , \cdots  ) , \qquad \forall j \in \ZZ_+.
\]
We have $\psi_j \in \HH$. The fact $\conversionrate \bSigma_0 \preceq \bSigma$ yields
\begin{align}
	\conversionrate | \langle \psi_j , \phi (x_0) \rangle |^2  \leq \langle \psi_j , \bSigma \psi_j \rangle , \qquad \forall j \in \ZZ_+.
	\label{eqn-example-Sobolev-1}
\end{align}
Since $x_0 = 1/2$, we have $|v_j (x_0)|^2 = 1/2$ and
\begin{align*}
	& \langle \psi_j , \phi (x_0) \rangle = \sum_{k=j}^{\infty} \sigma_k |v_k(x_0)|^2 
	=  \sum_{k=j}^{\infty} \bigg( \frac{2}{ (2k-1) \pi } \bigg)^2 \cdot \frac{1}{2}
	\asymp j^{-1} , \\
	& \langle \psi_j , \bSigma \psi_j \rangle = \sum_{k=j}^{\infty} \sigma_k^2 |v_k(x_0)|^2 
	=\sum_{k=j}^{\infty} \bigg( \frac{2}{ (2k-1) \pi } \bigg)^4 \cdot \frac{1}{2}
	\asymp j^{-3} .
\end{align*}
Then, the inequality \eqref{eqn-example-Sobolev-1} forces $\conversionrate = 0$.

\section{Proof of Theorem 5.2}\label{sec-thm-bias-variance-proof}

We introduce a deterministic oracle inequality for pseudo-labeling.
\begin{lemma}\label{lem-selection-regret}
	Let $\{ \by_{\lambda} \}_{ \lambda \in \Lambda} $ be a collection of vectors in $\RR^n$ and $\by^{\star} , \widetilde\by \in \RR^n$. Choose any $\widehat\lambda \in \argmin_{ \lambda \in \Lambda } \| \by_{\lambda} -  \widetilde\by   \|_2^2$. Define
	\[
	\overhead = \sup_{\lambda, \lambda' \in \Lambda} \bigg\langle
	\frac{
		\by_{\lambda} - \by_{\lambda'} 
	}{
		\| \by_{\lambda} - \by_{\lambda'}  \|_2
	}
	,  \widetilde\by  - \by^{\star}  \bigg\rangle,
	\]
	with the convention $\bm{0} / 0 = \bm{0}$. We have
\begin{align*}
& \| \by_{\widehat\lambda} - \by^{\star} \|_2 \leq \inf_{\lambda \in \Lambda }   \| \by_{\lambda} - \by^{\star} \|_2 + 2  \overhead .
\end{align*}
\end{lemma}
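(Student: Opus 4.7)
My plan is to derive the inequality directly from the defining optimality of $\widehat\lambda$ and a completion-of-squares argument. Fix any $\lambda \in \Lambda$ and set $a = \| \by_{\widehat\lambda} - \by^{\star} \|_2$, $b = \| \by_{\lambda} - \by^{\star} \|_2$. Since $\widehat\lambda$ minimizes $\| \by_{\lambda'} - \widetilde\by \|_2^2$ over $\lambda' \in \Lambda$, expanding both sides around $\by^{\star}$ gives
\begin{align*}
\| \by_{\widehat\lambda} - \by^{\star} \|_2^2 - 2 \langle \by_{\widehat\lambda} - \by^{\star}, \widetilde\by - \by^{\star} \rangle \leq \| \by_{\lambda} - \by^{\star} \|_2^2 - 2 \langle \by_{\lambda} - \by^{\star}, \widetilde\by - \by^{\star} \rangle.
\end{align*}
Rearranging, this is equivalent to
\begin{align*}
a^2 \leq b^2 + 2 \langle \by_{\widehat\lambda} - \by_{\lambda}, \widetilde\by - \by^{\star} \rangle.
\end{align*}

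Next I would control the inner product using the definition of $\overhead$ together with a triangle inequality. By construction of $\overhead$,
\begin{align*}
\langle \by_{\widehat\lambda} - \by_{\lambda}, \widetilde\by - \by^{\star} \rangle \leq \overhead \cdot \| \by_{\widehat\lambda} - \by_{\lambda} \|_2 \leq \overhead \, (a + b).
\end{align*}
Substituting into the previous display yields $a^2 - 2 \overhead a \leq b^2 + 2 \overhead b$. Adding $\overhead^2$ to both sides and completing the square gives $(a - \overhead)^2 \leq (b + \overhead)^2$; since $a, b, \overhead \geq 0$, this implies the crisp bound $a \leq b + 2 \overhead$.

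Squaring and applying Young's inequality $4 \overhead b \leq \gamma b^2 + 4 \overhead^2 / \gamma$ for any $\gamma > 0$ then gives
\begin{align*}
a^2 \leq (b + 2 \overhead)^2 = b^2 + 4 \overhead b + 4 \overhead^2 \leq (1 + \gamma) b^2 + 4 (1 + \gamma^{-1}) \overhead^2.
\end{align*}
Taking the infimum over $\lambda \in \Lambda$ and $\gamma > 0$ finishes the proof. The only mildly delicate point is verifying that $\overhead \geq 0$ so that the completion-of-squares step is sign-safe; this is immediate because swapping the roles of $\lambda$ and $\lambda'$ in the supremum defining $\overhead$ flips the sign of the functional, and the convention $\bm{0}/0 = \bm{0}$ handles the degenerate case when all $\by_{\lambda}$ coincide. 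The rest is purely algebraic and should present no obstacle.
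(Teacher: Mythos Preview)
Your proof is correct and follows essentially the same route as the paper's own argument: expand the optimality condition around $\by^{\star}$, bound the cross term by $\overhead$ times $\|\by_{\widehat\lambda}-\by_{\lambda}\|_2$ and then by $\overhead(a+b)$ via the triangle inequality, complete the square to get $a \le b + 2\overhead$, and finish with Young's inequality. The only cosmetic difference is that the paper carries an auxiliary constant $c$ and sets $\gamma = 2c$ at the end, whereas you parameterize directly by $\gamma$.
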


\begin{proof}[\bf Proof of \Cref{lem-selection-regret}]
	Choose any $\lambda \in \Lambda$. By direct calculation, we have
	\begin{align*}
	&\| \by_{\widehat\lambda} - \by^{\star} \|_2^2 - \| \by_{{\lambda} } - \by^{\star} \|_2^2 = \langle \by_{\widehat\lambda} - \by_{{\lambda} } , \by_{\widehat\lambda} + \by_{{\lambda} } - 2 \by^{\star} \rangle  .
	\end{align*}
	By the assumption $\widehat\lambda \in \argmin_{ \lambda \in \Lambda } \| \by_{{\lambda} } -  \widetilde\by  \|_2^2 $ and the equality above,
	\begin{align*}
	0 & \geq \| \by_{\widehat\lambda}  -  \widetilde\by  \|_2^2 - \| \by_{{\lambda} } -  \widetilde\by  \|_2^2 = \langle \by_{\widehat\lambda}  - \by_{{\lambda} } , \by_{\widehat\lambda} + \by_{{\lambda} } - 2  \widetilde\by  \rangle \\
	& = \| \by_{\widehat\lambda} - \by^{\star} \|_2^2 - \| \by_{{\lambda} } - \by^{\star} \|_2^2 - 2 \langle \by_{\widehat\lambda}  - \by_{{\lambda} } ,  \widetilde\by  - \by^{\star} \rangle .
	\end{align*}
	Hence,
	\begin{align*}
	\| \by_{\widehat\lambda} - \by^{\star} \|_2^2 & \leq \| \by_{{\lambda} } - \by^{\star} \|_2^2 +  2 \langle \by_{\widehat\lambda}  - \by_{{\lambda} } ,  \widetilde\by  - \by^{\star} \rangle \\
	&= \| \by_{{\lambda} } - \by^{\star} \|_2^2 +  2 \| \by_{\widehat\lambda}  - \by_{{\lambda} } \|_2 \bigg\langle \frac{ \by_{\widehat\lambda}  - \by_{{\lambda} } }{ \| \by_{\widehat\lambda}  - \by_{{\lambda} } \|_2 } ,  \widetilde\by  - \by^{\star} \bigg\rangle \\
	& \leq  \| \by_{{\lambda} } - \by^{\star} \|_2^2 +  2 \| \by_{\widehat\lambda}  - \by_{{\lambda} } \|_2 \overhead \\
	& \leq \| \by_{{\lambda} } - \by^{\star} \|_2^2 +  2 (  \| \by_{\widehat\lambda}  - \by^{\star}   \|_2
	+ \| \by^{\star} - \by_{{\lambda}}  \|_2
	) \overhead ,
	\end{align*}
	where the last inequality follows from the fact $\overhead \geq 0$ and the triangle's inequality. Rearranging the terms, we get
	\begin{align*}
	& \| \by_{\widehat\lambda} - \by^{\star} \|_2^2 - 2  \| \by_{\widehat\lambda}  - \by^{\star}   \|_2 \overhead  \leq \| \by_{{\lambda} } - \by^{\star} \|_2^2 +  2  \| \by^{\star} - \by_{{\lambda}}  \|_2 \overhead 
	\end{align*}
	and thus $ ( \| \by_{\widehat\lambda} - \by^{\star} \|_2 - \overhead )^2 \leq ( \| \by_{{\lambda} } - \by^{\star} \|_2 + \overhead )^2$. As a result,
	\begin{align*}
	& \| \by_{\widehat\lambda} - \by^{\star} \|_2  \leq \| \by_{{\lambda} } - \by^{\star} \|_2 + 2 \overhead , \qquad \forall \lambda \in \Lambda.
	\end{align*}
\end{proof}

To prove \Cref{thm-bias-variance}, we define $\by^{\star} = ( g^{\star} (z_1) , \cdots, g^{\star} (z_n) )^{\top}$ and
\[
\by_j = ( g_j (z_1) , \cdots, g_j (z_n) )^{\top}, \qquad \forall j \in [m].
\]
\Cref{lem-selection-regret} leads to
\begin{align*}
& \cL_n ( g_{\widehat{j}}  )  \leq \min_{ j \in [m] }   \cL_n ( g_j )  + 2  \overhead / \sqrt{n},
\end{align*}
where
\[
\overhead =  \max_{j, j' \in [m]} \bigg\langle
\frac{
	\by_{j} - \by_{j'} 
}{
	\| \by_{j} - \by_{j'}  \|_2
}
,  \widetilde\by  - \by^{\star}  \bigg\rangle  ,
\]
with the convention $\bm{0} / 0 = \bm{0}$. It is easily seen that
\begin{align*}
& 0 \leq \overhead 
\leq \| \EE \widetilde\by -  \by^{\star}  \|_2 + 
\max_{j, j' \in [m]} \bigg\langle
\frac{
	\by_{j} - \by_{j'} 
}{
	\| \by_{j} - \by_{j'}  \|_2
}
, \widetilde\by  -  \EE \widetilde\by  \bigg\rangle .
\end{align*}
On the one hand, $n^{-1/2} \| \EE \widetilde\by -  \by^{\star}  \|_2 = \cL_n ( \EE \widetilde{g} )$.
On the other hand,
\[
\bigg\|
\bigg\langle
\frac{
	\by_{j} - \by_{j'} 
}{
	\| \by_{j} - \by_{j'}  \|_2
}
, \widetilde\by  -  \EE \widetilde\by  \bigg\rangle
\bigg\|_{\psi_2} \leq V ,\qquad \forall j , j' \in [m] .
\]
The sub-Gaussian concentration and union bounds imply that with probability $1 - \delta$,
\[
\max_{j, j' \in [m]} 
\bigg|
\bigg\langle
\frac{
	\by_{j} - \by_{j'} 
}{
	\| \by_{j} - \by_{j'}  \|_2
}
, \widetilde\by  -  \EE \widetilde\by  \bigg\rangle
\bigg| \leq c_1 V \sqrt{ \log (m / \delta) } ,
\]
where $c_1$ is a universal constant. We get the desired concentration inequality in \Cref{thm-bias-variance} by combing all the above estimates and taking $c = 2 c_1$. To prove the bound on the expectation, fix any $\gamma > 0$ and define
\[
W = 
\bigg[
\cL_n ( g_{\widehat{j}}  )
- \bigg(
 \min_{ j \in [m]} \cL_n ( g_j )
+   2
\cL_n ( \EE \widetilde{g} ) 
\bigg)
\bigg]_+
.
\]
Since $(a+b)^2 = a^2 + 2ab + b^2 \leq (1 + \gamma) a^2 + (1 + \gamma^{-1}) b^2$ holds for any $a, b \geq 0$ and $\gamma > 0$. We have
\begin{align*}
 \cL_n^2 ( g_{\widehat{j}}  )
	& \leq   \bigg( \min_{ j \in [m]} \cL_n ( g_j ) 
	+   2
	\cL_n ( \EE \widetilde{g} ) 
	+ W
	\bigg)^2 \\
& \leq (1 + \gamma) \min_{ j \in [m]} \cL_n^2 ( g_j ) 
+ (1 + \gamma^{-1}) [2
\cL_n ( \EE \widetilde{g} ) 
+ W]^2 \\
& \leq (1 + \gamma) \min_{ j \in [m]} \cL_n^2 ( g_j ) 
+ (1 + \gamma^{-1})[
2 \cdot 4 \cL_n^2 ( \EE \widetilde{g} ) 
+ 2 \cdot W^2
].
\end{align*}
Hence,
\begin{align}
\EE \cL_n^2 ( g_{\widehat{j}}  )
& \leq  (1 + \gamma) \min_{ j \in [m]} \cL_n^2 ( g_j ) 
+ 2 (1 + \gamma^{-1}) [
4 \cL_n^2 ( \EE \widetilde{g} ) 
+  \EE W^2
].
\label{eqn-thm-bias-variance-1}
\end{align}

To bound $\EE W^2$, note that $W \geq 0$ and there exists a constant $c$ such that
\[
\PP \bigg(
W > c V \sqrt{  \frac{ 
 \log ( m / \delta )
 }{n}
}
\bigg) \leq \delta , \qquad \forall \delta \in (0, 1].
\]
Let $t = \log ( m / \delta )$. Then, $\delta = m e^{-t}$. We have
\[
\PP \bigg(
W^2 >   \frac{ 
	c^2 V^2 }{n} t
\bigg) \leq m e^{-t} , \qquad \forall t \geq \log m .
\]
As a result, we get
\begin{align*}
& \bigg(
\frac{	c^2 V^2 }{n} 
\bigg)^{-1} \EE W^2 
 = \int_{0}^{\infty} \PP \bigg(
W^2 >   \frac{ 
c^2 V^2 }{n} t
\bigg) \rd t
 \leq  \int_{0}^{\log m} 1 \rd t +  \int_{\log m}^{\infty} m e^{-t} \rd t =
\log m + 1.
\end{align*}
Plugging this into \eqref{eqn-thm-bias-variance-1} yields
\begin{align*}
	\EE \cL_n^2 ( g_{\widehat{j}}  )
	& \leq (1+\gamma)  \min_{ j \in [m]} \cL_n^2 ( g_j ) + 2 (1 + \gamma^{-1}) 
\bigg(
	4  \cL_n^2 ( \EE \widetilde{g} ) +
\frac{c^2 V^2 ( 1 + \log m )}{n}
\bigg)
	.
\end{align*}
The proof is finished by taking the infimum over $\gamma > 0$.

\section{Technical lemmas}

\begin{lemma}\label{lem-subg-norm}
	Suppose that $\bx \in \RR^d$ is a zero-mean random vector with $\| \bx \|_{\psi_2} \leq 1$. There exists a universal constant $C > 0$ such that for any symmetric and positive semi-definite matrix $\bSigma \in \RR^{d\times d}$, 
	\begin{align*}
	\PP\Big(
	\bx^{\top} \bSigma \bx  \leq C 
	\Tr(\bSigma) t
	\Big) \geq 1 - e^{- r(\bSigma) t }, \qquad \forall t \geq 1.
	\end{align*}
	Here $r(\bSigma) = \Tr(\bSigma) / \| \bSigma \|_2$ is the effective rank of $\bSigma$.
\end{lemma}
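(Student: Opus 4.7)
\bigskip

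\noindent\textbf{Proof proposal for \Cref{lem-subg-norm}.}

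The plan is to reduce the statement to a standard Hanson--Wright-type concentration inequality for sub-Gaussian quadratic forms. Write $\bx^{\top}\bSigma\bx = \|\bSigma^{1/2}\bx\|_2^2$ and note that since $\|\bx\|_{\psi_2} \leq 1$, we have $\EE \langle \bu,\bx\rangle^2 \lesssim 1$ for every unit vector $\bu$, so the covariance of $\bx$ is bounded in operator norm by a universal constant. This gives the mean bound
\[
\EE[\bx^{\top}\bSigma\bx] \;\leq\; C_0 \Tr(\bSigma)
\]
for some universal $C_0$. Consequently it suffices to control the deviation of $\bx^{\top}\bSigma\bx$ above its mean.

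The Hanson--Wright inequality (see e.g.\ Rudelson--Vershynin) states that for a zero-mean sub-Gaussian vector with $\|\bx\|_{\psi_2}\leq 1$ and any symmetric matrix $\bA$,
\[
\PP\bigl(\,\bx^{\top}\bA\bx - \EE \bx^{\top}\bA\bx > s\,\bigr) \;\leq\; \exp\!\Bigl(-c\min\!\bigl\{s^2/\|\bA\|_F^2,\; s/\|\bA\|_2\bigr\}\Bigr).
\]
Apply this with $\bA = \bSigma$. Using $\|\bSigma\|_F^2 = \Tr(\bSigma^2) \leq \|\bSigma\|_2 \Tr(\bSigma)$, I would choose $s = C\Tr(\bSigma) t - \EE \bx^{\top}\bSigma\bx$ with $C$ large enough (depending on $C_0$) so that for $t\geq 1$ we have $s \geq (C - C_0)\Tr(\bSigma) t \gtrsim \Tr(\bSigma) t$. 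Then
\[
\frac{s^2}{\|\bSigma\|_F^2} \;\gtrsim\; \frac{\Tr(\bSigma)^2 t^2}{\|\bSigma\|_2\,\Tr(\bSigma)} \;=\; r(\bSigma)\, t^2, \qquad \frac{s}{\|\bSigma\|_2} \;\gtrsim\; \frac{\Tr(\bSigma) t}{\|\bSigma\|_2} \;=\; r(\bSigma)\, t.
\]
Since $t \geq 1$, the minimum of the two is at least a universal constant times $r(\bSigma) t$, which yields
\[
\PP\bigl(\,\bx^{\top}\bSigma\bx > C\Tr(\bSigma) t\,\bigr) \;\leq\; \exp\bigl(-c' r(\bSigma) t\bigr).
\]
Enlarging $C$ one more time to absorb the constant $c'$ gives the claimed bound.

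There is no significant obstacle here; the one delicate point is calibrating $C$ so that $s$ is simultaneously large enough relative to both $\|\bSigma\|_F^2$ and $\|\bSigma\|_2$, and so that the constant inside the exponent can be driven to $1$. This is routine provided we allow $C$ to be chosen after the Hanson--Wright constants are fixed. The use of $t\geq 1$ is essential to ensure that the quadratic branch $s^2/\|\bSigma\|_F^2$ dominates the linear branch $s/\|\bSigma\|_2$ up to the factor $r(\bSigma) t$, so that a single exponential rate in $r(\bSigma) t$ suffices.
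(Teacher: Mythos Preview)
The paper's own proof is a one-line citation to Lemma~J.4 of \cite{DDW21}, so your explicit derivation via Hanson--Wright is more self-contained and the calibration of constants (bounding the mean by $C_0\Tr(\bSigma)$, then pushing $C$ large enough that $c(C-C_0)\geq 1$ so the exponential rate becomes exactly $r(\bSigma)t$) is carried out correctly.

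There is one genuine gap, however. The Rudelson--Vershynin form of Hanson--Wright that you invoke requires the coordinates of $\bx$ to be \emph{independent}; the lemma, as stated, only assumes $\|\bx\|_{\psi_2}\leq 1$ in the paper's sense, i.e.\ a uniform bound on one-dimensional marginals, with no independence. The two-sided Hanson--Wright inequality is known to fail under that weaker hypothesis, so the step ``apply Hanson--Wright'' is not justified as written. Two easy repairs are available. First, you can simply note that every use of this lemma in the paper is with $\bx=\bvarepsilon_1$, whose entries are (conditionally on the design) independent sub-Gaussian, so the Rudelson--Vershynin version applies verbatim and nothing else changes. Second, if you want the lemma in the stated generality, insert a Gaussian linearization: for $\bg\sim N(\bm{0},\bI)$ independent of $\bx$, the identity $\EE_{\bg}\exp\bigl(\sqrt{2\lambda}\,\langle \bSigma^{1/2}\bg,\bx\rangle\bigr)=\exp(\lambda\,\bx^{\top}\bSigma\bx)$ together with the sub-Gaussian MGF bound on $\langle \bSigma^{1/2}\bg,\bx\rangle$ (conditionally on $\bg$) gives $\EE\exp(\lambda\,\bx^{\top}\bSigma\bx)\leq \EE_{\bg}\exp(c_1\lambda\,\bg^{\top}\bSigma\bg)$, and now the right-hand side is a Gaussian quadratic form to which Hanson--Wright (or direct MGF computation) applies. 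Either fix closes the gap without affecting the rest of your argument.
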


\begin{proof}[\bf Proof of \Cref{lem-subg-norm}]
This is a direct corollary of Lemma J.4 in \cite{DDW21}.
\end{proof}


\begin{lemma}\label{lem-cov}
Let $\{ \bx_i \}_{i=1}^n$ be i.i.d.~random elements in a separable Hilbert space $\HH$ with $\bSigma = \EE ( \bx_i \otimes \bx_i )$ being trace class. Define $\widehat{\bSigma} = \frac{1}{n} \sum_{i=1}^n  \bx_i \otimes \bx_i $.
\begin{enumerate}
\item If $\|   \bx_i \|_{\HH} \leq M$ holds almost surely for some constant $M$, then for any $v^2 \geq   \| \bSigma \|$, $r \geq \Tr (\bSigma) / v^2$ and $\delta \in (0, r]$,
\[
\PP \bigg(
\| \widehat\bSigma - \bSigma  \|
\leq  
\max \bigg\{
2 v \sqrt{ \frac{ M^2\log (14 r / \delta ) }{n} }
,~ \frac{8M^2 \log (14 r / \delta )}{3n}
\bigg\}
\bigg)
\geq 1 - \delta .
\]
\item If $\| \langle   \bx_{i} , \bv \rangle \|_{\psi_2}^2 \leq \kappa \EE | \langle  \bx_{i} , \bv \rangle |^2 $ holds for all $\bv \in \HH$ and some constant $\kappa$, then there exists a constant $C$ determined by $\kappa$ such that
\begin{align*}
\PP \bigg(
\| \widehat\bSigma  - \bSigma \| \leq C \| \bSigma \| \max\bigg\{
\sqrt{ \frac{ r  }{ n } } ,~ \frac{ r  }{n},~ \sqrt{ \frac{ \log(1/\delta) }{n} } ,~ \frac{ \log(1/\delta) }{n}
\bigg\}
\bigg) \geq 1 - \delta 
\end{align*}
holds for all $\delta \in (0, 1/e]$. Here $r = \Tr(\bSigma) / \| \bSigma \|_2$ is the effective rank of $\bSigma$.
\end{enumerate}
\end{lemma}

\begin{proof}[\bf Proof of \Cref{lem-cov}]
The second part directly follows from Theorem 9 in \cite{KLo17}.
To prove the first part, we invoke a Bernstein inequality for self-adjoint operators. The Euclidean case is an intermediate result in the proof of Theorem 3.1 in \cite{Min17}. Section 3.2 of the same paper extends the result to the Hilbert setting.

\begin{lemma}\label{lem-matrix-bernstein}
Let $\HH$ be a separable Hilbert space, and $\{ \bX_i \}_{i=1}^n$ be independent self-adjoint, Hilbert-Schmidt operators satisfying $\EE \bX_i = \bm{0}$ and $\| \sum_{i=1}^n \EE \bX_i^2 \|_2 \leq \sigma^2$. Assume that $\| \bX_i \|_{\HH} \leq U$ holds almost surely for all $i$ and some $U > 0$. If $t^2 \geq \sigma^2 + U t / 3$, then
\begin{align*}
\PP \bigg( \bigg\| \sum_{i=1}^n \bX_i \bigg\| \geq t  \bigg) \leq  \frac{ 14 \Tr ( \sum_{i=1}^n \EE \bX_i^2 ) }{ \sigma^2 } \exp \bigg(
- \frac{ t^2 / 2 }{ \sigma^2 + U t / 3 }
\bigg) .
\end{align*}
\end{lemma}

Let $\bX_i =  \bx_i \otimes  \bx_i - \bSigma$. Under Assumptions \ref{assumption-linear-model-rkhs} and \ref{assumption-covariates-bounded},
\begin{align*}
& \| \bX_i \| \leq \| \bx_i \otimes  \bx_i   \| + \| \EE (\bx_i \otimes  \bx_i ) \| \leq 2  M^2 , \\
& \EE \bX_i^2 = \EE (\| \bx_i \|_{\HH}^2 \bx_i \otimes  \bx_i) - \bSigma^2
\preceq \EE (\| \bx_i \|_{\HH}^2 \bx_i \otimes  \bx_i) \preceq M^2 \bSigma \preceq M^2 v^2 \bI.
\end{align*}
Choose any ${r} \geq \Tr (\bSigma) / v^2$ and $\delta \in (0, r]$. Let $U = 2 M^2$, $\sigma^2 = n M^2 v^2$ and
\[
t = \max \bigg\{
2 v \sqrt{ n M^2\log (14 r / \delta ) }
,~ \frac{8M^2 \log (14 r / \delta )}{3}
\bigg\}.
\]
It is easily seen that $t \geq \max \{ \sqrt{2} \sigma , 2 U / 3 \}$ and thus $t^2 \geq \sigma^2 + U t / 3$. \Cref{lem-matrix-bernstein} implies that
\begin{align*}
& \PP  ( n \| \widehat\bSigma - \bSigma  \| \geq t  )  \leq 
\frac{ 14 n M^2 \Tr (\bSigma) }{ \sigma^2 } \exp \bigg(
- \frac{ t^2 / 2 }{ n M^2 v^2 + 2 M^2 t / 3 }
\bigg) \\
& \leq
\frac{ 14 n M^2 \Tr (\bSigma) }{ n M^2 v^2 } \exp \bigg(
- \frac{ t^2 / 2 }{ n M^2 v^2 + 2 M^2 t / 3 }
\bigg) \leq
14 {r} \exp \bigg(
- \frac{ t^2 / 2 }{ n M^2 v^2 + 2 M^2 t / 3 }
\bigg) \\
& \leq 
14 {r} \exp \bigg(
- \frac{ t^2 / 2 }{ 2 \max \{ n M^2 v^2 , 2 M^2 t / 3 \} }
\bigg) \\
& = 14 {r} \exp \bigg(
- \min \bigg\{
\frac{ t^2 }{ 4 n M^2 v^2  }
,
\frac{ t }{ 8 M^2 / 3  }
\bigg\}
\bigg)
= \delta.
\end{align*}
\end{proof}

\begin{corollary}\label{cor-cov-rkhs}
	Let $\{ \bx_i \}_{i=1}^n$ be i.i.d.~random elements in a separable Hilbert space $\HH$ with $\bSigma = \EE ( \bx_i \otimes \bx_i )$ being trace class. Define $\widehat{\bSigma} = \frac{1}{n} \sum_{i=1}^n  \bx_i \otimes \bx_i $.
	Choose any constant $\gamma \in (0, 1)$ and define an event $\cA = \{  (1 - \gamma) ( \bSigma + \lambda \bI) \preceq  \widehat\bSigma + \lambda \bI \preceq (1 + \gamma) ( \bSigma + \lambda \bI)  \}$. 
	\begin{enumerate}
		\item If $\|   \bx_i \|_{\HH} \leq M$ almost surely, $\delta \in (0, 1/e]$, $\gamma \in (0, 1]$ and $\lambda \geq \frac{ 4 M^2 \log (14 n / \delta)  }{ \gamma^2 n } $, then $\PP ( \cA ) \geq 1 - \delta$.
		\item If $\| \langle   \bx_{i} , \bv \rangle \|_{\psi_2}^2 \leq \kappa \EE | \langle  \bx_{i} , \bv \rangle |^2 $ holds for all $\bv \in \HH$ and some constant $\kappa$, then there exists a constant $C $ determined by $\kappa$ such that $\PP ( \cA ) \geq 1 - \delta$ holds so long as $\delta \in (0, 1 / e]$ and $\lambda \geq \frac{  C \Tr (\bSigma) \log (1 / \delta)   }{ \gamma^2 n }  $.
	\end{enumerate}
\end{corollary}

\begin{proof}[\bf Proof of \Cref{cor-cov-rkhs}]
We will apply \Cref{lem-cov} to $\widetilde\bx_{i} = (\bSigma + \lambda \bI)^{-1/2} \bx_{i}$ and $\widetilde\bSigma =  (\bSigma + \lambda \bI)^{-1} \bSigma$. 

Suppose that $\|   \bx_i \|_{\HH} \leq M$ holds almost surely for some constant $M$.
We have $\| \widetilde\bx_{i} \|_{\HH} \leq M / \sqrt{\lambda}$, $\Tr (\widetilde\bSigma) \leq \Tr (\bSigma) / \lambda \leq M^2 / \lambda$ and $\| \widetilde \bSigma \| \leq 1$. 
Hence, the $M$ and $v^2$ in \Cref{lem-cov} can be set to be $M / \sqrt{\lambda}$ and $1$, respectively. 
Choose any $ \delta \in (0, 1/e]$. For any $\lambda \geq \frac{ M^2 \log(n/\delta) }{n}$, we have
\[
\Tr(\widetilde\bSigma) / v^2 \leq M^2 / \lambda \leq \frac{n}{ \log(n/\delta) } \leq n .
\]
Taking $r = n$ in \Cref{lem-cov}, we see that with probability at least $1 - \delta$,
\begin{align*}
\| (\bSigma + \lambda \bI )^{-1/2} ( \widehat\bSigma - \bSigma ) (\bSigma + \lambda \bI )^{-1/2}  \|
& \leq  
\max \bigg\{
2 \sqrt{ \frac{ M^2\log (14 n / \delta ) }{n\lambda} }
,~ \frac{8M^2 \log (14 n / \delta )}{3n\lambda}
\bigg\} .
\end{align*}

Choose any $\gamma \in (0, 1)$. When $\lambda \geq \frac{ 4 M^2\log (14 n / \delta ) }{\gamma^2 n} $, we have
\begin{align*}
& \PP \bigg(
\| (\bSigma + \lambda \bI )^{-1/2} ( \widehat\bSigma - \bSigma ) (\bSigma + \lambda \bI )^{-1/2}  \|
\leq  \gamma  
\bigg) \geq 1 - \delta .
\end{align*}
This proves the first part of \Cref{cor-cov-rkhs}.

Now, suppose that $\| \langle   \bx_{i} , \bv \rangle \|_{\psi_2}^2 \leq \kappa \EE | \langle  \bx_{i} , \bv \rangle |^2 $ holds for all $\bv \in \HH$ and some constant $\kappa$. Then, the same property holds for $\widetilde\bx_{i} $. The second part of \Cref{lem-cov} implies that for any $\delta \in (0, 1/e]$,
\begin{align*}
& \PP \bigg(
\| (\bSigma + \lambda \bI )^{-1/2} ( \widehat\bSigma - \bSigma ) (\bSigma + \lambda \bI )^{-1/2}  \|
\\
&~~~~
\leq C \| \widetilde\bSigma \| \max\bigg\{
\sqrt{ \frac{ \widetilde{r}  }{ n } } ,~ \frac{ \widetilde{r} }{n},~ \sqrt{ \frac{ \log(1/\delta) }{n} } ,~ \frac{ \log(1/\delta) }{n}
\bigg\}
\bigg) \geq 1 - \delta .
\end{align*}
Here $\widetilde{r} = \Tr(\widetilde\bSigma) / \| \widetilde\bSigma \|_2$.
Note that $\| \widetilde\bSigma \| \leq 1$ and $\| \widetilde\bSigma \| \leq \Tr (\widetilde\bSigma) \leq \Tr (\bSigma) / \lambda$. We have
\begin{align*}
& \| \widetilde\bSigma \|  \sqrt{ \frac{ \widetilde{r} }{n} } = \sqrt{ \Tr (\widetilde\bSigma) \| \widetilde\bSigma \| / n } \leq \sqrt{ \Tr (\widetilde\bSigma)  / n } \leq  \sqrt{ \frac{\Tr(\bSigma)}{n \lambda } } , \\
& \| \widetilde\bSigma \| \frac{ \widetilde{r} }{n} =   \Tr (\widetilde\bSigma) / n \leq \frac{\Tr(\bSigma)}{n \lambda } , \\
& \| \widetilde\bSigma \|  \sqrt{ \frac{ \log (1/\delta) }{n} } =  \sqrt{ \frac{ \| \widetilde\bSigma \|^2 \log (1/\delta) }{n} }  \leq \sqrt{ \frac{ \Tr ( \bSigma ) \log (1/\delta) }{n \lambda } } , \\
&  \| \widetilde\bSigma \|  \frac{ \log (1/\delta) }{n}  \leq  \frac{ \Tr ( \bSigma ) \log (1/\delta) }{n \lambda } 
\end{align*}
Choose any $\delta \in (0, 1/e]$ and $\gamma \in (0, 1]$. When $\lambda \geq  \frac{ C^2 \Tr (\bSigma) \log (1/\delta) }{ \gamma^2 n } $, we have
\begin{align*}
\| \widetilde\bSigma \| \max\bigg\{
\sqrt{ \frac{ \widetilde{r}  }{ n } } ,~ \frac{ \widetilde{r} }{n},~ \sqrt{ \frac{ \log(1/\delta) }{n} } ,~ \frac{ \log(1/\delta) }{n}
\bigg\} \leq \sqrt{ \frac{ \Tr ( \bSigma ) \log (1/\delta) }{n \lambda } } \leq \gamma / C 
\end{align*}
and thus,
\begin{align*}
\PP \bigg(
\| (\bSigma + \lambda \bI )^{-1/2} ( \widehat\bSigma - \bSigma ) (\bSigma + \lambda \bI )^{-1/2}  \|
\leq \gamma 
\bigg) \geq 1 - \delta .
\end{align*}
The proof is completed by redefining $C$.
\end{proof}

The following lemmas are crucial for connecting the empirical (in-sample) and the population (out-of-sample) squared errors.

\begin{lemma}\label{lem-one-sided-Bernstein}
	Let $\{ X_i \}_{i=1}^n$ be independent random variables. If $X_i \leq b$ almost surely and $\EE X_i^2 \leq v^2$, then
	\begin{align*}
	\PP \bigg(
	\frac{1}{n} \sum_{i=1}^n (X_i - \EE X_i) \leq 
	\max \bigg\{
	2  \sqrt{ \frac{ v^2 \log (1/\delta) }{ n } } ,~  \frac{ 4 b \log (1/\delta) }{ 3 n }
	\bigg\}
	\bigg) \geq 1 - \delta , \qquad \forall \delta \in (0, 1).
	\end{align*}
	Consequently, if $X_i \geq 0$ almost surely and $\EE X_i^2 \leq v^2$, then
	\begin{align*}
	\PP \bigg(
	\frac{1}{n} \sum_{i=1}^n (X_i - \EE X_i) \geq - 2  \sqrt{ \frac{ v^2 \log (1/\delta) }{ n } }
	\bigg) \geq 1 - \delta , \qquad \forall \delta \in (0, 1).
	\end{align*}
\end{lemma}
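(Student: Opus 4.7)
The plan is to derive Lemma B.1 as a direct consequence of the classical one-sided Bernstein inequality for bounded independent random variables, namely
\[
\PP\!\left( \sum_{i=1}^n (X_i - \EE X_i) \geq t \right) \leq \exp\!\left( - \frac{ t^2 / 2 }{ \sum_{i=1}^n \EE X_i^2 + b t / 3 } \right), \qquad \forall t > 0,
\]
which holds whenever $X_i \leq b$ almost surely. Under the hypotheses $X_i \leq b$ and $\EE X_i^2 \leq v^2$, the denominator is bounded by $n v^2 + b t / 3$. Setting the right-hand side equal to $\delta$ and solving the resulting quadratic in $t$ gives the stated bound.

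To carry out the inversion cleanly, I would split into the two regimes that linearize the denominator. In the variance-dominated regime where $n v^2 \geq b t / 3$, the exponent is at most $- t^2 / (4 n v^2)$, so choosing $t \geq 2\sqrt{ n v^2 \log(1/\delta) }$ suffices. In the range-dominated regime where $n v^2 < b t / 3$, the exponent is at most $- 3 t / (4 b)$, so choosing $t \geq (4 b / 3) \log(1/\delta)$ suffices. Taking $t$ to be the maximum of the two thresholds handles both regimes simultaneously, and dividing by $n$ yields the first inequality of the lemma.

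For the second statement, I would apply the first part to the random variables $Y_i = - X_i$. The nonnegativity $X_i \geq 0$ gives $Y_i \leq 0$, so the first part applies with $b = 0$, and $\EE Y_i^2 = \EE X_i^2 \leq v^2$. With $b = 0$, the second term in the $\max$ vanishes, producing
\[
\PP\!\left( \frac{1}{n} \sum_{i=1}^n ( Y_i - \EE Y_i ) \leq 2 \sqrt{ \frac{ v^2 \log(1/\delta) }{ n } } \right) \geq 1 - \delta,
\]
which after substituting $Y_i = - X_i$ and rearranging is exactly the claimed lower-tail bound.

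No real obstacle is expected; this is a textbook application, and the only mild care needed is in the quadratic-inversion step, where using the ``take the max of the two regimes'' shortcut avoids an unpleasant closed-form solution. The factors $2$ and $4/3$ in the statement match the standard peeling constants from this argument, so no tightening is required.
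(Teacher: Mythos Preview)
Your proposal is correct and follows essentially the same approach as the paper: invoke the one-sided Bernstein inequality (the paper cites Proposition~2.14 in \cite{Wai19}), replace the denominator $v^2 + bt/3$ by $2\max\{v^2, bt/3\}$ to split into the two regimes, and choose $t$ as the maximum of the two resulting thresholds; the second part is likewise obtained by applying the first to $-X_i$ with $b=0$.
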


\begin{proof}[\bf Proof of \Cref{lem-one-sided-Bernstein}]
	Suppose that $X_i \leq b$ almost surely and $\EE X_i^2 \leq v^2$. Proposition 2.14 in \cite{Wai19} implies that for any $t \in \RR$,
	\begin{align*}
	& \PP \bigg(
	\frac{1}{n} \sum_{i=1}^n (X_i - \EE X_i) \geq t
	\bigg)  \leq \exp \bigg(
	- \frac{n t^2 / 2}{ v^2 + b t / 3 } 
	\bigg)   \leq \exp \bigg(
	- \frac{n t^2 / 2}{ 2 \max \{  v^2 ,~ b t / 3 \} } 
	\bigg) \\
	& = \max \bigg\{ 
	\exp \bigg(
	- \frac{n t^2}{ 4  v^2 } 
	\bigg) ,~
	\exp \bigg(
	- \frac{n t^2}{ 4 b t / 3 } 
	\bigg)
	\bigg\} 
	=  \max \bigg\{ 
	\exp \bigg(
	- \frac{n t^2}{ 4  v^2 } 
	\bigg) ,~
	\exp \bigg(
	- \frac{3 n t }{ 4 b  } 
	\bigg)
	\bigg\} .
	\end{align*}
	For $\delta \in (0, 1)$ and $t = \max  \{
	2 v \sqrt{ \frac{ \log (1/\delta) }{ n } } ,  \frac{ 4 b \log (1/\delta) }{ 3 n }
	\}$, we have $\frac{n t^2}{ 4  v^2 } \geq \log (1/\delta)$ and $\frac{3 n t }{ 4 b  }  \geq \log (1/\delta)$. Then, $\PP  (
	\frac{1}{n} \sum_{i=1}^n (X_i - \EE X_i) \geq t
	)  \leq \delta$. This proves the first part of the lemma. The second part can be obtained by applying the first part to $-X_i$ and setting $b = 0$.
\end{proof}

\begin{lemma}\label{lem-L2}
	Let $\{ z_i \}_{i=1}^n$ be i.i.d.~samples in a space $\cZ$. For any $g:~\cZ \to \RR$, define $\| g \|_n = (
	n^{-1} \sum_{i=1}^{n} g^2 (z_i) 
	)^{1/2}$ and $\| g \|_{ L^2 } = \sqrt{ \EE g^2 (z_1) }$. Let $f:~\cZ \to \RR$ be a random function that is independent of $z_i$'s.
	\begin{enumerate}
		\item Suppose that $\PP ( | f(z_1) | > r ) \leq \varepsilon$ and $\EE | f(z_1) |^4 \leq U^4 $ hold for some deterministic $r, U \geq 0$. Then,
		\begin{align*}
		\PP \bigg(
		\Big|
		\| f \|_n - \| f \|_{L^2 }
		\Big|
		\leq 
		r\sqrt{
			\frac{
				7  \log (2 / \delta)
			}{n}
		}
		+ U \varepsilon^{1/4}
		\bigg)
		\geq 1 - n \varepsilon - \delta, \qquad \forall \delta \in (0, 1).
		\end{align*}
		\item Define $h (\cdot) = f(\cdot) / \| f \|_{ L^2 }$ and suppose that conditioned on $f$, $ h(z_i)$ is sub-Gaussian in the sense that
		\[
		\sup_{p \geq 1 } \{ p^{-1/2} \EE^{1/p}  [ |h(z_i)|^p  | f  ] \} \leq K \in [1, +\infty) .
		\]
There exists a constant $C \geq 1$ such that when $\delta \in (0, 1)$ and $n \geq C^2 K^4 \log(2/\delta) $, we have
\begin{align*}
\PP \bigg(
\Big| \| f \|_n^2 -  \| f \|_{L^2}^2 \Big| \leq 
 \| f \|_{L^2}^2 \cdot C K^2 \sqrt{ \frac{ \log (2/\delta) }{ n } }
\bigg)  \geq 1 -  \delta .
\end{align*}
	\end{enumerate}
\end{lemma}

\begin{proof}[\bf Proof of \Cref{lem-L2}]
	We will prove Part 1 by a truncation argument. Define $\bar{f}(z) = f(z) \bm{1}_{ \{ |f(z)| \leq r \} }$, $\forall z \in \cZ$. We have
	\begin{align}
	\| f \|_n - \| f \|_{L^2}
	= (  \| f \|_n - \| \bar{f} \|_n )
	+  (  \| \bar{f} \|_n - \| \bar{f} \|_{L^2} )
	+  ( \| \bar{f} \|_{L^2} - \| f \|_{L^2} ).
	\label{eqn-lem-L2-0}
	\end{align}
	By union bounds, the event $\cap_{i=1}^n \{ \bar{f} (z_i) = f (z_i)  \}$ happens with probability at least $1 - n \varepsilon$. Hence,
	\begin{align}
	\PP (
	\| f \|_n = \| \bar{f} \|_n 
	) \geq 1 - n \varepsilon.
	\label{eqn-lem-L2-1}
	\end{align}
	
	Next, we analyze $ \| \bar{f} \|_n - \| \bar{f} \|_{L^2}$. Choose any deterministic function $g$ that satisfies $|g(z)| \leq r$, $\forall z$. Then, $0 \leq g^2(z_i) \leq r^2$ and $\EE | g^2 (z_i) |^2 \leq r^2  \| g \|_{L^2}^2$. Choose any $\delta \in (0, 1)$. By \Cref{lem-one-sided-Bernstein}, with probability at least $1 - \delta$,
	\begin{align*}
	\bigg| \frac{1}{n} \sum_{i=1}^{n} [ g^2(z_i) - \EE g^2(z_i) ] \bigg|
	\leq 2 \sqrt{ \frac{
			r^2  \| g \|_{L^2}^2 \cdot \log (2 / \delta)
		}{n} }
	+ \frac{
		4 \cdot r^2 \cdot  \log (2 / \delta)
	}{3n}
	.
	\end{align*}
	On that event, we have
	\begin{align}
	& \| g \|_n^2
	\leq \| g \|_{L^2}^2 +
	2  \| g \|_{L^2} \sqrt{ \frac{
			r^2   \log (2 / \delta)
		}{n} }
	+ \frac{
		4  r^2  \log (2 / \delta)
	}{3n} 
	\notag	\\&
	\leq 
	\bigg(
	\| g \|_{L^2} + 
	\sqrt{ \frac{
			4 r^2  \log (2 / \delta)
		}{3 n} }
	\bigg)^2
	\label{eqn-lem-L2-2-1}
	\end{align}
	and
	\begin{align*}
	\| g \|_n^2
	&	\geq \| g \|_{L^2}^2 -
	2  \| g \|_{L^2} \sqrt{ \frac{
			r^2 \log (2 / \delta)
		}{n} }
	- \frac{
		4  r^2  \log (2 / \delta)
	}{3n} 
	\\&
	= \bigg(
	\| g \|_{L^2}  -
	\sqrt{ \frac{
			r^2  \log (2 / \delta)
		}{n} }
	\bigg)^2 - 
	\frac{
		7  r^2  \log (2 / \delta)
	}{3n}.
	\end{align*}
	The latter inequality leads to
	\begin{align}
	\bigg(
	\| g \|_{L^2} -
	\sqrt{ \frac{
			r^2  \log (2 / \delta)
		}{n} }
	\bigg)^2 
	& \leq 	\| g \|_n^2
	+
	\frac{
		7  r^2  \log (2 / \delta)
	}{3n}
	\leq 
	\bigg(
	\| g \|_n + 
	\sqrt{
		\frac{
			7  r^2  \log (2 / \delta)
		}{3n}
	}
	\bigg)^2
	.
	\label{eqn-lem-L2-2-2}
	\end{align}
	From \eqref{eqn-lem-L2-2-1} and \eqref{eqn-lem-L2-2-2} we obtain that
	\begin{align*}
	\Big| 	\| g \|_n - \| g \|_{L^2}
	\Big| \leq 
	(1 + \sqrt{7/3})
	\sqrt{
		\frac{
			r^2  \log (2 / \delta)
		}{n}
	}
	\leq 	r\sqrt{
		\frac{
			7  \log (2 / \delta)
		}{n}
	}
	.
	\end{align*}
	Since $ \bar{f} $ is independent of $z_i$'s and takes values in $[-r, r]$, the above tail bound yields
	\begin{align}
	\PP \bigg(
	\Big| 	\| \bar{f} \|_n - \| \bar{f} \|_{L^2}
	\Big| 
	\leq 	r\sqrt{
		\frac{
			7  \log (2 / \delta)
		}{n}
	}
	\bigg) \geq 1 - \delta
	.
	\label{eqn-lem-L2-2}
	\end{align}
	
	We finally come to $\| \bar{f} \|_{L^2} - \| f \|_{L^2} $. It is easily seen that
	\begin{align}
	\Big|
	\| \bar{f} \|_{L^2} - \| f \|_{L^2} 
	\Big|
	& \leq 
	\| \bar{f} - f \|_{L^2} 
	= \sqrt{
		\EE [ f^2(z_1) \bm{1}_{ \{ 
			|f(z_1)| > r
			\} }
		]
	} 
	\notag \\&
	\leq \bigg(
	\sqrt{ \EE f^4 (z_1) } \cdot 
	\sqrt{
		\PP (
		|f(z_1)| > r
		)
	}
	\bigg)^{1/2}
	\leq U \varepsilon^{1/4}.
	\label{eqn-lem-L2-3}
	\end{align}
	Part 1 follows from \eqref{eqn-lem-L2-0}, \eqref{eqn-lem-L2-1}, \eqref{eqn-lem-L2-2} and \eqref{eqn-lem-L2-3}.

To prove Part 2, note that conditioned on $f$, $\{ h^2 (z_i) \}_{i=1}^n$ are i.i.d., and the $\psi_1$ norm of $h^2(z_i) $ is $O( K^2 )$. By Proposition 5.17 in \cite{Ver10} (a Bernstein-type inequality), 
\begin{align*}
&
\PP \bigg( \Big| \| h \|_n^2 -  \| h \|_{L^2}^2 \Big| \geq t 
~\bigg|~f
\bigg)  
\leq 2 \exp \bigg[ - c n \min \bigg\{
\bigg(
\frac{t}{ K^2 }
\bigg)^2 ,~
\frac{t}{ K^2 }
\bigg\} \bigg].
\end{align*}
Here $c$ is a universal constant. Choose any $\delta \in (0, 1)$ and set
\[
t = K^2  \max\bigg\{ 
\sqrt{ \frac{\log(2/\delta)}{ c n } } , ~  \frac{\log(2/\delta)}{ c n }
\bigg\}.
\]
We get
\[
\PP \bigg( \Big| \| h \|_n^2 -  \| h \|_{L^2}^2 \Big| \geq t 
\bigg)   \geq 1 -  \delta.
\] 
Note that $\| h \|_{L^2} = 1$ and $\| h \|_n^2 = \| f \|_n^2 / \| f \|_{L^2}^2$. Let $C = \max \{ 1 / \sqrt{c} , 1 / c , 1 \}$. Then,
\[
\PP \bigg(
\Big| \| f \|_n^2 -  \| f \|_{L^2}^2 \Big| \leq 
\| f \|_{L^2}^2 \cdot C K^2 \max\bigg\{ 
\sqrt{ \frac{\log(2/\delta)}{ n } } , ~  \frac{\log(2/\delta)}{ n }
\bigg\}
\bigg)  \geq 1 -  \delta.
\]

Let $\gamma =  C K^2  \sqrt{ \frac{\log(2/\delta)}{ n } } $. If $\gamma \leq 1$, then we see from $C \geq 1$ and $K \geq 1$ that 
$\frac{ \log(2/\delta)}{ n } \leq  1$. Hence, $CK^2 \max \{ 
\sqrt{ \frac{\log(2/\delta)}{ n } } , ~  \frac{\log(2/\delta)}{ n }
\}  = \gamma$ and 
\[
\PP \bigg(
\Big| \| f \|_n^2 -  \| f \|_{L^2}^2 \Big| \leq 
\gamma \| f \|_{L^2}^2 
\bigg)  \geq 1 -  \delta.
\]
This proves the desired result.
\end{proof}

{
\bibliographystyle{ims}
\bibliography{bib}
}

\end{document}